\documentclass{article}

\usepackage{amssymb}
\usepackage[a4paper, margin=0.8in]{geometry}

\usepackage{enumerate}
\usepackage{graphicx}%
\usepackage{multirow}%
\usepackage{amsmath,amssymb,amsfonts}%
\usepackage{amsthm}%
\usepackage{mathrsfs}%
\usepackage[title]{appendix}%
\usepackage{xcolor}%
\usepackage{textcomp}%
\usepackage{manyfoot}%
\usepackage{booktabs}%
\usepackage{algorithm}%
\usepackage{algorithmicx}%
\usepackage{algpseudocode}%
\usepackage{listings}%
\usepackage{bbm}
\usepackage{float}
\usepackage{placeins}
\usepackage{authblk}
\usepackage{doi}
\nocite{*}
\usepackage[numbers,sort&compress]{natbib}
\title{Random mixtures in Bayes Hilbert spaces}

\author[1]{Giulia Patan\`e\thanks{Corresponding author: \texttt{giulia.patane@polimi.it}}}
\author[2]{Sonja Greven}
\author[1]{Alessandra Menafoglio}

\affil[1]{MOX, Department of Mathematics, Politecnico di Milano, Via Edoardo Bonardi 9, 20133 Milan, Italy}
\affil[2]{Chair of Statistics, School of Business and Economics, Humboldt-Universit\"at zu Berlin, Spandauer Stra\ss e 1, 10117 Berlin, Germany}

\date{}

\begin{document}

\newtheorem{theorem}{Theorem}
\newtheorem{corollary}[theorem]{Corollary}
\newtheorem{proposition}[theorem]{Proposition}
\newtheorem{lemma}[theorem]{Lemma}
\theoremstyle{definition}
\newtheorem{definition}{Definition}[section]

\maketitle

\begin{abstract}
We present a framework for the analysis and unmixing of random density mixtures in the Bayes Hilbert space $B^2(I)$. General identifiability results for mixtures in Hilbert spaces are established and applied to the Bayes Hilbert space setting. Building on these results, we propose a penalised maximum likelihood approach for the unmixing of Bayes Hilbert mixtures aimed at recovering the statistically space-efficient representation, together with a computationally efficient coordinate-wise maximisation algorithm for its implementation. The methodology is illustrated through a hyperspectral data application, where observations can be naturally embedded in the Bayes Hilbert space and analyzed in terms of distributional shape rather than amplitude. A complementary simulation study demonstrates the interpretability and practical performance of the proposed approach.
\end{abstract}

\noindent\textbf{Keywords:} Mixture models, Compositional data analysis, Convex geometry, Hyperspectral imaging, Functional data analysis

\vspace{1em}








\section{Introduction}
\label{intro}

Probability densities are central to statistics because they describe how values of a variable are distributed across a population. In many modern applications, the unit of analysis is not a single individual but an entire society or complex system. In these cases, the overall distribution can reveal important patterns that simple summaries, like the mean or variance, cannot. For example, two completely different distributions can have the same mean and variance, yet behave in fundamentally different ways. Capturing such differences requires looking at the full density, rather than just a few summary statistics.

In this work, we introduce and explore mixtures of densities, decomposing an observed density into interpretable constituent parts with compositional proportions. We prove the existence and uniqueness of the pair of vertices and proportions that maximise the statistical spread of the data over the space of compositional weights -- i.e. positive weights summing to 1. Moreover, we propose a penalised maximum likelihood unmixing strategy and validate the model through simulations and on a case study -- using data from AVIRIS Indian Pines \citep{PURR1947}.

Working with densities is not straightforward because they are constrained objects: they cannot be negative, and their total mass must always sum to one. Classical approaches in functional data analysis (FDA, \cite{ramsaysilvermanbook}) generally assume that the data can vary freely, which is not true for densities. Applying these methods directly to densities can lead to misleading or invalid results.

To overcome this problem, researchers developed the concept of the \textit{Bayes Hilbert space} \citep{VanDenBoogaart2014}, a mathematical 
framework specifically designed for densities. In this space, densities can be combined and manipulated in ways that respect their natural 
constraints. Indeed, transformations exist that allow traditional functional data analysis techniques to be applied safely, while making 
the study of densities simpler.

Embedding densities in the Bayes Hilbert space has already led to many practical advances. Researchers have used it to handle higher dimensional domains \citep{hron2023, hron25}, to work with non-Lebesgue reference measures \citep{talska20}, to model spatially dependent densities \citep{menafoglio2016}, to analyze situations where only sparse observations are available \citep{steyer2023, eva_sparsedensity2025}, and where densities are the predictors \citep{Jaskova2023} or the responses \citep{eva2025, eva_sparsedensity2025, keilbar2026} of regression models.

Yet, one fundamental problem remains unaddressed within this framework: the unmixing of densities. Mixtures arise naturally in many fields, representing populations composed of multiple subgroups or overlapping latent processes. Consider, for instance, the income distribution of a country composed of distinct socioeconomic groups: the observed density is a mixture of the group-specific densities, and unmixing it reveals the underlying subpopulation structure. Addressing this problem within the Bayes Hilbert space framework, however, is not straightforward. While simplicial PCA \citep{Hron16, steyer2023} offers the closest existing tool for decomposing densities in the Bayes Hilbert (BH) space, it does so along orthogonal directions of variation -- not as a mixture of components with compositional weights. Moreover, such a decomposition cannot be equated with classical probabilistic mixtures, which are linear combinations in the traditional sense and do not respect the geometric structure of the Bayes Hilbert space. This distinction matters whenever the population is heterogeneous and the goal is to recover latent subgroups or overlapping processes. As we discuss in Section~\ref{BHunmixing}, BH mixtures induce a genuinely different notion of mixing, and their interpretation requires care. Indeed, classical mixtures induce non-linear structures in Bayes Hilbert spaces. Conversely, mixtures defined within Bayes Hilbert spaces are given by convex combinations of distributions with respect to the underlying vector space operations, and therefore do not, in general, coincide with classical mixtures. This discrepancy has been partially addressed in the literature on $\alpha$-mixtures, which introduce alternative, generally non-linear, combination rules \citep{Shojaee2021, ShojaeeAsadiFinkelstein2022, AsadiEbrahimiSoofi2019, ShojaeeMomeni2023}. These mixtures have been investigated in several applied areas, including survival analysis and reliability modeling, stochastic ordering and comparison of heterogeneous populations, and Bayesian inference for lifetime models, where they provide flexible interpolation schemes between arithmetic -- for $\alpha=1$ -- and geometric -- $\alpha\to 0^+$ -- aggregation of probability models. However, this line of work does not provide a practical tool for nonparametric estimation of the vertices and of the proportions' distribution. In this work, we focus instead on exploring Bayes Hilbert mixtures as a powerful alternative grounded in the literature on compositional data analysis \citep{VanDenBoogaart2014, hron2023, hron25, talska2018, talska20, steyer2023, eva_sparsedensity2025, Jaskova2023, eva2025, keilbar2026}.

A particularly natural application of this framework arises in hyperspectral imaging, where each pixel records a superposition of pure-material signatures. Hyperspectral data consist of reflectance -- or radiance -- curves measured across a dense grid of wavelengths, forming a high-dimensional functional signature for each pixel of the recorded scene. In this context, it is not the absolute magnitude of reflectance that is most informative, but rather the relative distribution of energy across wavelengths -- that is, the overall shape of the spectral curve. Subtle variations in peaks, slopes, and absorption features reveal the presence of specific materials, surface properties, or ongoing chemical processes \citep{Mazdeyasna2025}. For this reason, hyperspectral reflectance curves are usually analysed after normalisation, encoding the relative distribution of energy across wavelengths rather than absolute intensity, and can thus be treated as densities within the Bayes Hilbert space framework -- making unmixing a problem of recovering compositional proportions over a simplex of endmember densities. More broadly, because densities are positive, compactly supported, and continuous, they can represent any functional object whose key feature is its shape rather than its amplitude, making this framework widely applicable beyond classical probabilistic settings.

In hyperspectral data, pure material exhibits a characteristic signature -- either in reflectance or radiance -- allowing its remote identification within a scene, namely a spatially extended domain such as a planetary surface portion \citep{Vasile2024, alfre2025}. When a scene is captured by remote sensors, every pixel of the resulting image corresponds to a hyperspectral observation. However, hyperspectral measurements rarely coincide exactly with pure-material signatures, as sensing technologies introduce spatial blurring and mixing effects due to neighbouring light sources and limited spatial resolution. For this reason, a dedicated research area known as \textit{Hyperspectral Unmixing} (HU) has emerged, aiming to develop methods that estimate the proportions -- or, in some formulations, the probabilities of occurrence -- of each material contributing to the spectrum recorded at a given pixel. Moreover, since the set of materials present in a scene is not always known \emph{a priori}, the pure-material signatures -- endmembers -- are often estimated in a fully unsupervised manner. Some approaches, such as Vertex Component Analysis (VCA, \cite{Nascimento2005}), adopt the so-called \textit{pure-pixel assumption}, which postulates that at least one pixel in the scene corresponds to each pure material; however, this assumption may fail in practice, potentially leading to inaccurate or non-identifiable endmember estimates. Non-negative Matrix Factorisation (NMF, \cite{hyper_pen}), which identifies the minimum-volume simplex enclosing the data, and Minimum-volume HU \citep{Jun2008} relax this assumption, but remain sensitive to outliers: extreme observations can artificially inflate the enclosing simplex and distort the estimated endmembers. Our approach occupies a middle ground: rather than requiring pure pixels to be present in the dataset, we assume that the endmembers have positive probability density -- that is, they need not be observed, but must be plausible under the data-generating distribution. Under this assumption, we identify the vertices that maximise the statistical spread of the data over the simplex, a criterion grounded in the covariance structure of the compositional weights and less sensitive to extreme observations than minimum-volume alternatives.

The article is organised as follows. In Section~\ref{BHunmixing} we formally define and study random density mixtures within the Bayes 
Hilbert space, and discuss the difference with respect to probabilistic mixtures. Section~\ref{identifiability} investigates the identifiability 
conditions of the corresponding unmixing problem in generic Hilbert spaces; Section~\ref{pmlu} presents the proposed maximum likelihood 
estimation procedure in Bayes Hilbert spaces. The practical performance and applicability of the proposed methodology are demonstrated through 
an extensive simulation study (Section~\ref{sec:simulations}) -- with synthetic probability density data -- and a real-data application to 
hyperspectral imagery (Section~\ref{sec:aviris}).


\section{Theoretical background}
\label{sec:theory}
\subsection{Bayes Hilbert Space}
\label{bayes_hilbert_space}
A Bayes Hilbert space on a compact interval \( I \subset \mathbb{R} \), denoted by \( B^2(I) \), is an infinite-dimensional Hilbert space specifically designed for the analysis of density functions \citep{VanDenBoogaart2014}. Formally, \( B^2(I) \) can be defined as the quotient space 
\[
L_+^2(I) / =_{B^2},
\]
where 
\[
L_+^2(I) = \{ f: I \rightarrow \mathbb{R^+} \textrm{ s.t. } \log(f)\in L^2  \},
\]
and \( =_{B^2}\) is the equivalence relation given by
\[
f =_{B^2} g \quad \text{if and only if} \quad f = \eta \, g \ \text{ a.e. on } I \text{, for some constant \( \eta >0 \).}
\]

Thus, elements of \( B^2(I) \) are equivalence classes of positive  functions whose logarithm is square integrable, that differ only by a positive multiplicative constant. In other words, functions are identified up to scale, reflecting the fact that only their relative structure -- or shape -- is relevant. By convention, in the following, each equivalence class is represented, if possible, by its normalised element, i.e. the density integrating to one over \( I \).
As a Hilbert space, it is equipped with a sum \(\oplus\) (a.k.a. perturbation), a multiplication by scalar \(\odot\) (a.k.a. powering), and a scalar product, with respect to which \(B^2(I)\) is complete. For \(f,g \in B^2(I)\) and \(\alpha \in \mathbb{R}\), the operations are defined as
\begin{equation}
\begin{cases}
f \oplus g = \dfrac{f \cdot g}{\int_I f(s) \cdot g(s) \, ds},\\[0.5em]
\alpha \odot f = \dfrac{f^\alpha}{\int_I f(s)^\alpha \, ds}.
\end{cases}
\label{operations}
\end{equation}

To define a scalar product on \(B^2(I)\), \cite{VanDenBoogaart2014} introduced the \textit{centered log-ratio (clr) transformation}, which maps densities into elements of
\[
L^2_0(I) = \Big\{ f \in L^2(I) \;\big|\; \int_I f(s) \, ds = 0 \Big\}.
\]

The clr-transformation is defined as
\begin{equation}
\operatorname{clr}(f)(t) = \log(f(t)) - \frac{1}{\lambda(I)} \int_I \log(f(s)) \, ds,
\end{equation}
where \(\lambda\) denotes the Lebesgue measure on \(I\). By construction, with $\langle f, g\rangle_{B^2(I)}=\langle \operatorname{clr}(f),\operatorname{clr}(g)\rangle_{L^2(I)}$, the clr-transformation is an isometry between \(B^2(I)\) and \(L^2_0(I)\), and it preserves the Hilbert space structure. Moreover, \(B^2(I)\) is separable \citep{VanDenBoogaart2014}, a property that is particularly important in statistical and computational contexts, where data must be efficiently represented and recorded.

This formulation is here considered for ease of exposition, although a wide range of extensions exists, including generalizations to more complex or unbounded supports and non-Lebesgue reference measures \citep{VanDenBoogaart2014, hron2023, talska20}.

\subsection{Linear mixtures}
Let $\{ h_j\}_{j=1}^m\subset B^2(I)$ be a set of densities.
A linear mixture is defined as \citep{FruhwirthSchnatter2006}
\[
g=\sum_{j=1}^m p_j\, h_j,
\qquad \sum_{j=1}^m p_{j}=1,\ p_j\geq 0\ \forall j.  
\]
If we denote the vector of proportions $\boldsymbol{p}=(p_j)_j$, then by construction $\boldsymbol{p}\in S^m$, where $S^m\subset\mathbb{R}^m$ is the $(m-1)$ dimensional simplex. Thus, \(g\) is a linear convex combination of the densities \(\{h_j\}_{j=1}^m\), or equivalently a weighted arithmetic mean, where the weights sum to 1.
Let \(X\sim g\). Then, \(X\) has the same distribution as the mixture \(Y\), defined conditionally as \(Y|\{J=j\}\sim h_j\) for all $j\in \{1,\ldots,m\}$, with mixing index \(J\sim\boldsymbol p\) supported on \(\{1,\ldots,m\}\).

Consequently, the linear mixture provides an appropriate model when each
observation is assumed to be generated by first selecting a component
density \(h_j\) with probability \(p_j\), and then sampling from that
density.
In this sense, the linear mixture represents a purely
\emph{probabilistic} mixture, as it explicitly models heterogeneity
through a random selection among distinct components.

In the next section, we introduce a framework for random Bayes Hilbert mixtures, highlighting the crucial differences with respect to the classical linear -- probabilistic -- mixture.

\section{Bayes Hilbert mixtures}
\label{BHunmixing}
We now introduce the core theoretical developments of our work, beginning with the definition of Bayes Hilbert mixtures. Note that this construction does not depend on the specific geometry of Bayes spaces, but is valid in any Hilbert space. However, for the purposes of this work, we restrict our attention to mixtures defined within a Bayes Hilbert space and adopt the corresponding notation throughout. 
Let $\{ h_j\}_{j=1}^m\subset B^2(I)$ be a set of linearly independent densities, where linearity has to be intended in the Bayes Hilbert sense, namely with respect to the operators $\oplus,\odot$ defined above. A \textit{convex hull} in the Bayes Hilbert space $(B^2(I),\oplus,\odot)$  is a set
\[
\mathcal M( \{h_j\}_{j=1}^m)=\bigg\{g\in B^2(I) \bigg| g=\bigoplus_{j=1}^m  p_j \odot  h_j,
\sum_j p_j=1, \ p_j\geq 0 \ \forall j\bigg\}.\]

In the following, we name \textit{Bayes Hilbert mixtures} the elements of $\mathcal{M}$. Note that, by projecting $\mathcal M$ through $\operatorname{clr}$ into the $L_0^2(I)$ space, the image is $conv(\operatorname{clr}(h_1),...,\operatorname{clr}(h_m))$, 
i.e. the classical (linear) convex hull of $\operatorname{clr}$-transformed densities $\{ h_j\}_{j=1}^m$. 

Let
\[
g=\bigoplus_{j=1}^m p_j \odot  h_j =\frac{\prod_{j=1}^m{h_j^{p_j}}}{\int_I \prod_{j=1}^m{h_j^{p_j}(x)dx}}=_{B^2} \prod_{j=1}^m{h_j^{p_j}} ,
\]
for a given set of $\{h_j\}_{j=1}^m$ in $B^2(I)$ and $\boldsymbol{p}=(p_j)_j$ in $S^m$. By construction, the resulting density \(g\) corresponds to the weighted geometric mean of the densities \(\{h_j\}_{j=1}^m\). Consequently, the Bayes Hilbert mixture represents a form of aggregation that differs fundamentally from probabilistic mixing: rather than describing a population composed of
distinct subpopulations, it yields a single distribution that integrates the characteristics of the original densities.
In this sense, the Bayes Hilbert mixture is more appropriately viewed as a \emph{fuzzy} mixture than as a probabilistic one. 

The concept of a fuzzy mixture is  widely recognized \citep{FGMM}: in general, it describes a framework in which each observation does not belong exclusively to a single component, but rather to all of them simultaneously, with varying degrees of membership. Just as \emph{grey} is not simply a mixture of white and black pixels, but a shade in its own right -- where white and black represent the two extremes of a continuum -- the Bayes 
Hilbert mixture yields a genuinely new density that interpolates smoothly between the original components. It is strictly related to the problem of soft-clustering, as 
it models data belonging  to more than one cluster at the same time, with different proportions. This seems appropriate e.g.\ in our application on hyperspectral unmixing, where pixels usually contain more than one material at the same time.

Figure \ref{arit_geom} (top) displays an example with Beta distributions. The two continuous curves represent two densities $d_1, d_2$ in the Bayes Hilbert space $B^2([0,1])$. The dashed curve represents the result of a linear mixture, while the dotted one represents the result of a Bayes Hilbert mixture. In both cases, $\boldsymbol{p}=(0.5,0.5)$. The linear mixture is bimodal, with modes that are near the modes of the two original densities. 
By contrast, the Bayes Hilbert mixture is unimodal, with the mode in $0.5$: this highlights that the two kinds of mixtures can be strongly different. For example, let us consider the densities and mixtures of Figure \ref{arit_geom} (top) having a domain equal to a color map from white, 0, to black, 1. Figure \ref{arit_geom} (bottom) displays pixels that are i.i.d sampled from $d_1$ (first from left, mostly light), $d_2$ (last, mostly dark), their linear mixture (second, heterogeneous mixture of light and dark pixels) and Bayes Hilbert mixture (third, mostly \textit{grey}). This illustrates that linear mixtures are  preferable when the data are believed to arise from separable subpopulations (here light and dark pixels), whereas Bayes Hilbert mixtures define a population, where mixing takes place within each observation (here within pixels, leading to differently grey pixels). The choice between the two mixing paradigms is thus driven by the setting and inferential goal.

\begin{figure}[H]
    \centering
    \includegraphics[width=0.4\linewidth]{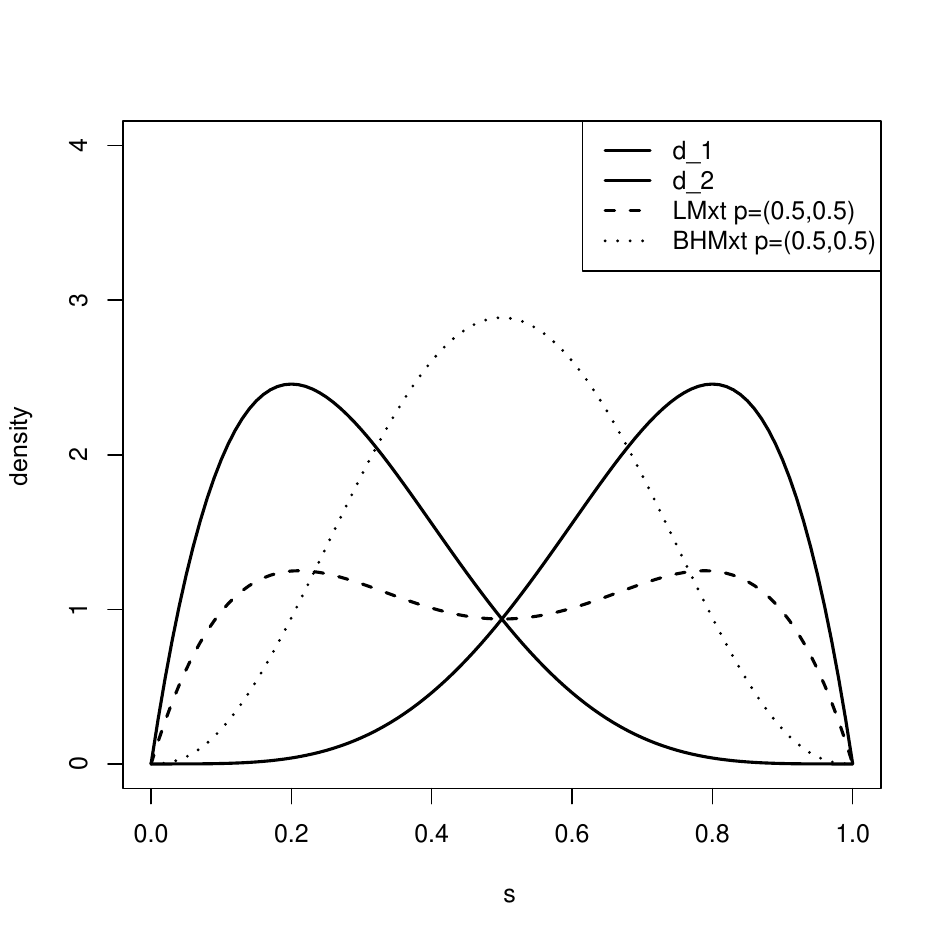}\\[1em]
    \centering\includegraphics[width=0.8\linewidth]{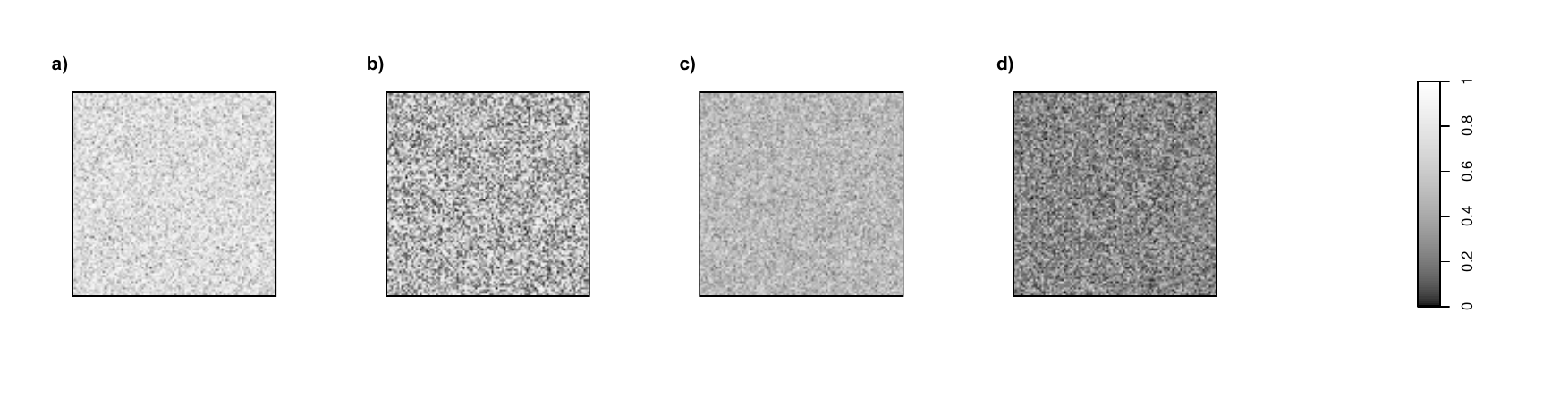}
    \caption{(Top) Density curves of Beta($2,5$) and Beta($5,2$) distributions (\texttt{d\_1} and \texttt{d\_2}), their linear (uniform) mixture (dashed), and their Bayes Hilbert (uniform) mixture (dotted). (Bottom) Independent samples from a) $d_1$, b) linear uniform mixture, c) BH uniform mixture, and d) $d_2$, stored in a matrix for ease of visualization.}
    \label{arit_geom}
\end{figure}


Formally, let $(\Omega,\mathcal{F},\mathbb{P})$ be a probability space and 
let $G:\Omega \to B^2(I)$ be a random element in the Bayes Hilbert space 
$B^2(I)$.

\begin{definition}[Bayes Hilbert random mixture]
\label{def_mix}
Let $\{h_j\}_{j=1}^m \subset B^2(I)$ be linearly independent in the 
Bayes Hilbert space. We say that $G$ is a Bayes Hilbert random mixture 
(of $\{h_j\}_{j=1}^m$) if
\[
G(\omega) \in \mathcal{M}(\{h_j\}_{j=1}^m)
\quad \text{for all } \omega \in \Omega,
\]
where $\mathcal{M}(\{h_j\}_{j=1}^m)$ denotes the associated convex hull.
\end{definition}

So far, our formulation shows that if $G$ is any random mixture, then it \emph{can} be represented by a collection of vertices $\{h_j\}_{j=1}^m$ with an associated random proportion vector $\boldsymbol{p}$. To make this useful in practice, we may ask whether we also have the identifiability of any such $G$, i.e. whether it can be uniquely represented as a mixture. We discuss two variants of identifiability:

\begin{enumerate}[(i)]
    \item \textbf{Partial identifiability}: given $\{h_j\}_{j=1}^m$, the distribution of $\boldsymbol{p}$ is unique;
    \item \textbf{Total identifiability}: the representation of $G$ is unique, namely, there exists a unique pair consisting of vertices and random proportions generating $G$.
\end{enumerate}

The first property holds, as shown in Section~\ref{part_ident}. The second property, however, does not hold in general. Given $G$, there may exist infinitely many pairs of vertices and corresponding random proportions that generate $G$ as a random mixture.

In particular, larger convex hulls containing the support of (the distribution of) $G$, indicated as $supp(G)$, correspond to low-variance distributions of $\boldsymbol{p}$, whereas smaller convex hulls containing  $supp(G)$ require more dispersed distributions of $\boldsymbol{p}$ -- see Figure \ref{intro_mix} for a schematic explanation. As the convex hull shrinks, its vertices approach $supp(G)$; equivalently, the vertices become increasingly likely to appear in a realization.

In real datasets, the vertices are typically latent and do not appear as pure realizations -- see the discussion of the pure-pixel assumption in \cite{alfre2025}. Nevertheless, one may assume that the vertices, even if not observed, could in principle occur. Formally, this amounts to seeking the representation whose vertices are closest to $supp(G)$, or equivalently, whose associated distribution of $\boldsymbol{p}$ is as spread out as possible. This is analogue to searching for the minimum-volume convex hull, as is standard in hyperspectral unmixing problems, but in a distributional perspective. We discuss this in more detail in \ref{min_repr}.

\section{Identifiability of mixtures in Hilbert Spaces}
\label{identifiability}

We study the identifiability of the representation of a random mixture $G$ through a set of vertices $\{h_j\}_{j=1}^m$ and the distribution of the random proportion. Throughout, $G$ is assumed to be a random mixture supported on the convex hull of $m$ fixed vertices.

\paragraph{Partial identifiability}
Given vertices $\{h_j\}_{j=1}^m$, Proposition~\ref{prop:unique_p} (proved in ~\ref{app:identifiability}) guarantees that the distribution of $\boldsymbol{p}$ is uniquely determined by $G$. This allows us to define the map
\[
\phi_G \colon \mathcal{H}_G^m \to \mathfrak{D}(S^m), \qquad
\phi_G(\{h_j\}_{j=1}^m) = D
\iff
G \stackrel{d}{=} \bigoplus_{j=1}^m p_j \odot h_j,\quad \boldsymbol{p} \sim D,
\]
where $\mathcal{H}_G^m$ is the set of $m$-tuples of linearly independent vertices whose convex hull contains $\mathrm{supp}(G)$ almost surely (Definition~\ref{def:H^m_g}). Under suitable regularity conditions (see ~\ref{app:identifiability}), $\phi_G$ is continuous and, when $G$ is \emph{finite-symmetric} -- meaning $\phi_G^{-1}(D)$ is finite for every $D$ -- $\phi_G$ is also a closed map, so $\operatorname{Im}(\phi_G)$ is closed in $\mathfrak{D}(S^m)$.

\paragraph{Minimal representation.}
Let $\Sigma_p$ denote the covariance matrix of the isometric log-ratio transformation $\psi$ ($\mathrm{ilr}$, \cite{Egozcue2003}) of the proportion vector $\boldsymbol p\in S^m$, namely:
\begin{equation}
    \Sigma_p=\operatorname{Cov}(\boldsymbol{x}), \qquad \boldsymbol{x}=\psi(\boldsymbol{p}),\qquad \psi: S^m\to \mathbb{R}^{(m-1)} \text{ isomorfism}
\end{equation}
We define a strict partial order $<_G$ on the set of valid representations of $G$. Since two $m$-tuples of vertices that differ only by a permutation yield the same convex hull, we work on the quotient space with respect to the equivalence relation $\sim_\pi$: $(h_1,\dots,h_m)\sim_\pi(h_1',\dots,h_m')$ if there exists a permutation $\pi$ of $\{1,\dots,m\}$ such that $(h_1,\dots,h_m)=(h'_{\pi(1)},\dots,h'_{\pi(m)})$. 
The set of valid representations is then
\[
\mathcal{C}_G
= \Bigl\{(\{h_j\}_j,\, \boldsymbol{p}) :
G \stackrel{d}{=} \textstyle\bigoplus_j p_j \odot h_j
\Bigr\}\big/{\sim_\pi}.
\]
We set $(\{h_j\}_j, \boldsymbol{p}) <_G (\{\bar{h}_j\}_j, \bar{\boldsymbol{p}})$ whenever $\operatorname{tr}(\Sigma_p) > \operatorname{tr}(\Sigma_{\bar{p}})$. The quantity $\operatorname{tr}(\Sigma_p)$ measures the total dispersion of $\boldsymbol{p}$: maximising it selects the representation in which the proportions 
spread out most within the convex hull, thereby utilising the available space efficiently. By contrast, directly minimising the volume of the convex hull tends to concentrate the vertices near the most extreme observations, which may distort the recovered statistical structure by overweighting atypical data points. 
Existence of a minimal element follows from Proposition~\ref{prop:min_exist} (see~\ref{app:identifiability}).

Uniqueness requires an additional assumption. We say that $G$ satisfies the \emph{Probabilistic Pure Pixel} (3P) assumption if each vertex of the minimal representation can be approached with positive probability. Formally 

\begin{definition}(Probabilistic Pure Pixel assumption)
 We say that $G$ satisfies the Probabilistic Pure Pixel (3P) assumption if there exists $(\{h^*_1,...,h^*_m\},\boldsymbol p)$ in the minimal set, such that $\forall j=1,...,m$ and $\forall\epsilon>0$, $\mathbb{P}(G\in D_\epsilon(h^*_j))>0$, where $D_\epsilon(h^*_j)$ is the ball centered in $h^*_j$ and with radius $\epsilon$.
\label{def:3P}
\end{definition}

This differs from the classical pure pixel assumption in Hyperspectral Unmixing, which requires each vertex to appear as a pure observation almost surely; the 3P assumption only requires this event to have positive probability.

\begin{theorem}[Minimal representation]
\label{minimal_theorem}
If $G$ is finite-symmetric and satisfies the \emph{3P} assumption, 
the minimal element of $(\mathcal{C}_G, <_G)$ is unique. The unique minimal 
representation is given by the vertices $\{h_j^*\}_{j=1}^m$ paired with 
$\boldsymbol{p}^* \sim D^*= \phi_G(\{h_j^*\}_{j=1}^m)$.
\end{theorem}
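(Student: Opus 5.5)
Existence of a minimal element is given by Proposition~\ref{prop:min_exist}, so the task is uniqueness modulo $\sim_\pi$. We argue by contradiction. Let $(\{h_j^*\},\boldsymbol p^*)$ be the minimal representation from the 3P assumption, and let $(\{h_j'\},\boldsymbol p')$ be another minimal representation with the same maximal value $\operatorname{tr}(\Sigma_{p^*})=\operatorname{tr}(\Sigma_{p'})$. By partial identifiability (Proposition~\ref{prop:unique_p}), $\boldsymbol p^*\sim\phi_G(\{h_j^*\})$ and $\boldsymbol p'\sim \phi_G(\{h'_j\})$. It therefore suffices to show that the two vertex sets coincide up to permutation; the distributional part of the statement then follows from the definition of $\phi_G$.

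\textbf{Geometric step.} Both convex hulls contain $\mathrm{supp}(G)$ a.s. By 3P, for every $j$ and $\epsilon>0$ the ball $D_\epsilon(h_j^*)$ has positive $G$-probability. Hence $h_j^*$ is a limit of points of $\mathrm{supp}(G)$, so $h_j^*$ lies in the closed set $\mathcal M(\{h'_k\})$. Therefore $\mathcal M(\{h_j^*\})\subseteq \mathcal M(\{h_k'\})$. All of this happens inside the $(m-1)$-dimensional affine span after the $\operatorname{clr}$ isometry, so it reduces to a finite-dimensional convex geometry problem. If the inclusion is an equality, the extreme points coincide and the vertex sets agree up to permutation, which is what we want.

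\textbf{Strict inequality step (main obstacle).} Suppose instead the inclusion is strict. Write $h_j^*=\bigoplus_k A_{jk}\odot h'_k$ with $A$ a row-stochastic matrix. Then $\boldsymbol p' = A^\top \boldsymbol p^*$ in distribution, as an affine map between simplices, and $A$ is not a permutation matrix. We must show that this forces $\operatorname{tr}(\Sigma_{p'})<\operatorname{tr}(\Sigma_{p^*})$, contradicting equality of the maxima.

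The difficulty is that dispersion is measured in ilr coordinates, where $\psi$ is nonlinear, so linear contraction arguments do not transfer directly. The plan is to use 3P together with finite symmetry. Under 3P, $\boldsymbol p^*$ puts mass arbitrarily close to each vertex $e_j$ of $S^m$, where ilr coordinates diverge. The image $A^\top e_j$ is a row of $A$; if $A$ is not a permutation, some row lies in the interior of a face of dimension at least one. Near such a point the ilr map is bounded by constants that depend only on $A$, so the variance contribution from that tail is strictly reduced.

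To make this rigorous, approximate $\boldsymbol p^*$ via the continuity of $\phi_G$. Then compare, conditionally on neighbourhoods of each vertex, the second moments of $\psi(\boldsymbol p^*)$ and $\psi(A^\top\boldsymbol p^*)$, showing the map $\boldsymbol p\mapsto A^\top\boldsymbol p$ is a strict contraction in the Aitchison sense in regions of positive mass. If this direct comparison fails, the fallback is a perturbation argument. Finite symmetry and closedness of $\operatorname{Im}(\phi_G)$ mean the minimal set is finite. Interpolating the vertices between $\{h^*_j\}$ and $\{h'_k\}$ would then give a continuum of representations with trace at least the maximum, via a concavity-type argument along the interpolation path, and that continuum contradicts finiteness.

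Either route yields $\{h'_k\}\sim_\pi\{h_j^*\}$ and hence uniqueness of the minimal element.
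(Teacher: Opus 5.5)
\textbf{Geometric step.} Your geometric step is correct and is exactly the skeleton of the paper's proof. The paper intersects the two minimal hulls and uses closedness of the other hull together with 3P to rule out a vertex $h^*_j$ lying outside it. This yields $\mathcal M(\{h^*_j\})\subseteq\mathcal M(\{h'_k\})$.

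\textbf{The gap: strict inequality.} The missing piece is the step you flag yourself: that a strict inclusion forces $\operatorname{tr}(\Sigma_{p'})<\operatorname{tr}(\Sigma_{p^*})$. The paper does not prove this either; it only declares the inclusion to be ``against the assumption of minimality''. Your proposed justification is false. The map $\boldsymbol p\mapsto A^\top\boldsymbol p$ is not an Aitchison contraction and can increase the ilr trace.

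Take $m=3$, $h^*_1=h'_1$, $h^*_2=h'_2$, $h^*_3=\tfrac12\odot h'_1\oplus\tfrac12\odot h'_3$, so that $\boldsymbol p'=(p^*_1+p^*_3/2,\;p^*_2,\;p^*_3/2)$. Set $S=\log(p^*_1/p^*_3)$ and $R=\log(p^*_2/p^*_3)$. Using $\operatorname{tr}\Sigma=\frac1m\sum_{i<j}\operatorname{Var}\log(p_i/p_j)$, one gets
\[
\operatorname{tr}\Sigma_{p^*}=\tfrac13\bigl[\operatorname{Var}S+\operatorname{Var}R+\operatorname{Var}(S-R)\bigr],
\]
\[
\operatorname{tr}\Sigma_{p'}=\tfrac13\bigl[\operatorname{Var}\bigl(\log(\tfrac12+e^{S})-R\bigr)+\operatorname{Var}\log(1+2e^{S})+\operatorname{Var}R\bigr].
\]
Now let $S=\mu+Z$ and $R=2Z$, up to a small independent Gaussian perturbation. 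This makes $\boldsymbol p^*$ a nondegenerate ilr-normal law, so every neighbourhood of every $h^*_j$ has positive probability. Letting $\mu\to-\infty$ gives $\operatorname{tr}\Sigma_{p'}\to\frac83\operatorname{Var}Z$, while $\operatorname{tr}\Sigma_{p^*}=2\operatorname{Var}Z$.

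So the larger hull has strictly larger trace. The expansion happens near the face $p^*_1\approx0$, where the bulk of the mass sits. No control of the near-vertex tails can rescue the argument, since 3P only gives those tails positive but possibly negligible mass. Your claim that ilr is bounded near a non-vertex row of $A$ also fails whenever that row has a zero entry, e.g.\ $(\tfrac12,0,\tfrac12)$ here.

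\textbf{The fallback.} The fallback does not close the gap either. Finite symmetry bounds $|\phi_G^{-1}(D)|$ only for each \emph{fixed} $D$. Distinct minimal representations generally have distinct laws $D$. Hence neither finite symmetry nor closedness of $\operatorname{Im}(\phi_G)$ makes the minimal set finite, and a continuum of tied representations would contradict nothing. Moreover, vertex tuples interpolated between $\{h^*_j\}$ and $\{h'_k\}$ need not contain $\mathrm{supp}(G)$, so they need not lie in $\mathcal H^m_G$. No concavity of the trace along such a path is available.

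\textbf{What is missing.} The example shows that hull inclusion plus 3P-type mass does not by itself order the traces. A correct proof therefore needs a further ingredient, which neither your proposal nor the paper supplies. It could be a different use of the maximality of $\operatorname{tr}(\Sigma_{p^*})$ over all of $\mathcal C_G$, or an additional hypothesis on $G$.
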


\begin{corollary}
The vertices $\{h^*_j\}_{j=1}^m$ in Definition~\ref{def:3P}, paired with 
the corresponding random proportion 
\[\boldsymbol{p}^*\sim D^*=\phi_G(\{h^*_j\}_{j=1}^m)\] 
are the unique minimal representation in Theorem~\ref{minimal_theorem}.
\end{corollary}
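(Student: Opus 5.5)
The plan is to work in $L^2_0(I)$ through the clr isometry. There a Bayes Hilbert mixture $\bigoplus_j p_j\odot h_j$ becomes the affine combination $\sum_j p_j \operatorname{clr}(h_j)$, so every valid representation of $G$ is a simplex $\Delta=\operatorname{conv}(\operatorname{clr}(h_1),\dots,\operatorname{clr}(h_m))$ containing $\operatorname{supp}(G)$. All these simplices lie in the same $(m-1)$-dimensional affine hull $A$ of $\operatorname{supp}(G)$; we assume $\operatorname{supp}(G)$ spans an $(m-1)$-dimensional affine space, otherwise the vertices are not determined even up to affine maps. Any two valid representations are therefore related by an affine bijection $T$ of $A$ mapping one simplex onto a simplex containing the other. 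In proportion coordinates this gives $\bar{\boldsymbol p}=M\boldsymbol p$, with $M$ a column-stochastic-type $m\times m$ matrix with columns summing to one. The case of interest is $\Delta^*\subseteq\bar\Delta$: then $M$ is column-stochastic with nonnegative entries, since the vertices of $\Delta^*$ are convex combinations of those of $\bar\Delta$.

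Existence of a minimal element is Proposition~\ref{prop:min_exist}, so only uniqueness needs proof. Let $(\{h_j^*\},\boldsymbol p^*)$ be the representation supplied by the 3P assumption, and let $(\{\bar h_j\},\bar{\boldsymbol p})$ be any other minimal element. By Proposition~\ref{prop:unique_p} its proportion distribution is $\phi_G(\{\bar h_j\})$, so it suffices to show $\{\bar h_j\}=\{h^*_j\}$ up to $\sim_\pi$.

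\textbf{Step 1: $\Delta^*\subseteq\bar\Delta$.} By 3P, every $\epsilon$-ball around $h_j^*$ has positive $G$-probability, so $h_j^*\in\operatorname{supp}(G)$ because the support is closed. Since $\operatorname{supp}(G)\subseteq\bar\Delta$ almost surely and $\bar\Delta$ is closed and convex, we get $\Delta^*=\operatorname{conv}\{h_j^*\}\subseteq\bar\Delta$. This is exactly where 3P is used.

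\textbf{Step 2: strict loss of dispersion.} With $\Delta^*\subseteq\bar\Delta$, we have $\bar{\boldsymbol p}=M\boldsymbol p^*$, where $M$ has nonnegative entries and unit column sums. If $M$ is not a permutation matrix, we must show $\operatorname{tr}\Sigma_{\bar p}<\operatorname{tr}\Sigma_{p^*}$. This contradicts minimality of $\bar{\boldsymbol p}$, since $<_G$ ranks larger trace as smaller.

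\textbf{Main obstacle.} Step 2 is the hard part, because $\operatorname{tr}\Sigma$ is computed after the nonlinear ilr map $\psi$, and $\psi\circ M\circ\psi^{-1}$ is not linear. I would first try a contraction argument. Aitchison distances blow up near the boundary of the simplex, and the map $\boldsymbol p\mapsto M\boldsymbol p$ sends $S^m$ into a strictly smaller sub-simplex, so one can hope it is a weak contraction in the Aitchison metric. To show this, compare $\|\psi(M\boldsymbol p)-\psi(M\boldsymbol q)\|$ with $\|\psi(\boldsymbol p)-\psi(\boldsymbol q)\|$ via a Hilbert-projective-metric argument: positive linear maps contract the Hilbert metric (Birkhoff's theorem), with strict contraction when $M$ is not a generalised permutation. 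Then use $\operatorname{tr}\Sigma=\tfrac12\mathbb E\|\boldsymbol x-\boldsymbol x'\|^2$ for independent copies $\boldsymbol x,\boldsymbol x'$. The 3P assumption puts mass near every vertex, where the strict contraction is effective, and this yields the strict inequality.

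If the Aitchison/Hilbert comparison fails in general, I would fall back on finite-symmetry and closedness of $\operatorname{Im}\phi_G$. The set of minimal elements is then compact and finite modulo permutations. Given two distinct minimal elements, the intersection of their simplices, or a simplex interpolating between them, would produce a representation with strictly larger trace, contradicting minimality.

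Finally, once $M$ is a permutation, the two vertex sets coincide modulo $\sim_\pi$. Proposition~\ref{prop:unique_p} then gives $\boldsymbol p^*\sim D^*=\phi_G(\{h_j^*\})$. This proves both the theorem and the corollary.
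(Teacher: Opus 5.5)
Your Step~1 is correct, and it is where 3P genuinely enters. It matches the branch of the paper's proof of Theorem~\ref{minimal_theorem} in which a vertex outside the closed intersection $\Theta$ of two minimal hulls would have an $\epsilon$-ball of zero $G$-probability.

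The gap is Step~2, and neither of your routes closes it. Birkhoff's theorem contracts the Hilbert projective metric, but $\operatorname{tr}(\Sigma_p)$ is computed in the Aitchison (ilr) geometry, and the two metrics agree only for $m=2$. In clr coordinates the differential of $\boldsymbol p\mapsto M\boldsymbol p$ is $\operatorname{diag}(M\boldsymbol p)^{-1}M\operatorname{diag}(\boldsymbol p)$. This is a row-stochastic matrix: it contracts the oscillation seminorm, but it can expand the Euclidean seminorm on $\mathbb{R}^m/\mathbb{R}\mathbf{1}$.

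In fact the inequality you need is false as a general statement. Take $m=3$ and let $M$ have columns $(1,0,0)'$, $(1-\delta,\delta,0)'$, $(0,0,1)'$, so that $\bar\Delta\supsetneq\Delta^*$. Let $p_1^*=\eta\ll 1-\delta$ and let $p_2^*/p_3^*$ fluctuate around $1$. Write $u=\log p_2^*$ and $v=\log p_3^*$. Then the clr fluctuation $(0,u,v)$ becomes approximately $(u,u,v)$. The trace therefore changes by $-\tfrac23\operatorname{Cov}(u,v)$, which is positive because $p_2^*+p_3^*=1-\eta$ is fixed. Concretely, $\operatorname{tr}\Sigma_{\bar p}\approx\tfrac43\operatorname{tr}\Sigma_{p^*}$.

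Mixing in a component of small enough weight that charges every neighbourhood of every vertex does not reverse this. So strictness cannot come from nesting plus mass near the vertices; it would have to exploit the maximality of $\operatorname{tr}\Sigma_{p^*}$ itself.

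Your fallback is circular. After Step~1 the intersection of the two simplices is just $\Delta^*$. In general such an intersection is not an $m$-vertex simplex, so it is not an element of $\mathcal{C}_G$. Also, finite symmetry bounds the fibres of $\phi_G$, not the number of trace maximisers.

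For the corollary none of this is needed, and this is how the paper argues. Theorem~\ref{minimal_theorem}, stated just before, gives uniqueness of the minimal element. Definition~\ref{def:3P} already places $(\{h_j^*\}_j,\boldsymbol p)$ in the minimal set. Hence $\{h_j^*\}_j$ is that unique minimal element, and Proposition~\ref{prop:unique_p} identifies its proportion law as $D^*=\phi_G(\{h_j^*\}_j)$.

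If you also want to reprove the theorem, note that the paper's argument has exactly your Step~1/Step~2 structure. It simply declares the nested case ``against the assumption of minimality''. The step you flagged as the crux is therefore the one left unjustified there, and the contraction idea will not supply it.
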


By forcing $\boldsymbol{p}$ to be spread out while estimating the representation 
of $G$, we obtain the minimal representation, which corresponds to the 
representation for which the vertices belong to the closure of $G$'s support. 
This fact is crucial for the interpretability of our model. All proofs and 
auxiliary results are collected in ~\ref{app:identifiability}.

\section{Bayes Hilbert unmixing via penalised maximum likelihood}
\label{pmlu}
In real datasets consisting of multiple realizations of a random mixture $G$, 
the assumption that $G$ is a noiseless mixture is generally unrealistic. Without an additive noise term absorbing existing residual variability in $B^2(I)$, every observed realization would require its own vertex to be represented exactly, leading to an unbounded growth of $m$ and to overfitting of the vertices. We therefore incorporate a noise component and fix the number of components $m$ (the choice of this hyperparameter is discussed in \ref{sec:tuning}). We model $G$ as
\begin{equation}
    G = G_{\mathcal M}\oplus \epsilon_{B^2} 
    \label{eq:G}
\end{equation}
where $G_{\mathcal M}$ is a noiseless mixture for which the 3P assumption holds (see Definition~\ref{def:3P}), and $\epsilon_{B^2} \in B^2(I)$ is a density-valued noise term with mean equal to the uniform distribution $[0]_{B^2}$, the neutral element of the BH space, and a covariance operator $\Gamma_\epsilon$. The perturbation term $\epsilon_{B^2}$ allows $G$ to deviate, with probability one, from the convex hull on which $G_{\mathcal M}$ lies. More precisely,
\begin{equation}
    G_{\mathcal M} := \bigoplus_{j=1}^m p_j \odot h_j,
    \label{eq:G_M}
\end{equation}
where $\{h_j\}_{j=1}^m \subset \mathcal H^m_G$ are the vertices and $\boldsymbol p := (p_j)_{j=1}^m \sim D\in \mathfrak D(S^m)$ are random proportions. The pair
\[
\big(\{h_j\}_{j=1}^m, \boldsymbol p\big) \in \mathcal H_G^m \times \mathfrak D(S^m)
\]
constitutes the minimal representation of $G_{\mathcal M}$. This representation is unique up to permutations of the indices, according to the identifiability results established in Section~\ref{identifiability}. In this section, we propose a strategy for estimating both the vertices and the distribution of the random proportions that generate $G_{\mathcal M}$, based solely on a sample of $n$ independent and identically distributed (i.i.d.) realizations of the noisy mixture $G$. 

\subsection{Modelling the proportions' distribution}
In Equation \eqref{eq:G_M}, $(p_j)_{j=1}^m=\boldsymbol p \in S^m$ denotes the vector of proportions. We assume that the isometric log-ratio (ilr,  \cite{Egozcue2003}) transformation of $\boldsymbol p$ follows a Gaussian distribution, namely
\[
\boldsymbol p \sim \psi^{-1}\!\left(\mathcal{N}_{m-1}(\boldsymbol \mu_p,\Sigma_p)\right),
\]
for some $\boldsymbol\mu_p\in\mathbb{R}^{m-1}, \Sigma_p\in \mathbb{R}^{(m-1)\times(m-1)}$ positive semidefinite, where $\psi : S^m \to \mathbb{R}^{m-1}$ denotes the discrete ilr transformation. The ilr transformation maps the compositional vector $\boldsymbol p$ from the simplex $S^m$ to the real Euclidean space $\mathbb{R}^{m-1}$ by projecting log-ratio coordinates onto an orthonormal basis, thereby preserving distances and endowing the simplex with a standard (finite-dimensional) Hilbert space structure. Equivalently, in ilr coordinates, the model reads
\[
\boldsymbol z = \psi(\boldsymbol p) \sim \mathcal{N}_{m-1}(\boldsymbol \mu_p,\Sigma_p),
\]
and the composition is recovered through the inverse mapping
\[
\boldsymbol p = \psi^{-1}(\boldsymbol z).
\]
 This formulation makes explicit that the Gaussian assumption is imposed in Euclidean space, while the induced distribution on the simplex is ilr--normal. While other log-ratio transformations, such as the clr, are possible, they differ from the ilr by a linear transformation and thus do not affect normality.

\subsection{Modelling the density objects}
The model detailed in Equation \eqref{eq:G} and \eqref{eq:G_M} includes different density objects: the noisy realizations $\{g_i\}_{i=1}^n$, which are observable, the latent vertices $\{h_j\}_{j=1}^m$ and the noise $\epsilon_{B^2}$. 

In our work, we represent each density object through their clr-transformation, which is discretised -- applying truncation -- via finite Karhunen-Loève decomposition (i.e., simplicial functional principal component analysis, SFPCA, \cite{Hron16}) of $G$. This yields, for a generic density $g$:
\begin{equation}\label{eq:g_sfpc}g(x)=\operatorname{clr}^{-1}\bigg(\sum_{l=1}^k c_l \phi_l(x)\bigg) 
\end{equation}
where $\{\phi_l\}_{l=1}^k\in L_0^2(I)$ are the first $k$ simplicial principal components (SFPCs) -- which form a $k$-dimensional orthonormal basis -- and $c_j\in \mathbb{R}$ are the scores of $g(x)$ over the SFPCs -- i.e., $c_j = \langle g, \phi_j \rangle_{B^2} $, with $\langle \cdot, \cdot\rangle_{B^2}$ the inner product in $B^2(I)$. 

The clr-transformation of the density $\epsilon_{B^2}$ is assumed to be distributed as a Gaussian process (GP, \cite{bishop_book}), namely $\epsilon_{B^2}=\operatorname*{clr}^{-1}(\epsilon)$ with
\[
\epsilon\sim GP_{L_0^2}(0,\Gamma_\epsilon).\]
Analogous to \eqref{eq:g_sfpc}, we can expand $\epsilon=\operatorname*{clr}(\epsilon_{B^2})$  in the basis of the first $k$ eigenfunctions (SFPCs) of $G$, with coefficients then following a multivariate Gaussian distribution of dimension $k$ with mean $\mu_\epsilon=\boldsymbol0$ and covariance matrix $\Sigma_\epsilon$. 
The equivalence is shown in \cite{steyer2023}. 
Hereafter, the symbol $\boldsymbol{c}_i\in \mathbb{R}^{k}$ will denote the vector of coefficients of $\operatorname{clr}(g_i)$, associated with the $k$ elements of the basis $\{\phi_1, ..., \phi_k\}$. Analogously, the matrix $\mathbb H\in \mathbb{R}^{k\times m}$ will store the coefficients of $h_j$, $j\in\{1,...,m\}$ column-wise. For clarity, the complete model can be summarised as follows. Each observed density $g_i$ is represented by its clr-coefficient vector $\boldsymbol{c}_i \in \mathbb{R}^k$, which satisfies \begin{equation} \boldsymbol{c}_i = \mathbb{H}\boldsymbol{p}_i + \boldsymbol{\epsilon}_i, \quad \boldsymbol{\epsilon}_i \sim \mathcal{N}_k(\boldsymbol{0}, \Sigma_\epsilon), \quad \boldsymbol{p}_i = \psi^{-1}(\boldsymbol{z}_i), \quad \boldsymbol{z}_i \sim \mathcal{N}_{m-1}(\boldsymbol{\mu}_p, \Sigma_p). \label{eq:full_model} \end{equation}


\subsection{Problem formulation}
\label{subsec:prob_form}
The parameters $(\mathbb{H}, \boldsymbol{\mu}_p, \Sigma_p, \Sigma_\epsilon)$ are estimated by maximising the marginal likelihood of the observed data $\{g_i\}_{i=1}^n$, obtained by integrating out the latent proportion vectors $\{\boldsymbol{p}_i\}$. Since this integral is intractable, we resort to the MCEM algorithm: in the E-step, the current conditional expectation of the complete-data log-likelihood is approximated via Monte Carlo; in the M-step, this approximation is maximised with respect to the parameters. Observing only $\{g_i\}_{i=1}^n \subset B^2(I)$ -- represented by $\{\boldsymbol{c}_i\}_{i=1}^n\subset \mathbb{R}^{n\times k}$ -- , independent realizations of $G$, we seek to estimate via maximum likelihood the vertices $h_1,...,h_m\in B^2(I)$ through the coefficients' matrix $\mathbb H \in \mathbb{R}^{k\times m}$, the residuals' covariance operator $\Gamma_\epsilon\in \mathbb{R}^{k\times k}$ through $\Sigma_\epsilon$, the mean ilr-proportion $\boldsymbol \mu_p\in \mathbb{R}^{(m-1)}$ and its covariance $\Sigma_p\in \mathbb{R}^{(m-1)\times (m-1)}$.

Throughout this section, $f_{\boldsymbol{p}_i}$ denotes the model density of $\boldsymbol{p}_i$ induced by the ilr-normal assumption, and $f_{\boldsymbol{p}_i \mid \boldsymbol{c}_i}$ its conditional density given the observed coefficient vector $\boldsymbol{c}_i$. For given $\boldsymbol p_i$, $i\in\{1,...,n\}$, the log-likelihood 
would be: 

\begin{equation}
    l(\mathbb H,\Sigma_\epsilon|\boldsymbol{p}_1,...,\boldsymbol p_n)=\sum_{i=1}^n l_i(\mathbb H,\boldsymbol{p}_i,\Sigma_\epsilon)=
    -\frac{1}{2}\sum_{i=1}^n (\boldsymbol{c}_i-\mathbb H \boldsymbol p_{i})'\Sigma_\epsilon^{-1}(\boldsymbol{c}_i-\mathbb H \boldsymbol p_{i})-\frac{n}{2}\log(\det(\Sigma_\epsilon))+\text{const}.
\label{logL}
\end{equation}
However, in our scenario the proportions $\boldsymbol{p}_i$ are latent variables. Following the \emph{Monte Carlo Expectation--Maximization (MCEM)} framework, in each M-step we maximise the expected complete-data log-likelihood (the $Q$-function from the E-step) with respect to the parameters, namely:
\begin{equation}
l(\mathbb{H}, \boldsymbol{\mu}_p, \Sigma_p, \Sigma_\epsilon) = \sum_{i=1}^n \mathbb{E}_{\boldsymbol{p}_i| \boldsymbol{c}_i}\!\left[l_i(\mathbb{H}, \boldsymbol{p}_i, \Sigma_\epsilon) + \log f_{\boldsymbol{p}_i|\boldsymbol c_i}(\boldsymbol{p}_i)
\right],
    \label{ElogL}
\end{equation}
where the expectation is based on the conditional distribution of $\boldsymbol{p}_i| \boldsymbol{c}_i$ with the current values of the parameters. In each M-step, we aim to find the set of parameters $(\mathbb H, \boldsymbol \mu_p,\Sigma_p,\Sigma_\epsilon)$ which maximizes the expected log-likelihood in Equation \eqref{ElogL}.

The conditional expectation in \eqref{ElogL} is not analytically available. The $\boldsymbol{p}_i$  are obtained through a nonlinear transformation $\psi$ of a Gaussian random variable, but are not Gaussian themselves. As a consequence, the model density and the likelihood of $\boldsymbol{p}_i$ are not conjugate, and the resulting conditional distribution is not Gaussian. Nevertheless, expectations with respect to $\boldsymbol{p}_i$ can be approximated via Monte Carlo simulation. Specifically, at each iteration $t$ the current parameter estimates $(\mathbb{H}^{(t)}, \boldsymbol{\mu}_p^{(t)}, \Sigma_p^{(t)}, \Sigma_\epsilon^{(t)})$ are held fixed; the latent variables $\boldsymbol{p}_i$ are then sampled from the conditional $f_{\boldsymbol{p}_i| \boldsymbol{c}_i}$ (E-step), the expectation in~\eqref{ElogL} is approximated by Monte Carlo, and the parameters are updated by maximising this approximation (M-step). MCEM is well established in the literature \cite{bishop_book} and has already been employed in the Bayes Hilbert space setting \cite{steyer2023}.

The conditional distribution $\boldsymbol f_{\boldsymbol{p}_i| \boldsymbol{c}_i}$ of $\boldsymbol{p}_i$ is such that:
\begin{equation}
    f_{\boldsymbol{p}_i \mid \boldsymbol{c}_i}(\boldsymbol{p}_i) \propto \exp\!\left(-\frac{1}{2}(\psi(\boldsymbol{p}_i)-\boldsymbol{\mu}_p)'\Sigma_p^{-1}(\psi(\boldsymbol{p}_i)-\boldsymbol{\mu}_p) - \frac{1}{2}(\boldsymbol{c}_i - \mathbb{H}\boldsymbol{p}_i)'\Sigma_\epsilon^{-1}(\boldsymbol{c}_i - \mathbb{H}\boldsymbol{p}_i)\right).
\label{posterior_p}
\end{equation}

For the E-step, we can sample from the conditional distribution without explicitly computing the normalizing constant, using methods that range from importance sampling to more advanced algorithms such as Hamiltonian Monte Carlo (HMC). 
 Importance sampling is easy to implement and interpret, e.g. \cite{BECK2018523}, but poorly scalable as $m$ increases. Moreover, expectations are approximated by weighted averages, with weights given by the ratio between the target conditional density and the auxiliary density. Conversely, HMC is designed to efficiently explore high-dimensional conditional distributions \cite{Neal2011}, and in this case expectations are computed as simple sample averages, since all samples have unit weight and are already distributed according to the target conditional distribution. For this reason, we present both options. 

In the E-step, the expectation in~\eqref{ElogL} is approximated by drawing $B$ samples $\{\boldsymbol{p}_{b,i}\}_{b=1}^B$ from an auxiliary distribution $\pi$ and computing importance-weighted averages. Here $\boldsymbol{p}_{b,i}$ denotes the $b$-th sample drawn from $\pi$ for the $i$-th observation, with weight $w_{b,i}=\frac{\eta_{b,i}}{\sum_b \eta_{b,i}}$ and $\eta_{b,i}=\frac{f_{\boldsymbol{p}_i| \boldsymbol{c}_i}(\boldsymbol p_{b,i})}{\pi(\boldsymbol p_{b,i})}$. For HMC, $\pi = f_{\boldsymbol{p}_i \mid \boldsymbol{c}_i}$ and $w_{b,i} = 1/B$ for all $b$.

Notice that, together with the maximization of the expected log-likelihood with respect to the model parameters, we are also implicitly maximizing the expected log-density of $\boldsymbol{p}_i$. This term alone does not control the size of the induced convex hull $\mathcal{M}(\{h_j\}_{j=1}^m)$, since $\operatorname{tr}(\Sigma_p)$ may become arbitrarily small. In such a case, the proportions concentrate around the mean $\boldsymbol{\mu}_p$, while the locations of the vertices remain essentially unconstrained. To address this issue, and to improve the identifiability of the vertices -- see Section~\ref{identifiability} --, we introduce two additional penalization terms: one controlling the size of the convex hull by penalising small dispersion, and one regularising the shape of the estimated vertices when exact pure observations are not available. The resulting objective function to be \textit{maximised} is:
\begin{equation}
    \hat{l}(\mathbb{H}, \boldsymbol{\mu}_p, \Sigma_p, \Sigma_\epsilon)=\sum_{b=1}^B\sum_{i=1}^n w_{b,i}\bigl(l(\mathbb{H},\boldsymbol{p}_{b,i},\Sigma_\epsilon)+\log f_{\boldsymbol{p}_i|\boldsymbol{c}_i}(\boldsymbol{p}_{b,i})\bigr)-\lambda_{tr}\operatorname{tr}(\Sigma_p^{-1})-\lambda_{sm}\sum_j\left\|D^2[\operatorname{clr}(h_j)]\right\|^2_{L^2(I)},
    \label{hatl}
\end{equation}
where $D^2$ indicates the second derivative operator, and choice of $\lambda_{tr}$ and $\lambda_{sm}$ is discussed in Section \ref{sec:tuning}.

\subsection{Implementation}
In this section we describe the practical implementation of the E- and M-steps introduced in Section~\ref{subsec:prob_form}, showing that each parameter block admits a closed-form update within the coordinate-wise maximisation of the marginal likelihood. The resulting updates are reported below and form the building blocks of the iterative estimation procedure, as described in  Algorithm \ref{algo}.

\subsubsection*{Update of the vertices $\mathbb H$}

Fixing $\Sigma_\epsilon, \boldsymbol \mu_p,\Sigma_p$, the maximization of $\hat l$ has a closed form solution $\mathbb H^*$, that is:
\begin{equation}
    \operatorname{vec}(\mathbb H^*)=(A_r'\otimes I_{k\times k}+I_{m\times m}\otimes A_l)^{-1}\operatorname{vec}(C)
\label{optH}
\end{equation}
where $\otimes$ denotes the Kroeneker product, $I_{k\times k}$ and $I_{m\times m}$ are, respectively, the identity matrix of dimension $k\times k$ and $m\times m$, $\operatorname{vec}(\cdot)$ is the operation that vectorises the matrices by column, and:

\begin{equation}
    \begin{cases}
        A_l=(2\lambda_{sm} \cdot \Sigma_\epsilon \cdot D_2) \\
        A_r=\sum_{b=1}^B\sum_{i=1}^n w_{b,i}\cdot(\boldsymbol p_{b,i}\boldsymbol p_{b,i}')\\
        C=\sum_{b=1}^B\sum_{i=1}^n w_{b,i}\cdot \boldsymbol{c}_i\boldsymbol p_{b,i}'
    \end{cases}
\end{equation}

where $D_2 \in \mathbb{R}^{k \times k}$ is the matrix representation of the second-derivative operator in the SFPC basis, i.e., $(D_2)_{kl} = \langle \phi_k'', \phi''_{l} \rangle_{L^2(I)}$. When $\lambda_{sm}=0$, the maximiser of $\hat{l}$ becomes $\mathbb H^* = C A_r^{-1}$. $\mathbb{H}^* = CA_r^{-1}$ can be interpreted as a weighted mean of the observations (encoded in $C$) corrected by $A_r^{-1}$, which adjusts for vertices' overlap. The diagonal entries of $A_r$ reflect average squared conditional proportions per vertex, while the off-diagonal entries capture their joint co-occurrence. Applying $A_r^{-1}$ thus mitigates vertex overlap. The derivation of~\eqref{optH} is provided in ~\ref{appendix_proofs5}.

\subsubsection*{Update of the noise covariance $\Sigma_\epsilon$}
Conditional on $(\mathbb H, \boldsymbol \mu_p, \Sigma_p)$, 
the maximization of $\hat l$ with respect to $\Sigma_\epsilon$ 
reduces to the estimation of a weighted residual covariance matrix. 
The optimiser is given by
\begin{equation}
    \Sigma_\epsilon^*=\frac{1}{\sum_b\sum_iw_{b,i}}\sum_{b=1}^B\sum_{i=1}^nw_{b,i}(\boldsymbol{c}_i-\mathbb H \boldsymbol p_{b,i})(\boldsymbol{c}_i-\mathbb H \boldsymbol p_{b,i})'
    \label{opt_sigeps}
\end{equation}
that is, the conditional-weighted covariance of the reconstruction errors.
Hence, $\Sigma_\epsilon$ captures the dispersion of the data around the current mixture representation.
\subsubsection*{Update of the proportion's parameters}
Fixing $(\mathbb H, \Sigma_\epsilon, \Sigma_p)$, 
the update of $\boldsymbol \mu_p$ corresponds to a weighted average 
in the transformed coordinate system. 
Specifically,
\begin{equation}
    \boldsymbol \mu_p^*=\frac{1}{\sum_b\sum_i w_{b,i}}\sum_b\sum_iw_{b,i} \boldsymbol \psi(p_{b,i}),
    \label{opt_mup}
\end{equation}
that is, the conditional mean of the transformed proportions.
This expression highlights that the Gaussian assumption is enforced in the Euclidean space induced by $\psi$.

Finally, fixing $(\mathbb H, \Sigma_\epsilon, \boldsymbol \mu_p)$,
the maximization with respect to $\Sigma_p$ yields a regularised
weighted covariance estimator:
\begin{equation}
        \Sigma_p^*=\frac{1}{\sum_b\sum_i w_{b,i}}\bigg(\sum_b\sum_iw_{b,i}(\boldsymbol \psi(p_{b,i})-\boldsymbol \mu_p)(\boldsymbol \psi(p_{b,i})-\boldsymbol \mu_p)'+\lambda_{tr}I_{(m-1)\times (m-1)}\bigg),
\label{opt_sigp}
\end{equation}
where the trace-penalization term $\lambda_{tr} I$ ensures numerical stability
and prevents degeneracy of the covariance matrix.

The MCEM algorithm \cite{bishop_book}, with a coordinate-wise optimization in the M-step whose steps all have a closed-form solution (as shown in Equations \eqref{optH}, \eqref{opt_sigeps}, \eqref{opt_mup}, \eqref{opt_sigp}) is reported below. The algorithm proceeds as follows. At each iteration $\texttt{it}$, a simple random sample (SRS) -- i.e., drawn 
uniformly without replacement -- of $n_s = \lceil m\alpha_r \rceil$ observations is selected, where $\alpha_r \in (0,1]$ is the batch fraction. SRS ensures that each observation has equal probability of being selected, avoiding systematic bias 
in the parameter updates. Setting $\alpha_r = 1$ recovers the full-batch version of the algorithm. For each parameter block, conditional samples $\{\boldsymbol{p}_{b,i}\}$ are generated and the corresponding closed-form update is applied in sequence: first $\Sigma_\epsilon$, then $\mathbb{H}$, then $\boldsymbol{\mu}_p$, and finally $\Sigma_p$. Convergence is assessed via the relative change in $\mathbb{H}$; the algorithm stops when this falls below a tolerance threshold $\texttt{tol}$. The algorithm supports a \textit{stochastic-batch option}: rather than using all observations at each update step, a fresh SRS batch is drawn before each parameter update (lines 4, 6, 8, 10). This introduces additional stochasticity that can help escape local optima and reduces the computational cost per iteration, at the price of a noisier gradient signal. An alternative is to draw a single batch per full iteration. The tuning parameters $\lambda_{tr}$, $\lambda_{sm}$, and $m$ are discussed in Section~\ref{sec:tuning}. The batch fraction $\alpha_r$ should be set according to the redundancy of the dataset: when observations are spatially or otherwise correlated, $\alpha_r$ can be reduced substantially without loss of information -- in the case study of Section~\ref{sec:aviris} we set $\alpha_r = 0.2$ -- whereas for non-redundant datasets $\alpha_r = 1$ is recommended. For the convergence criterion, we set \texttt{tol}$= 0.05$ (i.e., a $5\%$ relative change threshold on $\mathbb{H}$) and \texttt{max\_it}$= 50$, which proved sufficient for convergence in all simulation and case study settings considered in this work. In absence of prior knowledge, we recommend to choose as initial values $\mu_p^{(0)}=0$, $\Sigma_p^{(0)}=I_{(m-1)\times(m-1)}$. The initialisation of $\mathbb{H}^{(0)}$ and $\Sigma_\epsilon^{(0)}$ ensures that the algorithm starts aligned with the directions of maximal variation in the data: the initial vertices coincide with the first $m$ simplicial principal components, while $\Sigma_\epsilon^{(0)}$ is calibrated to the variance explained by the first discarded component, as measured by the $(m+1)$-th eigenvalue.

\renewcommand{\algorithmicensure}{\textbf{Output:}}
\begin{algorithm}[H]
\footnotesize
\begin{algorithmic}[1]
\Require $\{\mathbf c_i\}_{i=1}^n$, $\Sigma_\epsilon^{(0)}$, $\mathbb H^{(0)}$, 
         $\mu_p^{(0)}$, $\Sigma_p^{(0)}$, $D_2$, 
         $\lambda_{tr}$, $\lambda_{sm}$, $\alpha_r$, $\texttt{max\_it}$, $\texttt{tol}$
\Ensure $(\mu_p, \Sigma_p, \mathbb H, \Sigma_\epsilon)$

\State $m \gets \text{ncol}(\mathbb H^{(0)})$
\State $n_s \gets \lceil m \alpha_r \rceil$

\For{$\texttt{it}=1$ to \texttt{max\_it}}
    
    \State SRS batch $\{\mathbf c_{i_s}\}_{i_s=1}^{n_s}$
    
    \State $\Sigma_\epsilon^{(\texttt{it})} \gets 
    \text{update}_{\Sigma_\epsilon}(\cdot)$ \hfill (Eq.~\eqref{opt_sigeps})
    
    \State SRS batch $\{\mathbf c_{i_s}\}_{i_s=1}^{n_s}$
    \State $\mathbb H^{(\texttt{it})} \gets 
    \text{update}_{\mathbb H}(\cdot)$ \hfill (Eq.~\eqref{optH})
    
    \State SRS batch $\{\mathbf c_{i_s}\}_{i_s=1}^{n_s}$
    \State $\mu_p^{(\texttt{it})} \gets 
    \text{update}_{\mu_p}(\cdot)$ \hfill (Eq.~\eqref{opt_mup})
    
    \State SRS batch $\{\mathbf c_{i_s}\}_{i_s=1}^{n_s}$
    \State $\Sigma_p^{(\texttt{it})} \gets 
    \text{update}_{\Sigma_p}(\cdot)$ \hfill (Eq.~\eqref{opt_sigp})
    
    \State $\Delta_H \gets \dfrac{\|\mathbb{H}^{(\texttt{it})} - \mathbb{H}^{(\texttt{it}-1)}\|_F}{\|\mathbb{H}^{(\texttt{it}-1)}\|_F + 10^{-8}}$
    
    \If{$\Delta_H < \texttt{tol}$}
        \State \textbf{break}
    \EndIf
\EndFor
\State \Return $\{\mu_p^{(\texttt{it})}, \Sigma_p^{(\texttt{it})}, \mathbb H^{(\texttt{it})}, \Sigma_{\varepsilon}^{(\texttt{it})}\}$
\end{algorithmic}

\caption{Algorithm for MCEM estimation -- with stochastic-batch option}
\label{algo}
\end{algorithm}

\subsection{Choice of the tuning parameters}
\label{sec:tuning}
In this section we discuss the choice of the penalty parameters $\lambda_{sm}, \lambda_{tr}$, and the number of vertices $m$.
\subsubsection*{Penalty parameters}
The parameter $\lambda_{sm}$ regulates the strength of the smoothing penalization. If the data are already smooth (e.g. in our simulations in Section \ref{sec:simulations}) or if they have to keep they natural irregularity (e.g. in the case study in Section \ref{sec:aviris}), $\lambda_{sm}$ can be set equal to 0. In general, $\lambda_{sm}>0$ can be selected on the basis of the reconstruction error of the mean density process, i.e.:
 \[
\text{MSE}_{CV}(\lambda_{sm}) = \frac{1}{|V|}\sum_{i \in V} \left\|\boldsymbol{c}_i - \mathbb{H}\boldsymbol{p}_i\right\|_2^2
\]

where $V \subset \{1, \ldots, n\}$ is a held-out validation set disjoint from the training set, and $\|\cdot\|_2$ denotes the Euclidean norm in $\mathbb{R}^k$. For real case studies, whose statistical analyses are commonly supervised by experts on the scientific subject of the data, one can also  choose $\lambda_{sm}$ based on the interpretability of the results. 

Unlike $\lambda_{sm}$, which enforces smoothness at the expense of fit to the data, the penalization of $\mathrm{tr}(\Sigma_p)^{-1}$ can be interpreted as an effective increase, by a factor $\lambda_{tr}$, in the variance of each coefficient of $\psi(\boldsymbol{p})$ (see Equation \eqref{opt_sigp}). Empirically, the algorithm typically exhibits convergence in the $\|\cdot\|_2$ norm of the coefficients $\mathbb H$ to the same result, across a wide range of values of $\lambda_{tr}$, indicating a limited sensitivity of the solution to this regularization parameter when the distribution of proportions saturates near the vertices -- i.e. when 3P holds.

\subsubsection*{Number of vertices $m$}
The number of mixture components $m$ is treated as a model hyperparameter; its selection is validated empirically in \ref{appendix_proofs5} for the simulation settings considered in Section~\ref{sec:simulations}. A classical approach consists in setting $m$ equal to the number of principal components selected via the elbow rule, or via a threshold on the cumulative sum of eigenvalues. Alternatively, one can fit the model for $m \in \{2,\ldots,m_{\max}\}$ and select $m$ via reconstruction error on a hold-out validation set (analogous to $\text{MSE}_{CV}$ in Section~\ref{subsec:prob_form}), an information criterion, or domain knowledge. The sensitivity of the results to $m$ should always be assessed by comparing solutions for adjacent values.

\section{A simulation study: probability density unmixing}
\label{sec:simulations}
This section investigates the behavior of the proposed method under two relevant \textit{stress} scenarios, with increasing degree of difficulty. Each simulated dataset consists of $n = 500$ mixtures of $m$ density components (vertices) defined on the unit interval. Each component approximates a Beta distribution with varying shape parameters. Beta distributions are a well-known and flexible parametric family defined on the unit interval, making them a natural choice for simulating heterogeneous density shapes on $[0,1]$. However, as shown in Section \ref{app:exponential}, exact Beta distributions are linearly dependent in the Bayes Hilbert space, which would violate the assumptions of Section \ref{sec:theory}. To circumvent this issue, each component is defined as a sample density obtained from a histogram generated from the corresponding Beta distribution, yielding vertices that are a.s. linearly independent. The densities are then transformed using the centered log-ratio (clr) transformation, which maps them from the Bayes space to an unconstrained Hilbert space. The resulting clr functions are represented through cubic compositional B-spline basis expansions with a second-derivative roughness penalty, as in \cite{eva2025}, yielding a finite-dimensional coefficient representation. 

Mixture observations are generated by simulating compositional weights in isometric log-ratio (ilr) coordinates from the multivariate Gaussian distribution
\[
\mathcal{N}_{m-1}(\mu_p, \Sigma_p),
\]
and subsequently transforming them back to the simplex. Functional mixtures are then obtained as linear combinations of the vertices in clr space. To mimic realistic data conditions, additive multivariate Gaussian noise,
\[
\varepsilon \sim \mathcal{N}_k(\boldsymbol{0}, \Sigma_\varepsilon),
\]
is introduced in the coefficient space.

This framework generates synthetic density-valued functional observations with a known underlying structure, enabling a systematic evaluation of estimation accuracy, robustness to noise, and recovery of both mixture components and mixing proportions.

The most challenging scenarios are expected to arise when: 
(i) the residual variance increases, i.e., $\operatorname{tr}(\Sigma_\varepsilon)$ becomes larger (Study A); 
(ii) the mean of the mixing distribution $\mu_p$ moves further away from the uniform composition $[0]_{S^{m}}$ towards a corner of the simplex (Study B).

Datasets are simulated under the following configuration. The number of mixture components is set to $m = 3$. The noise covariance matrix is specified as $\Sigma_\varepsilon = \sigma^2 I_{k \times k}$, where $\sigma^2 = 0.01 \cdot \operatorname{tr}(\widehat{\Sigma}_c)$, with $\widehat{\Sigma}_c$ the sample covariance of the clr-coefficient vectors $\{\boldsymbol{c}_i\}_{i=1}^n$, unless otherwise stated (see Study~A). The mean of the mixture coefficients is set to $\mu_p = \mathbf{0}_{B^2}$ in Study A, while it varies in Study~B, in such a way to study its effect on the estimation.

The estimation phase is carried out according to Algorithm~\ref{algo}. To ensure comparability across scenarios, the following hyperparameters are kept constant in every simulation. The model distribution is initialized as follows: the mean vector is set to
\[
\mu_p^{(0)} = \boldsymbol{0},
\]
the covariance matrix of the mixture parameters to 
\[
\Sigma_p^{(0)} = I_{(m-1)\times(m-1)},
\]
and the noise covariance matrix is initialized as
\[
\Sigma_\varepsilon^{(0)} = I_{k\times k}.
\]
The matrix \(\mathbb H^{(0)}\) is initialized such that its columns correspond to the first \(m\) simplicial principal components of $\boldsymbol{c}_1,...,\boldsymbol{c}_n$ with $m$ set to $3$. The regularization parameters are fixed at
\[
(\lambda_{sm}, \lambda_{tr}) = (0,1),
\]
and the learning rate is set to \(\alpha_r = 20\%\), meaning that each update is computed using 100 randomly sampled observations.

We claim that, apart from $m = 3$, the selected hyperparameters are intentionally not optimised for each individual realization of the synthetic datasets. This choice is motivated by the need to ensure comparability across the different scenarios and, at the same time, to avoid relying on a fully automatic tuning procedure, that would be computationally expensive in an extensive simulation study. The selection of $m = 3$ should not be regarded as an artificial advantage derived from the simulation design. Although $m = 3$ corresponds to the true underlying dimensionality used to generate the data, this choice is also strongly supported empirically (see \ref{appendix_proofs5}).

\subsection{Study A: amplifying the dispersion from the convex hull}
\label{sec:simA}
In Study A, we investigate how the dispersion of the residuals affects the performance of our maximum-likelihood unmixing strategy. As $\operatorname{tr}(\Sigma_\epsilon)$ increases, the proportion of variance explained by the mixture correspondingly decreases. This makes the identification of the mixture more challenging, as the mixture structure becomes increasingly obscured by the overall variability of the data.

Figure~\ref{studyAB} (top row) illustrates a representative realization of Study A: the discrepancy between the true and estimated vertices is negligible when $\sigma^2$ equals $0.01$, corresponding to observations that are almost noiseless mixtures. However, the mismatch becomes more severe when $\sigma^2$ reaches $0.5$, that is, when the variability of the residual is comparable to half the variability of the vertices themselves. In this regime, the noise makes the estimated vertices also noisy. In situations like these the solutions can be:

\begin{itemize}
    \item Avoiding subsampling at each iteration, setting $\alpha_r=1$ -- see Algorithm \ref{algo}.
    \item Increase the smoothing penalization parameter $\lambda_{sm}$.
\end{itemize}

Figure~\ref{simstudyAB} (left) reports the $L^2$ error distribution across $50$ independent repetitions. The representative realization in Figure~\ref{studyAB} (top row) is selected as the one closest to the median error and should be interpreted as a typical outcome. Two features emerge: the median error remains roughly stable across values of $\sigma^2$, suggesting that the algorithm consistently identifies the correct vertex directions even under substantial noise; the interquartile range widens markedly as $\sigma^2$ increases, reflecting growing variability across repetitions. Two conditions are compared: in both cases, BH unmixing is applied, but the data are generated either from a Bayes Hilbert mixture (light blue boxplots) or from a linear mixture (violet boxplots). The latter case corresponds to a misspecified model, where the unmixing assumption does not match the data-generating process. As expected under misspecification, the linear mixture case yields systematically higher errors, confirming that the BH unmixing estimator is sensitive to the assumed mixing geometry.

\subsection{Study B: moving the mean proportion towards one vertex}
\label{sec:simB}
In Study B, we examine how the proximity to one particular vertex, and consequently the distance of $\mu_p$ from the uniform proportion $[0]_{S^m}$, influences the performance of our maximum-likelihood unmixing strategy. As $\mu_p$'s norm increases, at least one of the vertices becomes underepresented in the dataset of the mixtures. This aspect can make the recognition of that vertex difficult. In particular, as the proportions relative to a vertex are distributed near 0, the influence of that vertex becomes almost null. In that specific case, the dataset will be approximately a mixture of $(m-1)$ vertices. For this simulation, we consider the following $\mu_p$ -- recalling that $m=3$: \[(0,0),(0,1/2),(0,1),(0,2)\] 
Figure~\ref{studyB_mup} illustrates the four choices of $\boldsymbol{\mu}_p$, represented as points in $S^3 \subset \mathbb{R}^3$. As $\|\boldsymbol{\mu}_p\|$ increases, the mean proportion moves away from the uniform composition towards one vertex. It is important to note that for all these simulations, the starting value for $\mu_p$ in the iterative procedure is always equal to $[0]_{S^m}$. As one can easily observe in the example in Figure \ref{studyAB} (bottom row), qualitatively, the estimated vertices progressively get worse from $[0]_{S^m}$ to $\psi^{-1}((0,1))\approx(0.18,0.18,0.64)\in S^3$. When discrepancy between the proportion of one vertex is too high with respect to the ones of the other vertices, e.g. in our example for $\psi^{-1}((0,2))\approx(0.07,0.07,0.86)$, the two less represented vertices collapse into one unique vertex, becoming indistinguishable. 

Figure~\ref{simstudyAB} (right) summarises estimation accuracy across $50$ independent repetitions; the representative realization in Figure~\ref{studyAB} (bottom row) corresponds to the median error. Performance remains stable for $\boldsymbol{\mu}_p \in \{(0,0),(0,1/2),(0,1)\}$ and deteriorates sharply only at $\boldsymbol{\mu}_p = (0,2)$, where one vertex carries proportion approximately $0.86$ and the remaining two become nearly indistinguishable. This threshold behaviour suggests robustness of the estimation procedure to moderate imbalances in the mixing distribution, with breakdown occurring only when one vertex is so dominant that the others are effectively unobserved. As in Study~A, BH unmixing is applied to both conditions: data generated from a BH mixture (light blue) and data generated from a linear mixture (violet). The BH mixture case consistently yields lower errors throughout, confirming that the advantage of the correct model specification results also in this case -- regardless of $||\mu_p||$.

\begin{figure}[H]
    \centering
    \includegraphics[width=1\linewidth]{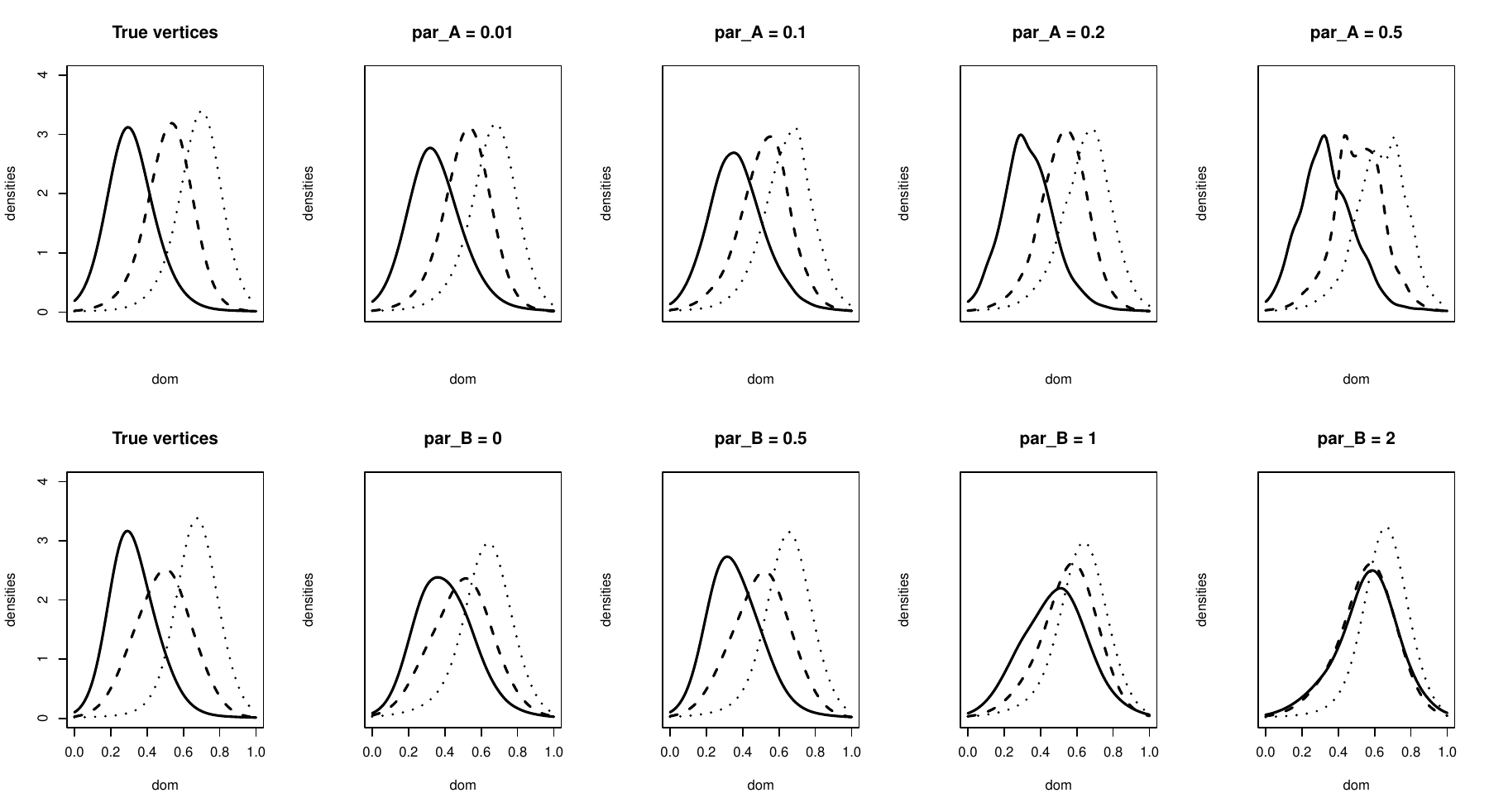}
\caption{Representative realizations of Study~A (top row) and Study~B (bottom row): true vertices (leftmost panel) and estimated vertices for increasing values of $\texttt{par\_a}=\sigma^2$ (Study~A) and $\texttt{par\_b}=\|\boldsymbol{\mu}_p\|$ (Study~B). Each realization is selected as the one closest to the median $L^2$ error over the $50$ repetitions.}
    \label{studyAB}
\end{figure}

\begin{figure}[H]
    \centering
    \includegraphics[width=1\linewidth]{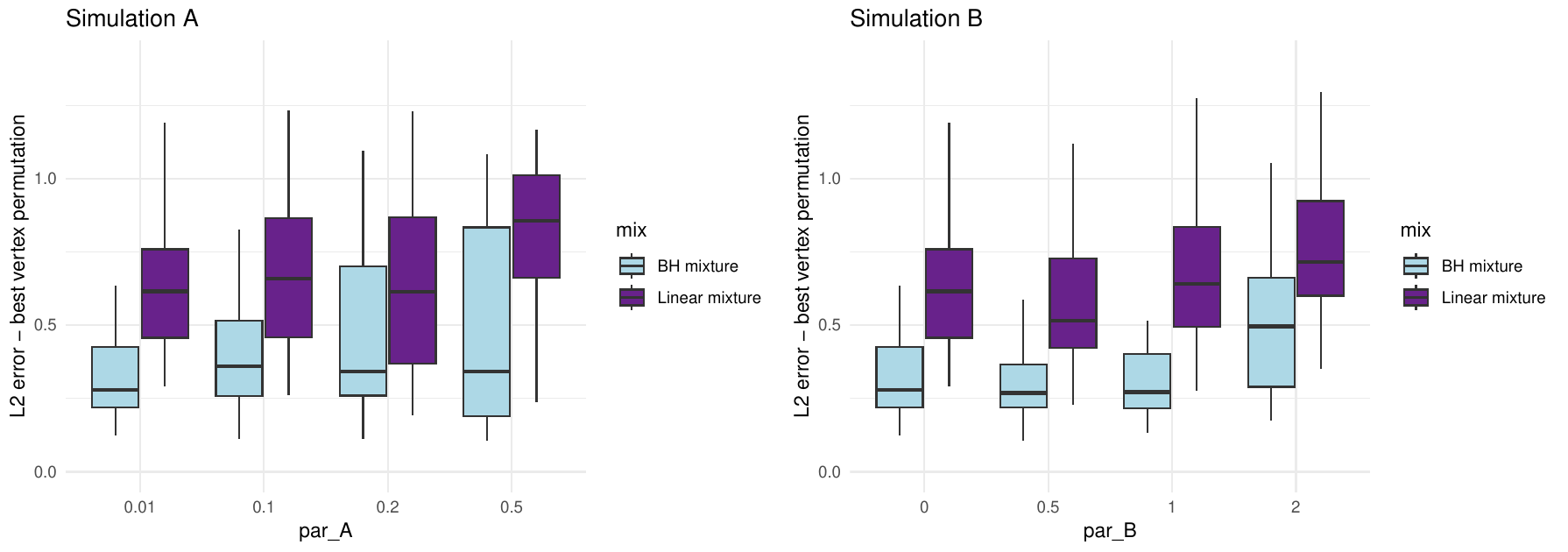}
\caption{Distribution of the $L^2$ estimation error -- defined as $\frac{1}{m}\min_\pi \sum_j \|h_j - \hat{h}_{\pi(j)}\|_{B^2}$, minimised over all vertex permutations $\pi$ -- across $50$ independent repetitions, as a function of $\sigma^2$ (Study~A, left) and $\|\boldsymbol{\mu}_p\|$ (Study~B, right). Light blue: data generated as Bayes Hilbert mixtures; violet: data generated as linear mixtures.}
    \label{simstudyAB}
\end{figure}

\section{A case study: AVIRIS Indian Pines hyperspectral unmixing}
\label{sec:aviris}
The AVIRIS Indian Pines dataset is a widely used benchmark in hyperspectral image analysis \cite{alfre2025}, providing a challenging setting for the evaluation of spectral analyses: from pixel classification  \cite{PhaneendraKumar2024} to unsupervised unmixing \cite{nascimento05}. Acquired by the Airborne Visible/Infrared Imaging Spectrometer (AVIRIS) in 1992 over agricultural areas in northwestern Indiana, the scene consists of 145 × 145 pixels with reflectance measurements across 224 spectral bands spanning the wavelength range from approximately $0.5\mu m$ to $2.5 \mu m$ \cite{PURR1947}. After removing bands affected by atmospheric absorption and noise, the dataset exhibits high spectral redundancy, subtle class variability, and significant spectral mixing due to limited spatial resolution and heterogeneous land cover. Moreover, the Indian Pines site is predominantly characterised by the coexistence of vegetation, water, and soil, with only a limited presence of artificial structures such as streets and buildings -- see Figure \ref{1993_2003}. As a consequence, each pixel is likely to simultaneously represent multiple land-cover types. The combined effect of low spatial resolution -- resulting 
in spatial blurring -- and the intrinsic heterogeneity of the scene makes Indian Pines particularly suitable for testing advanced unmixing models that go beyond linear and pointwise representations of spectra. In this setting, Bayes Hilbert mixtures are a natural modelling choice: rather than describing 
each pixel as a discrete combination of pure endmembers, they yield a single coherent distribution that integrates the spectral characteristics of all contributing land-cover types, consistently with the fuzzy nature of mixed pixels.

In this work, we interpret each hyperspectral pixel not as a finite-dimensional vector but as a functional compositional object, viewed as a distribution over the continuous wavelength domain. This perspective naturally leads to an embedding of hyperspectral data into the Bayes Hilbert space, despite it being originally designed for analysing probability density functions. This approach enables us to focus on the \textit{shape} of the spectra, rather than on their amplitude. Indeed, it is widely recognised in the literature that amplitude has no physical meaning in hyperspectral analysis.

Within this functional-compositional framework, spectral unmixing is reinterpreted as a density unmixing problem. We model each observed normalised hyperspectrum as a BH mixture -- as mixing is thought to occur within pixels, i.e. fuzzy-like, as discussed in Section \ref{BHunmixing} --  of latent \textit{endmember} densities  (vertices) defined over the wavelength domain. Figure \ref{ip_vertices} on the left shows the dataset of normalised hyperspectra. Applying the simplicial FPCA \cite{Hron16}, we select $m=4$. We set the starting values for $\{h_j\}_{j=1}^m$ as the first $m$ principal components, and for $\Sigma_\epsilon$ as $\rho_{m+1}I_{k\times k}$, where $\rho_{m+1}$ is the $(m+1)$-th eigenvalue, i.e. the variance explained by the first discarded principal component. The first $4$ principal components, obtained without imposing smoothness so as to preserve the roughness typical of hyperspectral data, and the variance explained are reported in Figure~\ref{app:pca_var} in ~\ref{app:aviris_figures}.

The choice of $m=4$ is guided by Figure~\ref{app:pca_var} (center), i.e. the plot of variance explained varying the number of retained principal components. This plot displays an elbow in $m=4$ -- retaining $96\%$ of the total variance.


Selecting $m=4$, we discuss the result of our MCEM unmixing, employed to estimate the tuple of vertices $\{h_1, h_2, h_3, h_4\}$, the mean ilr-proportion $\mu_p$ and covariance $\Sigma_p$, and residual coefficients' covariance $\Sigma_\epsilon$. In Figure \ref{ip_vertices} we show the data (left) compared to the estimated vertices (right). The vertices seem reasonable convex generators of the data. In order to interpret the results, we plot, for each pixel of the Indian Pines site, the \textit{maximum a posteriori} estimate of each component of the proportion $p_j$, for every $j \in \{1,...,m\}$, obtaining $m$ maps -- see Figure \ref{figs:bhm_casestudy}. 

Comparing the results obtained with historical imagery from 1993's Google Earth, with the ten years later map -- Figure \ref{1993_2003} -- and keeping as a reference the experts' manually labeled landcover map -- Figure \ref{manual} -- we can state that the proposed spectral unmixing strategy is capable of accurately distinguishing the following land-cover classes: 
\(h_1\) – cultivated lands of type 1, 
\(h_2\) – cultivated lands of type 2, 
\(h_3\) – impervious surfaces, including streets and building roofs,
\(h_4\) – trees, and 
By considering the sum of the proportions of $h_1$ and $h_2$, we can easily recognise the profile of the cultivated fields in 1993. The difference between $h_1$ and $h_2$ is unknown a priori. However, we can observe that $h_1$ has a stronger decay in the band $[0.55 \mu m, 1 \mu m]$, classified as the \textit{Visible to Near Infrared} (NVIR). In this band, water reacting with chlorophyll -- the green pigment, present in all green plants and in cyanobacteria, responsible for the absorption of light to provide energy for photosynthesis -- typically present a higher reflectance \cite{nasa_site}. Indeed, this band is observed by scientists in order to assess if a plant is healthy -- high reflectance -- or not. Thus, we can suggest that $h_1$ represents fields that are less healthy than the ones with a high proportion of $h_2$. Analogously, the difference could also be due to a higher presence of water in $h_2$.

By comparison, we report in ~\ref{app:aviris_figures} (Figure~\ref{app:vca_casestudy}) the result of Vertex Component Analysis (VCA, \cite{Nascimento2005}) on AVIRIS Indian Pines, obtained on the normalized hyperspectra, trained on $25\%$ of the pixels -- to reduce redundancy inducted by spatial correlation --, fixing $m=4$ to make the result comparable to ours. VCA is a statistical method for \textit{linear} unmixing that requires a pure-pixel assumption. It extracts the endmenbers $(\nu_j)_{j=1}^m$ among the statistical units, and computes the pixels' abundances vectors $\boldsymbol{a}=(a_j)_{j=1}^m$, which, in contrast to proportions, are only constrained to be positive, not to sum to 1. Consequently, two endmembers can have a high abundance on the same pixel. Employing as a reference Figures \ref{manual_summarised} and \ref{1993_2003} shows that $\nu_3$ highlights the presence of trees, similarly to $h_4$, $\nu_2$ the grass-pastures (see Figure \ref{manual}), while $\nu_1,\nu_4$, though spatially coherent, are of difficult interpretability both in terms of manually labeled map and of Google Earth's photographies (Figure \ref{1993_2003}).

The interpretable findings and the coherence of our results highlight the potential of functional-compositional representations for hyperspectral unmixing, particularly in scenarios where spectral variability and spatial blurring challenge traditional linear models. Indeed, as illustrated in Section~\ref{BHunmixing}, the Bayes Hilbert mixture naturally models spatially blurred observations, where distinct endmembers are smoothly integrated within a single pixel.

\begin{figure}[H]
    \centering
 \includegraphics[width=1\linewidth]{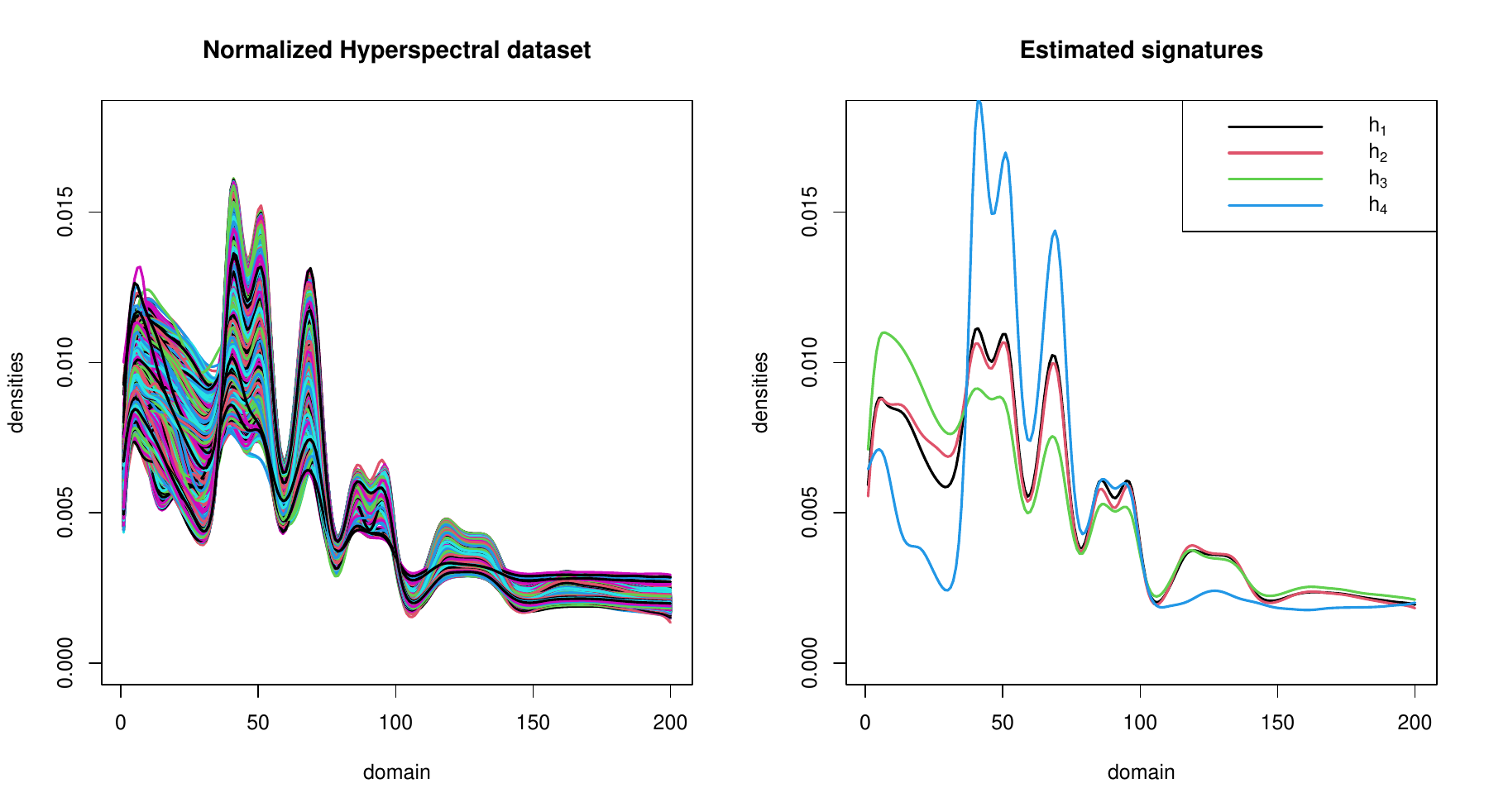}
    \caption{On the left the normalised hyperspectral mixtures, to be unmixed. On the right the $m=4$ estimated vertices.}
    \label{ip_vertices}
\end{figure}

\begin{figure}[H]
    \centering
    \includegraphics[width=1\linewidth]{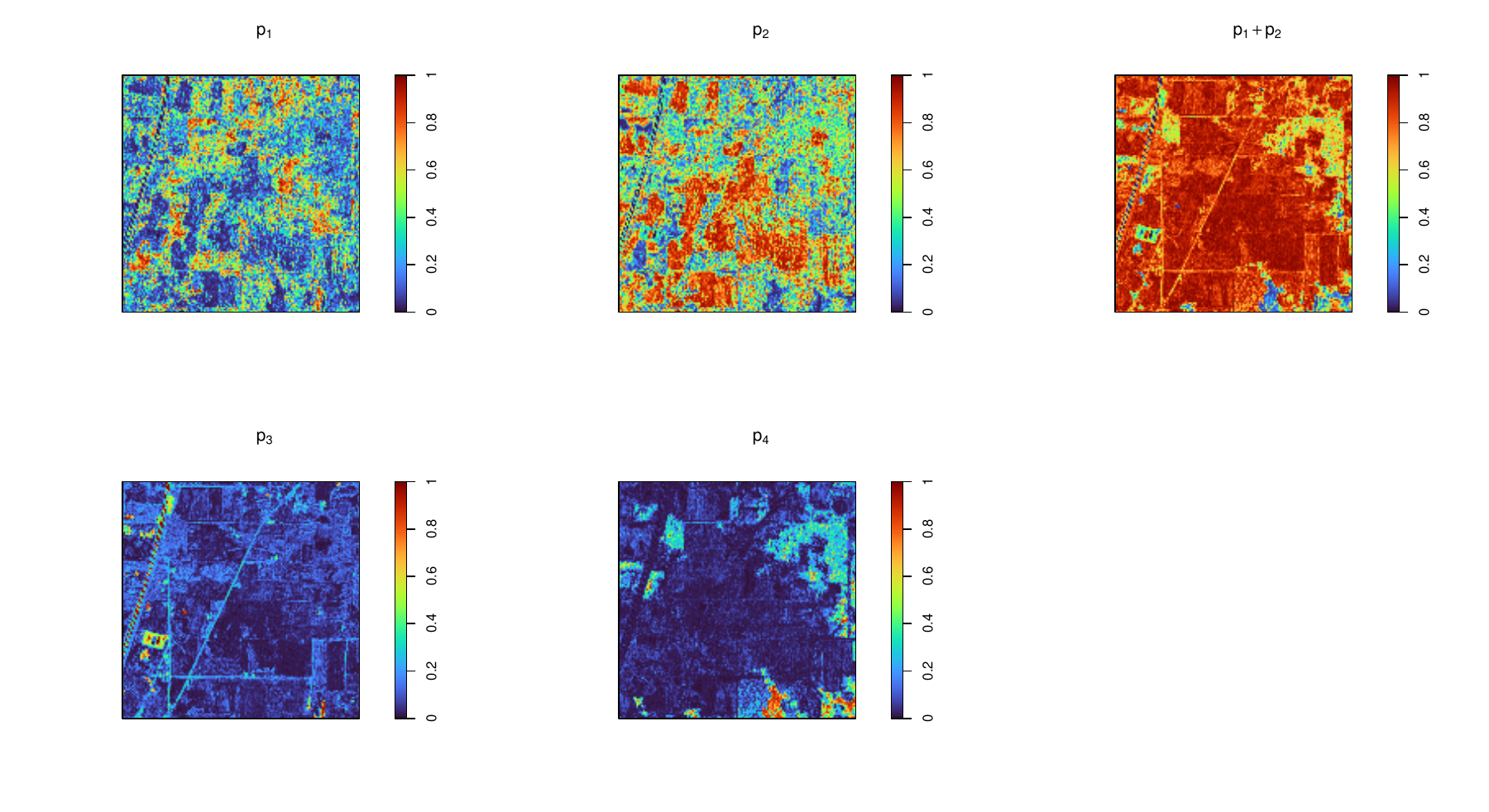}
    \caption{Unmixing in Bayes Hilbert space. The six panels show the estimated proportions $p_1, \dots, p_4 \in (0,1)$ as well as $p_1+p_2$. 
    }
    \label{figs:bhm_casestudy}
\end{figure}

\section*{Conclusion}

In this work, we introduced a framework for the analysis and unmixing of random density mixtures in the Bayes Hilbert space $B^2(I)$. We studied the identifiability of a statistically space-efficient representation of a random mixture, defined as the convex combination of vertices in $B^2(I)$, with vertex proportions following a distribution that maximizes the covered region of the simplex.
General identifiability results for mixtures in Hilbert spaces were established and subsequently applied to the Bayes Hilbert space setting.

Based on these results, we proposed a penalized maximum likelihood approach for the unmixing of Bayes Hilbert mixtures aimed at recovering the statistically space-efficient representation. The resulting optimization problem can be implemented through a coordinate-wise maximization algorithm.

The proposed methodology was illustrated through a hyperspectral data application, where observations can be naturally embedded in the Bayes Hilbert space and analyzed in terms of distributional shape rather than amplitude. A complementary simulation study based on noisy mixtures of Beta distributions further demonstrated the interpretability and practical performance of the method, particularly in the presence of fuzzy-type mixtures.

The performance of the approach may deteriorate when observations lie far from the center of the convex hull of the vertices, are very noisy or when the mixture type is misspecified, for instance when the underlying mixture is linear. An interesting direction for future research is to relax the assumption of strictly Bayes--Hilbert or linear mixtures and investigate which $\alpha$-mixture within the $\alpha$-mixture family \cite{Tsagris2016,ClarottoAllardMenafoglio2022} provides the most appropriate representation of a generic random mixture of densities. Concerning the model for the proportions, the $\operatorname*{ilr}$ framework may become restrictive under the pure-pixel assumption, since logarithmic transformations are undefined for zero components; in such cases, the discrete $\alpha$-transformations \cite{Tsagris2016, ClarottoAllardMenafoglio2022} may provide a more suitable alternative, as they admit pure proportions and remain well-defined on the boundary of the simplex. In our model, we assume that the proportion vector $\boldsymbol{p}$ follows an $\mathrm{ilr}$-normal distribution, that is, its isometric log-ratio ($\mathrm{ilr}$) transformation is multivariate normally distributed. This choice yields a unimodal distribution for $\boldsymbol{p}$. When the observed realisations of $G$ exhibit multimodality -- suggesting the presence of distinct subpopulations -- a more flexible model for the distribution of $\boldsymbol{p}$ may be considered \cite{ComasCufi2016}. 

Finally, one could consider modelling correlation --temporal, spatial, or spatio-temporal, depending on the application -- in the proportions. Notably, in the hyperspectral case study of Section~\ref{sec:aviris}, 
spatially contiguous proportion maps are obtained even without explicitly accounting for 
spatial dependence (see Figure~\ref{figs:bhm_casestudy}). 

We believe that the Bayes Hilbert space setting provides a principled and interpretable 
foundation for fuzzy-like mixture modelling of distributional data, and that the extensions outlined above represent promising directions for broadening its scope and applicability.








\section*{Acknowledgments}
GP and AM acknowledge the support provided by the European Commission under the “HORIZON-CL4-2021-\\ DIGITALEMERGING-01 project BioProS - Biointelligent Production Sensor to Measure Viral Activity” (grant agreement no. 101070120), 2022-2026”. GP and AM acknowledge the initiative “Dipartimento di Eccellenza 2023–2027”, MUR, Italy, Dipartimento di Matematica, Politecnico di Milano. Funded by the Deutsche Forschungsgemeinschaft (DFG, German Research Foundation) - Project number 513634041. GP acknowledges Maarten Jung, Manuel Pfeuffer, Johannes M. Feeser and Dr. Georg Keilbar for the fruitful scientific exchanges on functional and density data analysis, and Alfredo Gimenez Zapiola for sharing his expertise on hyperspectral unmixing.

\bibliography{els-cas-templates/biblio_check}
\appendix
\section{BH mixtures in exponential families}
\label{app:exponential}
As an illustrative example, we briefly discuss the special case of distributions belonging to the \textit{exponential families}. We already know that the set of the exponential family of distributions is an affine subspace of $B^2(I)$ \cite{VanDenBoogaart2014}. Here, we want prove that the Bayes Hilbert mixture operation preserves the \textit{distribution} of the exponential family: if all elements in the convex hull of \(\{h_j\}\subset B^2\) belong to the same distribution, then the Bayes Hilbert mixture also belongs to that distribution, yielding a density characterized by updated parameters.

\begin{proposition}
Let $\mathcal F$ be the set of all densities obtained by truncating a fixed exponential family to a compact support $I' \subset I$.
Then $\mathcal F$ is a convex subset of $B^2(I)$.
\label{expo_fam_prop}
\end{proposition}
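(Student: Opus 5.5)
We fix a representation of the exponential family: there is a base measure density $b\ge 0$, a sufficient statistic $T:I'\to\mathbb{R}^d$ and a natural parameter space $\Theta\subseteq\mathbb{R}^d$. Every element of $\mathcal F$ is then the (normalised) density
\[
f_\theta(x)=\frac{b(x)\exp(\theta^\top T(x))\,\mathbbm{1}_{I'}(x)}{\int_{I'} b(s)\exp(\theta^\top T(s))\,ds},\qquad \theta\in\Theta .
\]
On $I'$ the truncated family is a full exponential family, so its natural parameter space $\Theta$ (all $\theta$ with finite normaliser) is convex. This follows from H\"older's inequality applied to the log-partition function. We also need $\mathcal F\subset B^2(I)$, i.e.\ $\log f_\theta\in L^2$. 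When $I'=I$ we require $\log b$ and $T$ to be in $L^2(I)$, which is needed anyway for $\mathcal F$ to be a subset of $B^2(I)$. If $I'\subsetneq I$, we read the proposition as concerning $B^2(I')$, with densities on $I'$ viewed through restriction. This is the only way zero values outside $I'$ are compatible with $\log f\in L^2$. We will state this interpretive convention explicitly at the start.

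Convexity is meant in the Bayes Hilbert sense of Section~\ref{BHunmixing}. We must show that for $f_{\theta_1},\dots,f_{\theta_m}\in\mathcal F$ and $\boldsymbol p\in S^m$, the combination $\bigoplus_j p_j\odot f_{\theta_j}$ lies in $\mathcal F$. It suffices to work up to the equivalence $=_{B^2}$. By the definitions in \eqref{operations},
\[
\bigoplus_{j=1}^m p_j\odot f_{\theta_j} =_{B^2} \prod_{j=1}^m f_{\theta_j}^{p_j} =_{B^2} \prod_{j=1}^m \bigl(b\,e^{\theta_j^\top T}\bigr)^{p_j} = b^{\sum_j p_j}\exp\Bigl(\bigl(\textstyle\sum_j p_j\theta_j\bigr)^{\top}T\Bigr).
\]
The normalising constants are positive scalars, so they are absorbed into the equivalence relation. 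The constraint $\sum_j p_j=1$ is exactly what makes the base measure reappear with exponent one. This is the key point: for a general linear combination we would get $b^{\sum p_j}$, which need not belong to the family.

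Set $\bar\theta=\sum_j p_j\theta_j$. Since $\Theta$ is convex, $\bar\theta\in\Theta$, so the product above is integrable on $I'$. After normalisation it equals $f_{\bar\theta}$, the representative of the class in $B^2$. Hence the BH mixture is again a member of the family, with mixed natural parameter $\bar\theta$. Equivalently, in clr coordinates, $\operatorname{clr}(f_\theta)=\operatorname{clr}(b)+\theta^\top\operatorname{clr}(T)$ is affine in $\theta$. So $\operatorname{clr}(\mathcal F)$ is the affine image of a convex set, which is convex, and clr is the isometry of Section~\ref{bayes_hilbert_space}. This gives a one-line alternative argument.

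The computation itself is routine. The main obstacle is the technical setup: pinning down $\Theta$ as the natural parameter space and checking it is convex, and making sure $\log f_\theta\in L^2$ so that $\mathcal F\subset B^2$. For the convexity of $\Theta$, we will use H\"older: $\int b\,e^{(\sum p_j\theta_j)^\top T}\le\prod_j\bigl(\int b\,e^{\theta_j^\top T}\bigr)^{p_j}<\infty$. For the $L^2$ condition, note that $\log f_{\bar\theta}=\log b+\bar\theta^\top T-\text{const}$ is a finite combination of $L^2$ functions on the compact $I'$.
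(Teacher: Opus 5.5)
Your proof is correct, and its core computation is the same as the paper's: up to $=_{B^2}$ the BH combination is the weighted geometric mean, $\sum_j p_j=1$ makes the base measure reappear with exponent one, and the natural parameters are averaged. The two arguments differ in how they guarantee that the averaged parameter still indexes a member of the family. The paper keeps a general parametrisation $\eta(\theta)$ and obtains $\theta_{12}$ with $\eta(\theta_{12})=p\,\eta(\theta_1)+(1-p)\,\eta(\theta_2)$ from continuity of $\eta$ and the intermediate value theorem. That reasoning works only when the natural parameter is scalar and ranges over an interval. You instead parametrise naturally, identify $\mathcal F$ with the full natural family, and get convexity of $\Theta$ from H\"older. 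This covers $d$-dimensional families and gives integrability of $b\,e^{\bar\theta^\top T}$ at the same time. The identification is genuinely needed, not just convenient. In a curved family such as $N(\mu,\mu^2)$, the natural parameters $(1/\mu,-1/(2\mu^2))$ lie on a parabola. The BH midpoint of the members with $\mu=1$ and $\mu=2$ has natural parameter $(3/4,-5/16)$, which is off that parabola. So convexity of the set of natural parameters is exactly the hypothesis the statement needs. If $\mathcal F$ is read as consisting only of truncations of members of the original family, use the original natural parameter space instead of that of the truncated family. It is convex by the same H\"older bound, and nothing else changes. Your clr reformulation, which exhibits $\operatorname{clr}(\mathcal F)$ as an affine image of a convex set, makes explicit the affine-subspace fact the paper quotes. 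Your convention for $I'\subsetneq I$ also addresses a point the paper's proof passes over: a density vanishing on a set of positive measure has $\log f\notin L^2(I)$, so the statement only makes sense in $B^2(I')$.
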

Proof:
Convex combinations in $B^2(I)$ are defined through the perturbation and powering operations
$\oplus$ and $\odot$. Let $f_1,f_2 \in \mathcal F$ have the same compact support $I' \subset I$, and let $p \in [0,1]$.
Assume
\[
\begin{cases}
f_1(x)=h(x)\exp\!\big(\eta(\theta_1)\cdot T(x)-A(\theta_1)\big)\,\mathbbm{1}_{I'}(x),\\[4pt]
f_2(x)=h(x)\exp\!\big(\eta(\theta_2)\cdot T(x)-A(\theta_2)\big)\,\mathbbm{1}_{I'}(x).
\end{cases}
\]

where $\mathbbm{1}_I(x)$ denotes the indicator function of the set $I$ at $x$. For $x \in I'$, the convex combination in $B^2(I)$ is
\[
(p \odot f_1)\oplus\big((1-p)\odot f_2\big).
\]
By the algebraic properties of $B^2(I)$, this is equivalent to
\begin{equation} \begin{split} &[p\odot (h(x)\cdot \exp(\eta(\theta_1)\cdot T(x)-A(\theta_1)))]\oplus\\ &\oplus[(1-p)\odot h(x)\cdot \exp(\eta(\theta_2)\cdot T(x)-A(\theta_2))]=_{B^2}\\ =_{B^2}& h(x)^{(p+(1-p))}\exp((p\eta(\theta_1)+(1-p)\eta(\theta_2))T(x)+\\ -&(pA(\theta_1)+(1-p)A(\theta_2))]=_{B^2}\\ =_{B^2}& h(x) \exp(\eta(\theta_{12})T(x)-A(\theta_{12}))=_{B^2}\\
=_{B^2}&h(x) \exp(\eta(\theta_{12})T(x)) \end{split} \label{expo_fam} \end{equation}

The existence of $\theta_{12}$ follow from the continuity of the
natural parameter map $\eta$ (by the intermediate value theorem). Hence, the convex combination belongs to the same exponential family truncated to $I'$,
and therefore lies in $\mathcal F$.

\qed

Hence, differently from the probabilistic mixture,  which would generally exhibit multimodality, the BH mixture produces a distribution of the same exponential family but with modified parameters.

\section{Propositions for identifiability of the minimal representation}
\label{app:identifiability}
This appendix shows and discuss the details regarding the results, presented in Section \ref{identifiability}, on identifiability of the representation of a random mixture $G$ through a set of vertices $\{h_j\}_{j=1}^m$ and the distribution of the random proportion. Here, we assume that $G$ is a \textit{pure} random mixture, meaning that none of the possible realisations of $G$ are allowed to fall outside the convex hull of a set of $m$ vertices, where $m$ is fixed -- known or estimated. All the results in this Section are valid for any Hilbert space, although here presented in $B^2(I)$.

 As a first step, in Section \ref{part_ident} we prove that, if $G$ is a random mixture, any set of vertices $\{h_j\}_{j=1}^m$ can be associated, according to $G$, with a unique random proportion $\boldsymbol{p}:\Omega\to S^m$. We will denote by $\mathfrak{D}(S^m)$ the set of all probability distributions on $S^m$, and write $\boldsymbol{p}\sim D\in \mathfrak{D}(S^m)$.
 
\subsection{Partial identifiability}
\label{part_ident}

\begin{figure}[H]
        \centering
        \includegraphics[scale=0.6]{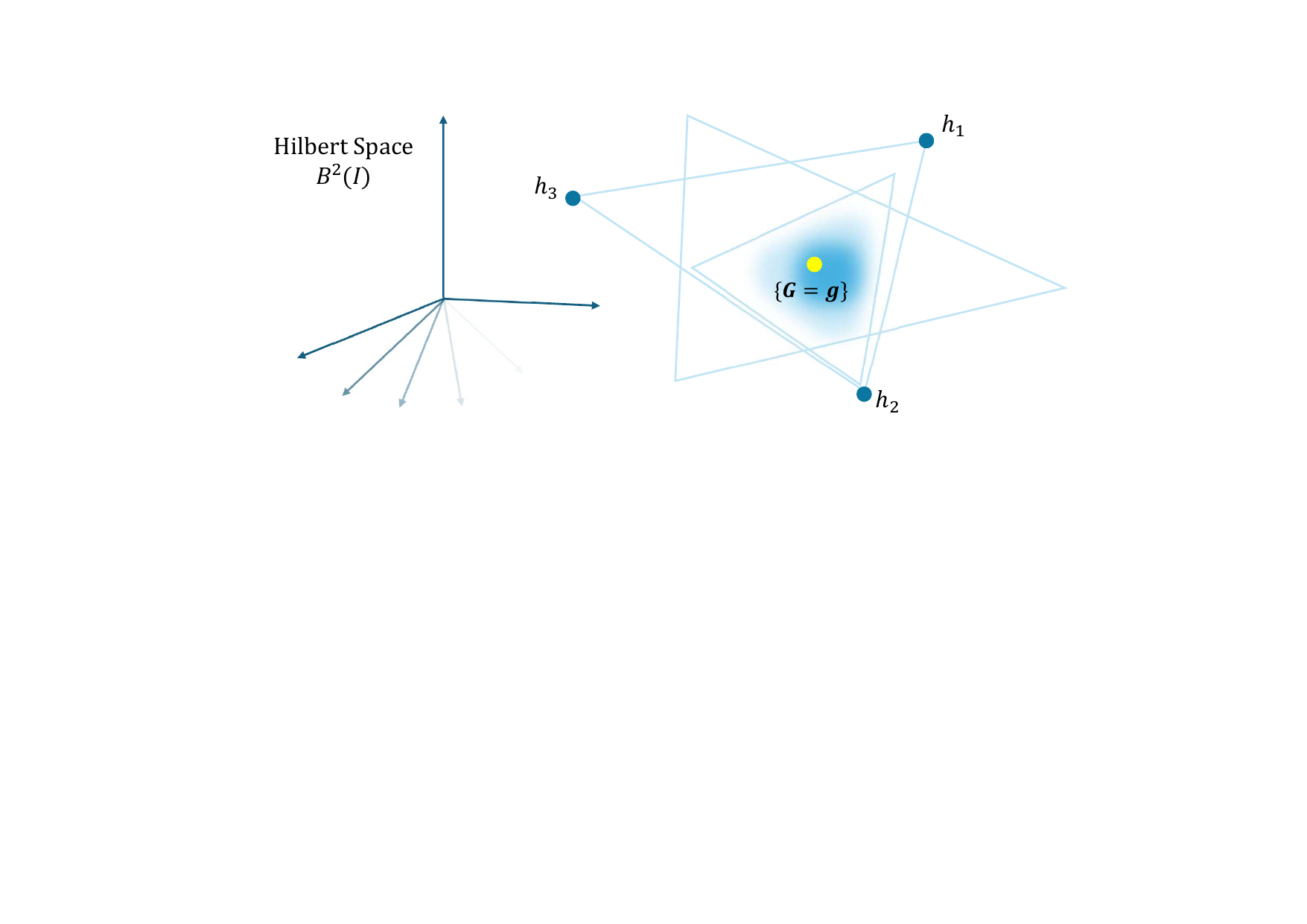}
\caption{Schematic representation of a mixture $G$ lying perfectly in $\mathcal{M}(h_1,h_2,h_3)$. The points $h_1, h_2, h_3$ are represented as points in the Bayes Hilbert space, and the triangle with vertices $h_1, h_2, h_3$ illustrates a convex hull. The blue cloud represents the distribution of $G$, with darker blue indicating higher probability mass, while the yellow point corresponds to a realization of $G$. Both the blue region and the realization of $G$ lie within the depicted convex hull with probability 1. The other triangles, with unmarked vertices, represent alternative convex hulls in which $G$ also lies with probability 1, but corresponding to a different distribution of the proportions. In this sense, the representation of $G$ as a random mixture is not unique.}

                \label{intro_mix}
    \end{figure}

Let us assume that $G$ is a random mixture of $\{h_j\}_{j=1}^m$, according to Definition \ref{def_mix}. Then, by construction, for any $\omega \in \Omega$ we may find $\boldsymbol p(\omega)$ such that
\[ 
G(\omega)= \bigoplus_{j=1}^m p_j(\omega)\odot h_j.\]
Hence, the random mixture $G$ identifies at least one random variable $\boldsymbol{p}:\Omega\to S^m$. Proposition \ref{prop:unique_p} states that the distribution of $\boldsymbol{p}$, given the distribution of $G$ and $\{h_j\}_{j=1}^m$, is unique. The proof is given in the Appendix. 
\begin{proposition}[Identifiability in distribution of $\boldsymbol p|\{h_j\}_j$]
    Let $G$ be a random mixture of $\{h_j\}_{j=1}^m$, with $h_j$ non-random elements in $B^2(I)$, $j=1,...,m$, i.e. for $\omega\in \Omega$: 
    \[G(\omega)=\bigoplus_{j=1}^m p_j(\omega)\odot h_j\] for some  $\boldsymbol{p}\sim D\in \mathfrak{D}(S^m)$. If it also holds: 
    \[G\operatorname*{=}^{d}\bigoplus_{j=1}^m \bar p_j\odot h_j\] for some random vector $\boldsymbol{\bar p}$, then $\boldsymbol{p}\operatorname*{=}^d \boldsymbol{\bar p}$.
    \label{prop:unique_p}
\end{proposition}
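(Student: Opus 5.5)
The strategy is to show that, once the vertices are fixed, the map from proportions to mixtures is injective and measurable with a measurable inverse. Then $\boldsymbol p$ is a deterministic function of $G$, and equality in distribution of $G$ transfers to equality in distribution of $\boldsymbol p$.

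Concretely, define the map
\[
T\colon S^m\to B^2(I),\qquad T(\boldsymbol q)=\bigoplus_{j=1}^m q_j\odot h_j .
\]
After applying $\operatorname{clr}$, which is an isometric isomorphism onto $L^2_0(I)$, we have
\[
\operatorname{clr}(T(\boldsymbol q))=\sum_j q_j\operatorname{clr}(h_j).
\]
So $T$ corresponds to an affine map with values in the finite-dimensional subspace $V=\operatorname{span}\{\operatorname{clr}(h_j)\}$.

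\textbf{Step 1 (injectivity).} Suppose $T(\boldsymbol q)=T(\boldsymbol q')$. Then $\sum_j (q_j-q'_j)\operatorname{clr}(h_j)=0$. The $h_j$ are linearly independent in the Bayes Hilbert sense, so their clr images are linearly independent in $L^2_0(I)$, and hence $\boldsymbol q=\boldsymbol q'$. Here we do not even need the affine-independence subtlety: full linear independence is assumed, so the constraint $\sum_j q_j=1$ is not needed for injectivity.

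\textbf{Step 2 (measurable inverse).} $T$ is continuous, being linear in finitely many coordinates. $S^m$ is compact, so $T$ is a homeomorphism onto its image $K=\mathcal M(\{h_j\})$, which is compact and therefore Borel in $B^2(I)$. Alternatively, the inverse can be written explicitly: with the Gram matrix $M_{jl}=\langle h_j,h_l\rangle_{B^2}$, which is invertible by independence, we have
\[
T^{-1}(g)=M^{-1}\bigl(\langle g,h_l\rangle_{B^2}\bigr)_l .
\]
This formula defines a continuous map $L\colon B^2(I)\to\mathbb R^m$ on the whole space, and it agrees with $T^{-1}$ on $K$.

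\textbf{Step 3 (transfer of distribution).} By hypothesis $G=T(\boldsymbol p)$ pointwise, so $\boldsymbol p=L(G)$. Set $\bar G=T(\bar{\boldsymbol p})$; then $\bar{\boldsymbol p}=L(\bar G)$ as well, since $\bar{\boldsymbol p}$ takes values in $S^m$. The hypothesis $G\stackrel{d}{=}\bar G$ and the measurability of $L$ then give
\[
\mathbb P(\boldsymbol p\in A)=\mathbb P(G\in L^{-1}(A))=\mathbb P(\bar G\in L^{-1}(A))=\mathbb P(\bar{\boldsymbol p}\in A)
\]
for every Borel set $A$.

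The main point needing care is the implicit assumption that $\bar{\boldsymbol p}$ is $S^m$-valued, so that $L$ recovers it. If $\bar{\boldsymbol p}$ were an arbitrary $\mathbb R^m$-vector, $L$ still inverts the linear map on all of $\mathbb R^m$, so the argument goes through regardless. Measurability of $G$ as a random element of the separable Hilbert space is taken from the setup.
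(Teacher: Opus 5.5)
Your proof is correct and rests on the same idea as the paper's: the law of $\boldsymbol p$ is transported through the injective mixing map $\boldsymbol q\mapsto\bigoplus_j q_j\odot h_j$. The paper argues by contradiction with the image set $A_G$, while you argue directly via the continuous left inverse $L$, and in doing so you make explicit the two facts the paper's ``by construction'' equality $\mathbb{P}(G\in A_G)=\mathbb{P}(\boldsymbol p\in A)$ silently uses: injectivity, which follows from linear independence of the $h_j$, and Borel measurability of the relevant subset of $B^2(I)$.
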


\textbf{Proof of Proposition \ref{prop:unique_p}}: By contradiction, let us assume that $\boldsymbol{p}\operatorname*{\not=}^d \boldsymbol{\bar p}$. Then, $\exists A \in S^m: \mathbb{P}(\boldsymbol{p}\in A)<\mathbb{P}(\boldsymbol{\bar p}\in A)$. Let us define $A_G:=\bigg\{g=\bigoplus_{j=1}^m p_j\odot h_j\bigg| \boldsymbol{p}\in A \bigg\}$ and $\forall \omega\in\Omega$, $\bar G(\omega):= \bigoplus_{j=1}^m \bar p_j(\omega)\odot h_j$. Then, by construction: \[
\mathbb{P}(G\in A_G)=\mathbb{P}(\boldsymbol{p}\in A)<\mathbb{P}(\boldsymbol{\bar p}\in A)=\mathbb{P}(\bar G \in A_G)
\]
And this contradicts the assumption that $G\operatorname*{=}^d \bar{G}$.
\qed 

\subsection{The vertices set}
Not every $m$-tuple of vertices in $B^2(I)$ generates a valid convex hull for $G$ (see Figure \ref{image_vgm}). To represent the distribution of $G$ as a random mixture, we therefore restrict attention to those $m$-tuples whose elements are linearly independent and for which $G$ lies in their convex hull with probability 1 -- that is, $supp(G) \subseteq \mathcal{M}(\{h_j\}_{j=1}^m$. The collection of such $m$-tuples defines the subset $\mathcal{H}^m_G \subset B^2(I)\times ... \times B^2(I)= B^{2,m}(I)$ -- Definition \ref{def:H^m_g}. Furthermore, we show that $\mathcal{H}^m_G$ is closed in $B^{2,m}(I)$.

\label{vertices_set}
\begin{definition}
\label{def:H^m_g}
    We call $\mathcal{H}^m_G$ the subset of $B^{2,m}(I)$ defined as:
    \[
    \mathcal{H}^m_G:=\{(h_1,...,h_m)\in B^{2,m}(I): \dim(\operatorname{span}_{B^2(I)}(h_1,...h_m))=m,\]
    \[\mathbb{P}\left(G\in {\mathcal{M}(\{h_j\}_{j=1}^m)}\right)=1\}
    \]
\end{definition}
where $\operatorname{span}_{B^2(I)}$ is the span in the Bayes Hilbert space sense.
\begin{figure}[H]
    \centering
    \includegraphics[scale=0.5]{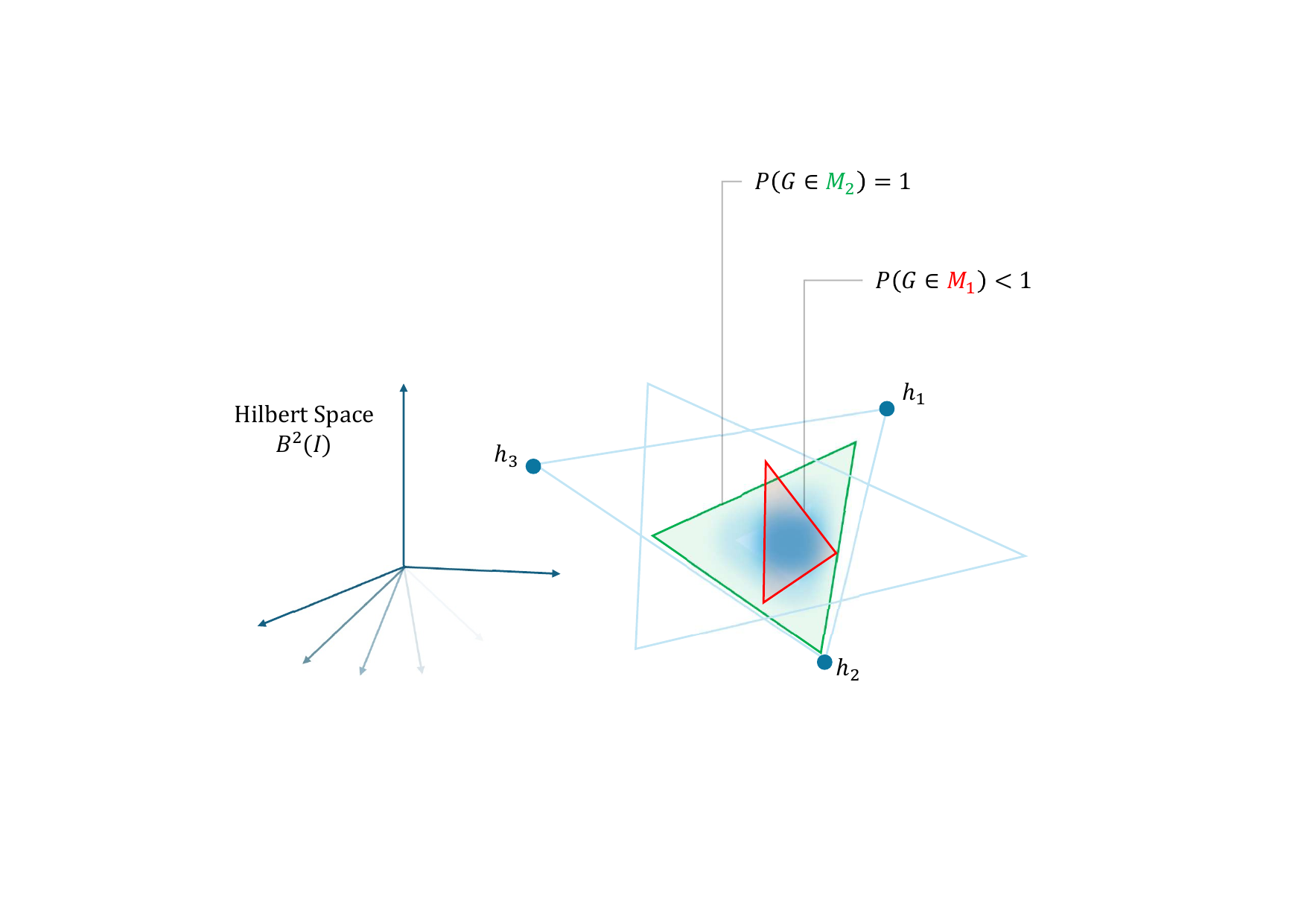}
    \caption{All triangles represent generic convex hulls. The green convex hull $(M_2)$ is an example of element of $\mathcal{H}^m_G$, as $supp(G)$ is contained in it. Contrarily, the red one $(M_1)$ does not entirely contain $supp(G)$, consequently it does not belong to $\mathcal{H}^m_G$.}
    \label{image_vgm}
\end{figure}

Now, let us provide $B^{2,m}(I)$ with the norm $||h||_{B^{2,m}(I)}=\sum_{j=1}^m ||h||_{B^2(I)}$. Hilbert spaces are by construction complete metric spaces, with respect to the norm induced by their own scalar product. Moreover, the cartesian product of $m$ Bayes Hilbert spaces, $B^{2,m}(I)$, provided with a scalar product equal to the sum of the marginal scalar products $||\cdot||_{B^{2}(I)}$, is itself a Hilbert space. Thus, $(B^{2,m}(I), ||\cdot||_{B^{2,m}(I)})$ is closed in itself. As a consequence of the previous proposition, we obtain the result in Proposition \ref{prop:close_H_G^m}. The role of this result is being clarified in Section \ref{min_repr}, Theorem \ref{minimal_theorem}. 

\begin{proposition}
\label{prop:close_H_G^m}
    The set $B^{2,m}(I)$ is closed in $(B^{2,m}(I)$, $||\cdot||_{B^{2,m}(I)})$
\end{proposition}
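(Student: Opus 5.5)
The statement is purely topological, so the plan is to argue directly from the definition of a closed set rather than through any of the probabilistic machinery of Section~\ref{identifiability}. First, I will make precise the metric on $B^{2,m}(I)$. I read the norm as $\|(h_1,\dots,h_m)\|_{B^{2,m}(I)}=\sum_{j=1}^m\|h_j\|_{B^2(I)}$, with the index $j$ on each summand. I will check that this is a norm: positivity, homogeneity with respect to $\odot$, and the triangle inequality with respect to $\oplus$ all follow componentwise from the corresponding properties of $\|\cdot\|_{B^2(I)}$. It is equivalent to the Hilbert product norm $\big(\sum_j\|h_j\|^2_{B^2(I)}\big)^{1/2}$, by the elementary inequalities between the $\ell^1$ and $\ell^2$ norms on $\mathbb{R}^m$. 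Hence both norms induce the same topology on $B^{2,m}(I)$.

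The core step is then immediate. In any topological space $X$, the whole space $X$ is closed, because its complement $X\setminus X=\emptyset$ is open by the axioms of a topology. Applied with $X=B^{2,m}(I)$ and the norm topology above, this gives the claim. Equivalently, in sequential form, which is the form used later in Section~\ref{min_repr}: every sequence in $B^{2,m}(I)$ that converges in $\|\cdot\|_{B^{2,m}(I)}$ has its limit in $B^{2,m}(I)$, by the definition of convergence within that space. Completeness, inherited componentwise from the completeness of $B^2(I)$ (via the clr isometry onto $L^2_0(I)$), is not needed for closedness itself. I would still record it, since it is what makes Cauchy sequences of vertex tuples converge in the compactness-type arguments that follow.

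I expect no real obstacle in the literal statement. The only delicate point is interpretive: the surrounding text announces closedness of $\mathcal{H}^m_G$, not of $B^{2,m}(I)$. For that stronger claim I would take $(h^{(n)})\subset\mathcal{H}^m_G$ converging to $h$. The containment condition $\mathbb{P}(G\in\mathcal{M}(h^{(n)}))=1$ passes to the limit, because each realisation's proportion vectors lie in the compact simplex $S^m$. Extracting a convergent subsequence of $\boldsymbol{p}^{(n)}(\omega)$ then shows $G(\omega)\in\mathcal{M}(h)$ for each $\omega$ outside a countable union of null sets. The linear-independence condition, however, is open rather than closed, so a limit may degenerate. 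The main difficulty would therefore be to add a nondegeneracy hypothesis, e.g.\ $\dim\operatorname{span}(\mathrm{supp}(G))=m-1$ in the affine sense, which forces limiting vertices to remain affinely independent.
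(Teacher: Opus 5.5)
For the statement as literally written, your argument is correct and complete. The ambient space is closed in itself; equivalently, every $\|\cdot\|_{B^{2,m}(I)}$-convergent sequence in $B^{2,m}(I)$ has its limit there. Your reading of the norm as $\sum_j\|h_j\|_{B^2(I)}$ is the intended one, and completeness indeed plays no role.

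The paper's proof, however, is really an argument for the closedness of $\mathcal{H}^m_G$ announced just before the statement, and on that claim your route differs. The paper takes a convergent sequence of tuples and asserts that the limit lies in ``$\mathcal{H}^m$'' by closedness, so the linear-independence requirement is taken for granted. It then transfers $\mathbb{P}(G\in\mathcal{M}_\iota)=1$ to $\mathcal{M}^*$ by cases: increasing hulls; decreasing hulls, via continuity of $\mathbb{P}$ from above; and a non-monotone case, handled by extracting the hulls that contain $\mathcal{M}^*$ and intersecting them iteratively.

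You instead work on the single full-measure event $\bigcap_n\{G\in\mathcal{M}(h^{(n)})\}$. On it you use compactness of $S^m$ and continuity of $(\boldsymbol{p},h)\mapsto\bigoplus_j p_j\odot h_j$, pointwise in $\omega$. This handles every convergent sequence at once. It also avoids the fragile points of the paper's non-monotone case, which needs a subsequence of hulls containing $\mathcal{M}^*$ (there may be none). That case also relies on the identity $\bigcap_\iota\mathcal{M}_\iota=\mathcal{M}^*$, where in general only the inclusion $\subseteq$ holds. That inclusion is all that is needed, and it is exactly what your compactness step proves.

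Your remark that linear independence is an open condition pinpoints a genuine defect in the paper's claim. For $G\equiv g_0\neq[0]_{B^2}$ and $m\ge 2$, admissible tuples can shrink to $(g_0,\dots,g_0)$, so $\mathcal{H}^m_G$ need not be closed. Your nondegeneracy hypothesis repairs this, but it gives affine rather than linear independence, so one more line is needed. If $\dim\operatorname{aff}(\mathrm{supp}\,G)=m-1$, every admissible tuple and the limit all have affine hull equal to $\operatorname{aff}(\mathrm{supp}\,G)$. That hull avoids $[0]_{B^2}$ whenever $\mathcal{H}^m_G\neq\emptyset$, so the limit vertices are linearly independent in the sense of the paper's definition.
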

\textbf{Proof of Proposition \ref{prop:close_H_G^m}}: Consider a sequence of $m$-tuples of vertices $\{h_j^{(\iota)}\}_{j=1}^m$, $\iota\in\mathbb{N}$, converging to $\{h_j^*\}_{j=1}^m$. From the closure of $\mathcal{H}^m$, we know that $\{h_j^*\}_{j=1}^m\in \mathcal{H}^m$. However, we have to prove that, if $\mathcal{M}^*:=\mathcal{M}(\{h^*_j\}_j)$, $\mathbb{P}(G\in \mathcal{M}^*)=1$. We proceed as follows.

The sequence of $m$-tuples of vertices induces equivalently a sequence of convex hulls \[\mathcal{M}_\iota:=\mathcal{M}(\{h_j^{(\iota)}\}_j)_\iota\subset\mathcal{H}_G^m\subset \mathcal{H}^m\]

converging to $\mathcal{M}^*$. If the sequence is increasing, namely $\mathcal{M}_{\iota}\subset \mathcal{M}_{\iota+1}$, then, straightforwardly $\mathbb{P}(G\in \mathcal{M}^*)\geq \mathbb{P}(G\in \mathcal{M}_\iota)=1$. If it is decreasing, namely $\mathcal{M}_{\iota+1}\subset \mathcal{M}_{\iota}$, then, being $\mathbb{P}(G\in {\mathcal{M}}_\iota)=1$ for any $\iota\in\mathbb N_0$, and following the properties of the probability as a measure,
\[
1=\lim_{\iota\to\infty} \mathbb{P}(G\in {\mathcal{M}}_\iota)=\mathbb{P}(\bigcap_{\iota=1}^\infty\{G\in {\mathcal{M}}_\iota\})= \mathbb{P}(G\in \mathcal{M}^*)
\]
This implies that $\{h_j^*\}_{j=1}^m\in \mathcal{H}_G^m$. Lastly, if the sequence is neither decreasing nor increasing, we can first extract the subsequence of convex hulls that contains $\mathcal{M}^*$. The elements of these sequence are ordered according to a new index $\iota_1,\iota_2,...$. Now, we proceed in this way:
\begin{itemize}
    \item Assign the first element: $\mathcal{M}'_{\iota_1}=\mathcal{M}_{\iota_1}$.
    \item Update all the other elements of the sequence intersecting them with $\mathcal{M}'_{\iota_1}$. 
    \item Discard the empty sets.
    \item Reassign the indexes $\iota_2,\iota_3,...$ respecting the previous order.
\end{itemize}
Then, repeat for $\iota_2,\iota_3,...$, obtaining a sequence of subsets of $\mathcal{H}$. Then, since:
\[\mathcal{M}^*\subset\bigcap_{v=1}^\infty \mathcal{M}'_{\iota_v}\subset\bigcap_{\iota=1}^\infty\mathcal{M}_\iota=\mathcal{M}^*\]
the intersection of these decreasing subsets converges to $\mathcal{M}^*$. Hence, as before, $\{h_j^*\}_{j=1}^m\in \mathcal{H}_G^m$.
\qed 

\subsection{A map from vertices to proportions}
We next aim to identify the probability distribution $D$ of the random proportion $\boldsymbol{p}\sim D$, knowing only $G$ and the set of vertices $\{h_j\}_{j=1}^m$. Consequently, we are interested in measuring the distance between proportions in terms of their distributions.

For this reason, we equip $\mathfrak{D}(S^m)$ with the metric $d_D$ defined by
\[
d_D(\boldsymbol{p},\boldsymbol{p}')
=
\lambda\!\left(
\left\{
\boldsymbol{q}\in S^m :
F_{\boldsymbol{p}}(\boldsymbol{q})
\neq
F_{\boldsymbol{p}'}(\boldsymbol{q})
\right\}
\right),
\]

where $\lambda$ denotes the Lebesgue measure on $\mathbb{R}^m \supset S^m$, and $F_{\boldsymbol{p}}$ is the joint cumulative distribution function associated with $\boldsymbol{p}$. We note a slight abuse of notation in writing $d_D(\boldsymbol{p},\boldsymbol{p}')$, as this formally represents the distance between the distributions $D \sim \boldsymbol{p}$ and $D' \sim \boldsymbol{p}'$, although it is expressed directly in terms of the parameter vectors $\boldsymbol{p}$ and $\boldsymbol{p}'$. This metric explicitly quantifies the discrepancy between two random vectors in terms of the sets where their associated distribution functions differ, and is commonly referred to as a measure-theoretic Hamming distance \citep{billingsley1995}.

Thanks to Proposition \ref{prop:unique_p}, we can define the function $\phi_G: \mathcal{H}_G^m\to \mathfrak{D}(S^m)$, a map that given the distribution of a random mixture $G$ and a proper $m$-tuple of vertices, returns the distribution $D$ of the relative random proportion $\boldsymbol{p}$, namely 

\[\phi_G(\{h_j\}_{j=1}^m)=D \quad \textrm{iff}\quad G=^d\bigoplus_{j=1}^m p_j\odot h_j \,\textrm{ for } \,\boldsymbol{p}\sim D.\] 
We can prove that $\phi_G$ is continuous.

\begin{proposition}[$\phi_G$ is continuous]
    Let $G$ be a random mixture of densities and let $\phi_G$ be defined as above. Then, $\phi_G$ is continuous.
    \label{prop:continuous_phi_G}
\end{proposition}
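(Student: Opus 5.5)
The plan is to make the dependence of $\phi_G$ on the vertices explicit through a change of barycentric coordinates, and then to check what convergence of the proportion distributions this gives in the metric $d_D$. Fix $(h_1^*,\dots,h_m^*)\in\mathcal H_G^m$ and a sequence $(h_1^{(\iota)},\dots,h_m^{(\iota)})\in\mathcal H_G^m$ converging to it in $\|\cdot\|_{B^{2,m}(I)}$. Write $D^*=\phi_G(\{h_j^*\})$ and $D_\iota=\phi_G(\{h^{(\iota)}_j\})$. By Proposition~\ref{prop:unique_p}, both distributions are characterised by $G\stackrel{d}{=}\bigoplus_j p_j\odot h_j$. It therefore suffices to exhibit, for each $\iota$, a random vector representing $D_\iota$ that is a deterministic function of a fixed $\boldsymbol p^*\sim D^*$.

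\textbf{Step 1 (coordinate change).} Work in clr coordinates, where $\mathcal M$ becomes an ordinary simplex in $L^2_0(I)$. Since both hulls contain $\operatorname{supp}(G)$ almost surely, the following holds $\mathbb P$-a.s.:
\[
\bigoplus_j p^*_j\odot h^*_j\in\mathcal M(\{h^{(\iota)}_j\}).
\]
Its barycentric coordinates with respect to $h^{(\iota)}$ are $T_\iota(\boldsymbol p^*)$, where $T_\iota$ is the affine map on the affine span of $\{\operatorname{clr}(h^*_j)\}$. This map is obtained by solving the Gram system $M_\iota\boldsymbol q=b_\iota(\boldsymbol p^*)$ together with the constraint $\sum_j q_j=1$. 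Linear independence of the vertices gives invertibility of the Gram matrix $M_*$, hence of $M_\iota$ for large $\iota$. Consequently $T_\iota\to\mathrm{id}$ uniformly on $S^m$, and $D_\iota$ is the law of $T_\iota(\boldsymbol p^*)$. The case where the limit hull is not nested in the $\mathcal M_\iota$ is handled by the same argument already used in Proposition~\ref{prop:close_H_G^m}.

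\textbf{Step 2 (the metric).} Next I compute $d_D(D_\iota,D^*)=\lambda(\{\boldsymbol q\in S^m: F_{T_\iota(\boldsymbol p^*)}(\boldsymbol q)\neq F_{\boldsymbol p^*}(\boldsymbol q)\})$. Here I first use the definition exactly as worded: $\lambda$ is Lebesgue measure on $\mathbb R^m$, and $S^m$ lies in the hyperplane $\{\sum_j q_j=1\}$, so $\lambda(S^m)=0$. Hence $d_D$ vanishes identically on $\mathfrak D(S^m)\times\mathfrak D(S^m)$, and continuity of $\phi_G$ follows at once from Step~1, which shows $\phi_G$ is well defined on a neighbourhood.

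\textbf{Main obstacle.} If instead $\lambda$ is meant as $(m-1)$-dimensional measure on $S^m$, the argument needs an additional ingredient. From $T_\iota\to\mathrm{id}$ one gets $F_{T_\iota(\boldsymbol p^*)}\to F_{\boldsymbol p^*}$ at continuity points, but a Hamming-type metric requires the CDFs to coincide off a set of vanishing measure. This step can genuinely fail for perturbed vertices, since a generic $D^*$ has CDFs that differ almost everywhere under any non-identity $T_\iota$. I would therefore try to use the structure of $\mathcal H_G^m$ near a point to show that the $T_\iota$ eventually fix the relevant region, but I expect that only the degenerate reading above yields the statement as worded.
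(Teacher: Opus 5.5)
Your argument proves the statement as literally worded, but by a different route from the paper's. The paper never examines $d_D$ directly. It writes $G$ in both representations and subtracts them to obtain $\bigoplus_j (p_j-p^{(\epsilon)}_j)\odot h_j \stackrel{d}{=} \bigoplus_j p^{(\epsilon)}_j\odot\epsilon_j$. It then lets $\|\epsilon\|\to0$, uses linear independence of the $h_j$ to get $p_j-p^{(\epsilon)}_j\to0$ in distribution, and asserts $d_D(\boldsymbol p,\boldsymbol p^{(\epsilon)})\to0$.

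Your Step 1 is a rigorous version of the first half of that argument. The paper's subtraction only makes sense once both proportion vectors live on a common probability space. Your map $T_\iota$ (which is linear, $T_\iota=M_\iota^{-1}N_\iota$ with $(N_\iota)_{jk}=\langle h^{(\iota)}_j,h^*_k\rangle_{B^2}$) supplies exactly that coupling. It gives $T_\iota(\boldsymbol p^*)\to\boldsymbol p^*$ uniformly, hence weak convergence $D_\iota\Rightarrow D^*$, which is all the paper's argument actually reaches.

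The paper's last step, from weak convergence to $d_D\to0$, is precisely the step you flag. It is not justified for a Hamming-type metric and holds only because $d_D$ vanishes identically, which is your Step 2. Under that reading Step 1 is not even needed, since every map into $(\mathfrak D(S^m),d_D)$ is continuous. Your appeal to the nesting argument of Proposition~\ref{prop:close_H_G^m} is also superfluous, because both vertex tuples lie in $\mathcal H^m_G$ by hypothesis.

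One correction to your ``main obstacle'' paragraph: taking $\lambda$ to be $(m-1)$-dimensional measure on $S^m$ does not remove the degeneracy. For $\boldsymbol q\in S^m$, the conditions $p_j\le q_j$ for all $j$ together with $\sum_j p_j=\sum_j q_j=1$ force $\boldsymbol p=\boldsymbol q$. So on $S^m$ the CDF is $F_{\boldsymbol p}(\boldsymbol q)=\mathbb P(\boldsymbol p=\boldsymbol q)$. Two such functions differ only on the countable union of atoms, which is null for the $(m-1)$-dimensional measure too. Hence $d_D\equiv0$ again, and your claim that generic CDFs ``differ almost everywhere'' is false for that reading.

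The genuinely non-degenerate version evaluates $F$ off the simplex, for instance on $[0,1]^m$, and there your suspicion is right: the statement fails. For $m=2$, take $p_1\sim U[1/4,3/4]$ and push both vertices symmetrically outward so that $p^{(\iota)}_1\sim U[1/4+\delta,3/4-\delta]$. With $F(q_1,q_2)=\mathbb P(1-q_2\le p_1\le q_1)$, the two CDFs differ whenever $q_2\ge3/4$ and $q_1\in(1/4+\delta,3/4-\delta)\setminus\{1/2\}$. That set has measure tending to $1/8$, not $0$, as $\delta\to0$.
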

\textbf{Proof of Proposition \ref{prop:continuous_phi_G}}: We want to prove that $(||\epsilon||_{\mathcal{H}^m}\to 0) \Rightarrow (d_D(\phi_G(\{h_j+\epsilon_j\}_j),\phi_G(\{h_j\}_j))\to 0)$. Let $\boldsymbol{p}\sim \phi_G(\{h_j\}_j)$ and $\boldsymbol{p}^{(\epsilon)}\sim\phi_G(\{h_j+\epsilon_j\}_j)$. 
First, we claim some necessary and sufficient statements. By construction of $\phi_G$, for any $\epsilon=(\epsilon_1,...,\epsilon_m)\in \mathcal{H}^m$ such that $\{h_j+\epsilon_j\}_{j=1}^m\in \mathcal{H}^m_G$, 
\[G=\bigoplus_{j=1}^m p_j\odot h_j\operatorname*{=}^d\bigoplus_{j=1}^m p^{(\epsilon)}_j\odot (h_j+\epsilon_j)\operatorname*{=}^d\bigoplus_{j=1}^m p^{(\epsilon)}_j\odot h_j+p^{(\epsilon)}_j\odot\epsilon_j\]
meaning that both the original and the deviated pair of vertices and proportions still represent $G$ in distribution.
This is true if and only if:
\[\bigoplus_{j=1}^m p_j\odot h_j\ominus\bigoplus_{j=1}^m p^{(\epsilon)}_j\odot h_j\operatorname*{=}^d\bigoplus_{j=1}^m p^{(\epsilon)}_j\odot\epsilon_j\]
If and only if:
\[\bigoplus_{j=1}^m (p_j-p^{(\epsilon)}_j)\odot h_j\operatorname*{=}^d\bigoplus_{j=1}^m p^{(\epsilon)}_j\odot\epsilon_j\]
Let us now assume that $||\epsilon||_{\mathcal{H}^m}\to 0$. Then:
\[ \forall \omega\in\Omega:
\bigoplus_{j=1}^m p^{(\epsilon)}_j(\omega)\odot\epsilon_j\to^{||\epsilon||_{\mathcal{H}^m}\to 0} [0]_{B^2}
\]
This implies that: 
\[
\lim_{||\epsilon||_{\mathcal{H}^m}\to 0} \bigoplus_{j=1}^m (p_j-p^{(\epsilon)}_j)\odot h_j\operatorname*{=}^d [0]_{B^2} \Rightarrow \forall j \lim_{||\epsilon||_{\mathcal{H}^m}\to 0} (p_j-p^{(\epsilon)}_j)\operatorname*{=}^d 0 
\]
Thus, \[
d_D(\boldsymbol{p},\boldsymbol{p}^{(\epsilon)})\to 0
\]
\qed\\
\medskip 
Although in practical applications $\phi_G$ can often be regarded as invertible -- since real data rarely exhibit exact symmetries -- this is not the case in the present theoretical setting, where no assumption is made on the geometry of the distribution of $G$. In this section, we introduce the notion of \textit{finite-symmetry}, a property ensuring that $\phi_G$ admits a finite preimage for every element of $\mathfrak{D}(S^m)$ and, consequently, that it is a closed map. 

The closedness of $\phi_G$ will play a crucial role in Section~\ref{min_repr}, in particular in Theorem~\ref{minimal_theorem}.

\begin{definition}[Finite-symmetry]
We say that a random mixture $G$ is \emph{finite-symmetric} if for every $D$ the preimage $\phi_G^{-1}(D)$ has finite cardinality.
\end{definition}

Notice that random mixtures $G$ exhibiting infinitely many symmetries are degenerate and purely theoretical. 
An example is given by
\[
G = h \oplus \epsilon,
\]
where $\epsilon \sim \mathcal{N}([0]_{B^2}, \sigma^2 I(s,v))$, schematically represented in Figure \ref{circleG}. Here, $[0]_{B^2}$ denotes the neutral element of perturbation (which is the constant density under the Lebesgue reference measure), and $I$ the identity operator. In this case, one may apply a rotation $T$ to one representative convex hull by any angle $\theta \in [0,2\pi)$ without altering the (spherical) distribution of the proportions $D$. In fact, $G$ can be interpreted in this case as a noiseless mixture with \emph{infinitely many} vertices -- rather than a noisy unique vertex. Consequently, the preimage of $D$ through $\phi_G$ is uncountable, being in one-to-one correspondence with the interval $[0,2\pi)$.

\begin{figure}[H]
    \centering
    \includegraphics[width=0.5\linewidth]{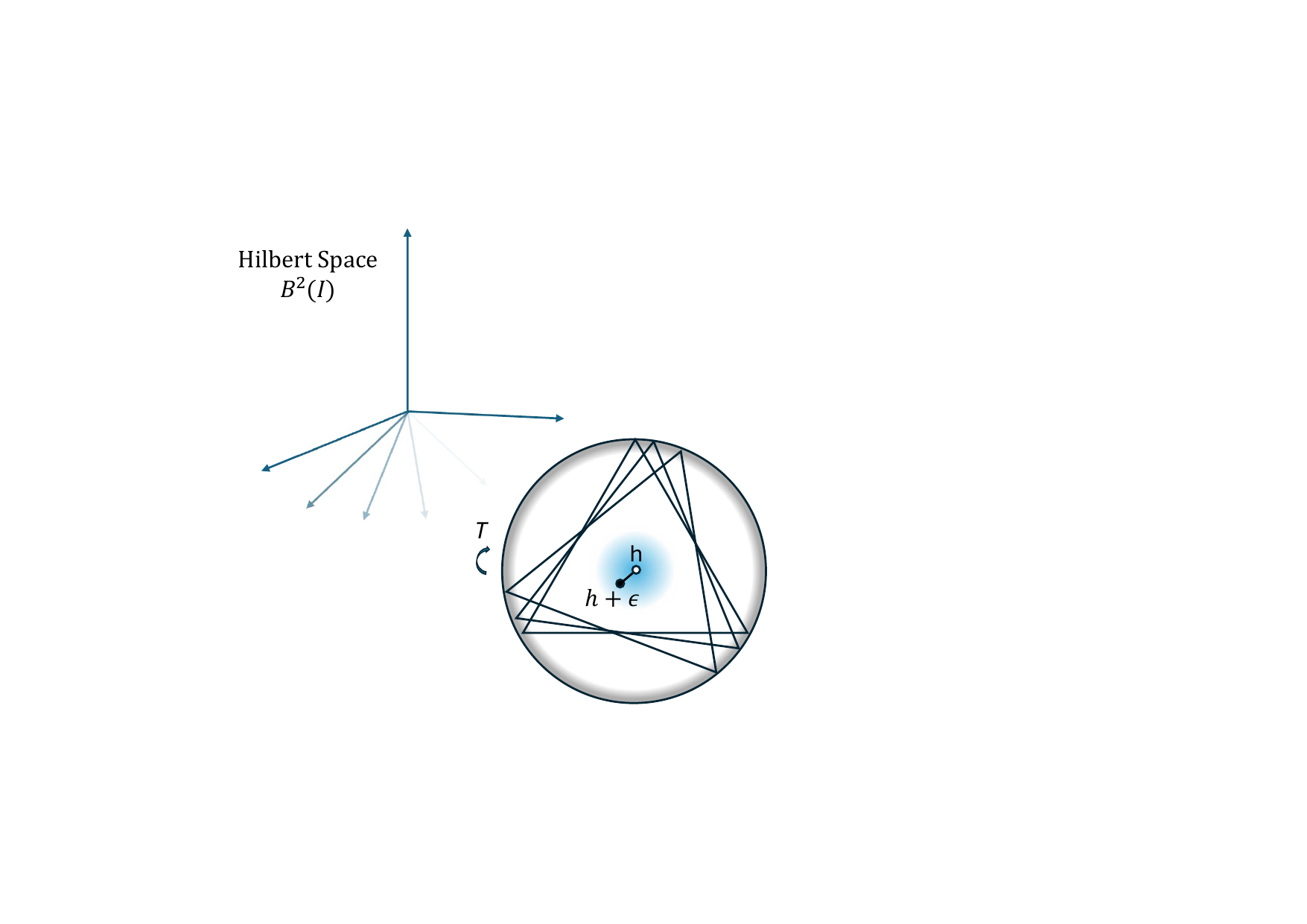}
    \caption{Schematic example of a non-finite-symmetric $G$. It can be seen as a density $h$ with the addition of a noise, or alternatively as a noiseless mixture of infinitely many vertices around $h$.}
    \label{circleG}
\end{figure}

\begin{proposition}
\label{prop:closed_phi_G}
Let $G$ be a finite-symmetric random mixture. Then $\phi_G$ is a closed map, that is, it maps closed sets into closed sets. In particular, $\operatorname{Im}(\phi_G)$ is a closed subset of $\mathfrak{D}(S^m)$.
\end{proposition}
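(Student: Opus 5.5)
I will follow the sequential characterisation of closed sets. Take a closed set $F\subseteq\mathcal{H}^m_G$ and a sequence of distributions $D_\iota=\phi_G(\{h^{(\iota)}_j\}_j)$ with $\{h^{(\iota)}_j\}_j\in F$ and $d_D(D_\iota,D)\to0$ for some $D\in\mathfrak{D}(S^m)$. I must show that $D=\phi_G(\{h^*_j\}_j)$ for some $\{h^*_j\}_j\in F$. The strategy is to extract a convergent subsequence of the vertex tuples, pass to the limit using the closedness of $\mathcal{H}^m_G$ (Proposition~\ref{prop:close_H_G^m}) and of $F$, and then use the continuity of $\phi_G$ (Proposition~\ref{prop:continuous_phi_G}) to conclude that $\phi_G(\{h^*_j\}_j)=\lim_\iota D_\iota=D$. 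The statement about $\operatorname{Im}(\phi_G)$ then follows by taking $F=\mathcal{H}^m_G$, which is closed in itself.

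The first step is boundedness of the vertex sequence. Since $\mathbb{P}(G\in\mathcal{M}(\{h^{(\iota)}_j\}_j))=1$, every realisation of $G$ equals $\bigoplus_j p^{(\iota)}_j\odot h^{(\iota)}_j$ with $\boldsymbol p^{(\iota)}\sim D_\iota$. Convergence of $D_\iota$ to $D$ keeps the proportion distributions from degenerating. If some $\|h^{(\iota)}_j\|_{B^2}$ diverged, then the points of $\mathrm{supp}(G)$, which lie in a fixed bounded region, would be forced to carry proportion on $h_j$ tending to $0$, uniformly in probability. This would give a limit $D$ putting mass zero on the $j$-th coordinate and would push the representation towards $m-1$ vertices, which must be excluded. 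To make this precise I would evaluate the Hamming-type metric on sets of the form $\{q_j>\delta\}$.

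The second step handles the lack of compactness: $B^{2,m}(I)$ is infinite-dimensional, so boundedness alone does not give a convergent subsequence. Here I would use that every convex hull containing $\mathrm{supp}(G)$ has vertices in $\operatorname{span}(\mathrm{supp}(G))$ up to components orthogonal to it. Those orthogonal components do not affect the representation of $G$ and can be projected away, so it suffices to work in a finite-dimensional subspace of dimension at most $m$. In that subspace, bounded sequences have convergent subsequences by Bolzano--Weierstrass. Finite-symmetry enters at this point. Applied to the limit $D$, it says that the candidate limits $\phi_G^{-1}(D)$ form a finite set, so the whole sequence $\{h^{(\iota)}_j\}_j$ accumulates on finitely many tuples and, modulo the permutation equivalence $\sim_\pi$, some subsequence converges to a single $\{h^*_j\}_j$.

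The main obstacle is ensuring that the limit tuple stays linearly independent, that is, that it lies in $\mathcal{H}^m_G$ rather than in its degenerate boundary. Proposition~\ref{prop:close_H_G^m} only gives the containment $\mathbb{P}(G\in\mathcal{M}^*)=1$. To rule out collapse of two vertices, I would argue as follows. If the vertices collapsed, $\mathcal{M}^*$ would have dimension less than $m-1$, and the preimage under $\phi_G$ of nearby distributions would admit continuous families of representations, namely sliding the collapsed vertex. This contradicts finite-symmetry, so I expect finite-symmetry to be invoked precisely here. Continuity of $\phi_G$ then closes the argument.
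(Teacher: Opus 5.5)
Your skeleton (pick $x_\iota\in F$ with $\phi_G(x_\iota)=D_\iota\to D$, extract a convergent subsequence, pass to the limit, use continuity of $\phi_G$) is the paper's. The paper simply asserts that the subsequence exists ``since $\phi_G^{-1}(\boldsymbol p)$ is finite''. You are right that compactness has to be supplied at that point. However, Step~1 (boundedness) fails, and the failure is not technical. Nothing in the hypotheses ``excludes'' a degenerate limit. A $D$ carried by a face $\{q_j=0\}$ would force $G$ into an $(m-2)$-dimensional face, so it simply lies outside $\operatorname{Im}(\phi_G)$: vertex divergence produces a failure of closedness, not a contradiction.

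Concretely, let $m=2$, let $u,v$ be linearly independent, and let $G=u\oplus(t\odot v)$ with $t$ uniform on $[-1,1]$. The pairs $h_1=u\ominus v$, $h_2^{(\iota)}=u\oplus(\iota\odot v)$ are admissible. Then $p_1^{(\iota)}=(\iota-t)/(\iota+1)$ is uniform on $[(\iota-1)/(\iota+1),1]$, so $D_\iota\to\delta_{(1,0)}$ weakly. But $\delta_{(1,0)}\notin\operatorname{Im}(\phi_G)$ because $G$ is non-degenerate, so the image is not even weakly closed. This $G$ is finite-symmetric: every admissible pair lies on the line through $\mathrm{supp}(G)$ and is determined, up to order, by the endpoints of the support of $p_1$.

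The limit need not even sit on a face. With $h_1^{(\iota)}=u\ominus(\iota\odot v)$ and $h_2^{(\iota)}=u\oplus(\iota\odot v)$ one gets $D_\iota\to\delta_{(1/2,1/2)}$, so your claim that the diverging vertex's proportion tends to $0$ is also inaccurate. Boundedness therefore cannot be derived from the stated assumptions. You need either bounded $F$ or a topology on $\mathfrak{D}(S^m)$ in which such concentrating sequences do not converge. Nor can you ``evaluate $d_D$ on sets $\{q_j>\delta\}$''. For $\boldsymbol q\in S^m$ one has $F_{\boldsymbol p}(\boldsymbol q)=\mathbb{P}(\boldsymbol p=\boldsymbol q)$, so $d_D$ only registers atoms, and in fact it vanishes identically.

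Step~2 repeats the paper's non sequitur. Finiteness of $\phi_G^{-1}(D)$ concerns tuples mapped exactly to $D$, whereas the $x_\iota$ lie in other fibres, so it says nothing about where they accumulate. The claim that components orthogonal to $\operatorname{span}(\mathrm{supp}(G))$ ``can be projected away'' is also false in general. When $\mathrm{supp}(G)$ has affine dimension below $m-1$, those components change the barycentric coordinates of points of $\mathrm{supp}(G)$, and projecting destroys linear independence.

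What finite-symmetry genuinely provides is that this low-dimensional case cannot occur. Some admissible vertex would have a nonzero component orthogonal to the (at most $(m-1)$-dimensional) span of $\mathrm{supp}(G)$. Orthogonal maps fixing that span pointwise would then produce infinitely many admissible tuples with the same $D$. Hence every admissible simplex lies in the fixed $(m-1)$-dimensional affine hull of $\mathrm{supp}(G)$, and Bolzano--Weierstrass applies to bounded sequences there. Step~3 then becomes automatic: a limit hull containing a full-dimensional support cannot be degenerate, so no ``sliding vertex'' argument is needed.

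With these repairs your argument does show that $\phi_G(F)$ is closed for every bounded closed $F$, for a Hausdorff topology such as weak convergence. The missing ingredient for the full statement is exactly the boundedness of Step~1, which the hypotheses do not give.
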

\textbf{Proof of Proposition \ref{prop:closed_phi_G}}: Let $C$ be a closed subset of the domain of $\phi_G$, and let $\{\boldsymbol{p}_\iota\}_{\iota=1}^{+\infty} \subset \phi_G(C)$ be a convergent sequence with limit $\boldsymbol{p}$. For each $\iota$ there exists $x_\iota \in C$ such that $\phi_G(x_\iota) = \boldsymbol{p}_\iota$. Since $\phi_G^{-1}(\boldsymbol{p})$ is finite, there exists a subsequence $(x_{\iota_\rho})$ converging to some $x \in C$.
By continuity of $\phi_G$, we obtain $\phi_G(x) = \boldsymbol{p}$, hence $\boldsymbol{p} \in \phi_G(C)$. Therefore $\phi_G(C)$ is closed.\\
\qed\\

Henceforth, we restrict to finite-symmetric random mixtures $G$, which are the only ones relevant in real-case scenarios. Indeed, in real-case scenarios the distribution of $G$ is never invariant under continuous families of transformations of the vertex set, as the underlying components carry distinct statistical signatures; exact continuous symmetries of the kind illustrated above are a purely theoretical artifact. In the next Section~\ref{min_repr}, we introduce a strict partial order on the set of representations of $G$. More precisely, given two representations $(\{h_j\}_{j=1}^m,\boldsymbol{p})$ and $(\{\bar h_j\}_{j=1}^{m},\boldsymbol{\bar p})$, we say that the first is smaller than the second if the associated trace of the covariance matrix of the corresponding proportions $\boldsymbol{p}$, denoted by $\operatorname{tr}(\Sigma_p)$, is larger. 

The quantity $\operatorname{tr}(\Sigma_p)$ measures the total dispersion of $\boldsymbol{p}$. As discussed in Section \ref{sec:theory}, the trace of $\Sigma_p$ provides a natural quantitative proxy for the geometric size of the representation. While related in concept, minimizing the trace encourages the distribution of $\boldsymbol{p}$ to spread out evenly within the convex hull, thereby utilising the available space efficiently. By contrast, directly minimising the volume of the convex hull tends to concentrate the vertices near the most extreme observations, which may distort the recovered statistical structure by overweighting atypical data points.

Figure~\ref{figs:min_element} schematically illustrates the concept of a minimal element: it corresponds to the statistically space-efficient convex hull, i.e. to the representation in which the proportions exhibit the largest dispersion.

We aim to prove that a minimal element with respect to this partial order exists and is unique.

\subsection{The minimal representation(s) of $G$}

\label{min_repr}
In this Section, we show that within the set of all the possible representations of the distribution of $G$ there exists at least one representation that maximises the trace of the covariance of $\boldsymbol{p}$. Under a suitable assumption, this representation is unique.

\begin{definition}
    Let $G$ be any random mixture, and let $(\{h_j\}_{j=1}^m,\boldsymbol{p})$, $(\{\bar h_j\}_{j=1}^{m},\boldsymbol{\bar p})$ be such that:
    \[
    G\operatorname*{=}^{d}\bigoplus_{j=1}^m p_j\odot h_j\operatorname*{=}^d\bigoplus_{j=1}^m \bar p_j\odot \bar h_j
    \]
    We say that $(\{h_j\}_{j=1}^m,\boldsymbol{p})<_{G}(\{\bar h_j\}_{j=1}^{m},\boldsymbol{\bar p})$, if the trace of the covariance matrix $\Sigma_p$ is higher that the one of $\Sigma_{\bar p}$, namely $\operatorname{tr}(\Sigma_p)> \operatorname{tr}(\Sigma_{\bar p})$.

\end{definition}

\begin{definition}[Strict partial order, \cite{Simovici2008}]
    A relation $<$ on a set $X$ is a strict partial order  if it is:
    \begin{itemize}
        \item Irreflexive: $\forall x\in X$, $x\not <x$.
        \item Asymmetric: $\forall x,y\in X$ if $x<y$ then $y\not <x$.
        \item Transitive: $\forall x,y,z\in X$ if $(x<y) \wedge(y<z)$ then $x<z$.
    \end{itemize}
\end{definition}

\begin{definition}[Minimality \cite{Simovici2008}]
    An element $x\in (X, <)$, where $<$ is a (generic) order relation is called minimal if $\not\exists y\in X: y<x$.
\end{definition}

    \begin{figure}[H]
        \centering
                \includegraphics[scale=0.5]{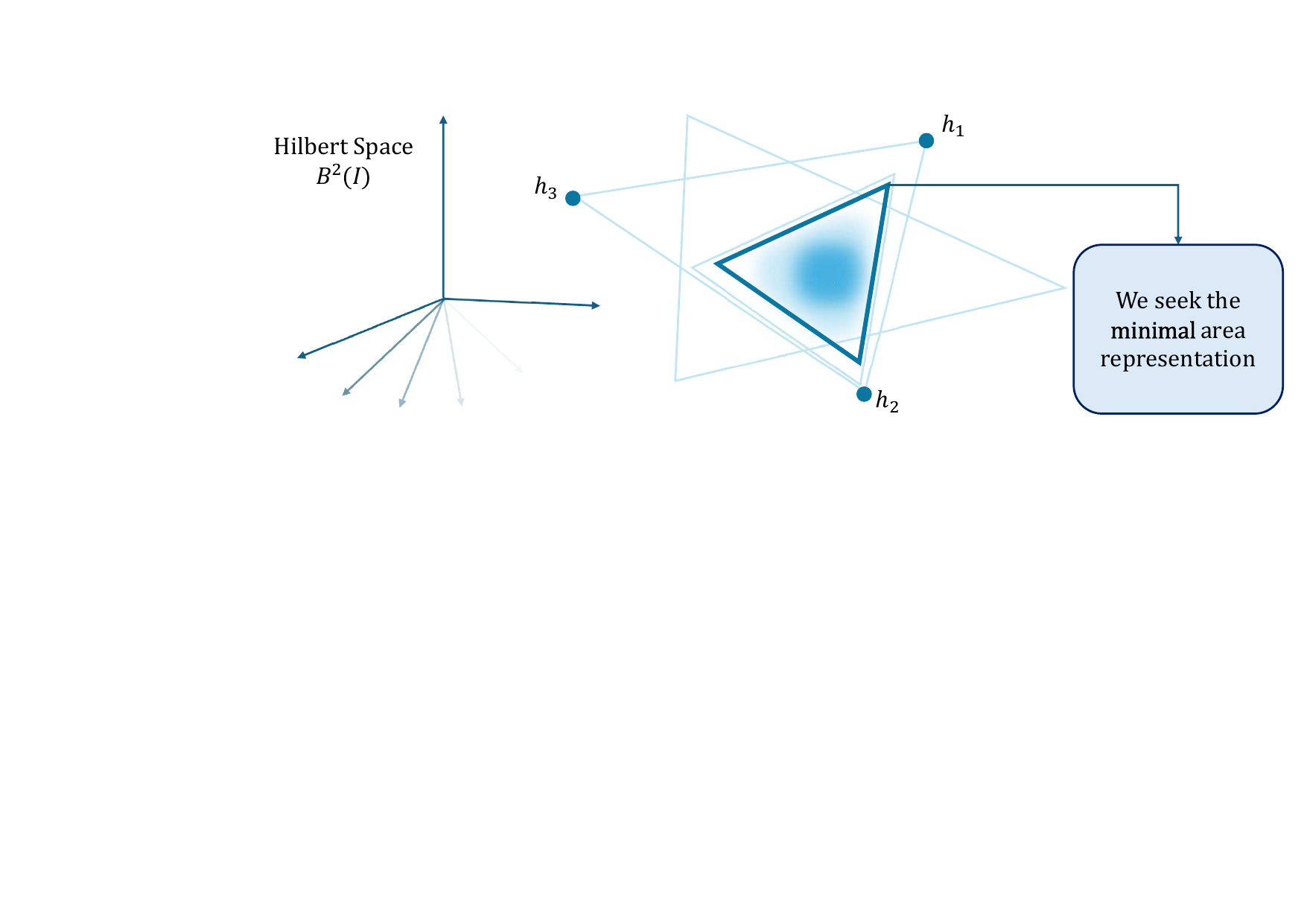}
                \caption{Schematic illustration of the concept of a minimal element: it corresponds to the convex hull in which the associated proportions exhibit the largest dispersion.
}
                \label{figs:min_element}
    \end{figure}

Let $\sim_\pi$ denote the equivalence relation on $m$-tuples of vertices defined as follows: two $m$-tuples $(h_1,\dots,h_m)$ and $(h_1',\dots,h_m')$ are equivalent, $(h_1,\dots,h_m) \sim_\pi\, (h_1',\dots,h_m')$, if there exists a permutation $\pi$ of $\{1,\dots,m\}$ such that
\[
(h_1,\dots,h_m) = (h'_{\pi(1)},\dots,h'_{\pi(m)}).
\]
 Let $\mathcal{H}_G^m\times \mathfrak{D}(S^m)/\sim_\pi$ be the equivalence class with respect to $\sim_\pi$.

\begin{proposition}
\label{prop:strict_order}
    The relation $<_{G}$ is a \textbf{strict partial order} relation on $\mathcal{C}_{G}\subset \mathcal{H}_G^m\times \mathfrak{D}(S^m)/\sim_\pi$, where $\mathfrak{D}(S^m)$ is the space of all possible distributions on $S^m$, $\mathcal{H}_G^m\times \mathfrak{D}(S^m)/\sim_\pi$ is the quotient space w.r.t. the equivalence relation $\pi$ and \[\mathcal{C}_G=\bigg\{(\{h_j\}_{j=1}^m,\boldsymbol{p})\in \mathcal{H}_G^m\times \mathfrak{D}(S^m): G\operatorname*{=}^d \bigoplus_{j=1}^m p_j\odot h_j\bigg\}\bigg/\sim_\pi =\]
    \[
    =\bigg\{(\{h_j\}_{j=1}^m,\boldsymbol{p})\in \mathcal{H}_G^m\times \mathfrak{D}(S^m) : \boldsymbol{p}=\phi_G(\{h_j\}_{j=1}^m)\bigg\}\bigg/\sim_\pi\ .
    \]
    
\end{proposition}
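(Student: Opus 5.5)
The plan is to pull $<_G$ back to a strict inequality between real numbers. Define $\tau:\mathcal{C}_G\to[0,+\infty]$ by $\tau(\{h_j\}_j,\boldsymbol p)=\operatorname{tr}(\Sigma_p)$, where $\Sigma_p$ is the covariance of the ilr coordinates $\psi(\boldsymbol p)$. By definition,
\[
x<_G y \iff \tau(x)>\tau(y).
\]
The three axioms of a strict partial order then follow from the corresponding properties of the strict order $>$ on $\mathbb{R}$. The only real work is showing that $\tau$ is a well-defined function on the quotient $\mathcal{C}_G$.

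\textbf{Well-definedness.} The first step is to check that $\tau$ does not depend on the representative of a class.

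First, on $\mathcal{C}_G$ the second component is determined by the first, because $\boldsymbol p=\phi_G(\{h_j\}_j)$. By Proposition~\ref{prop:unique_p}, this fixes the distribution $D$ of $\boldsymbol p$. Hence $\Sigma_p$, which depends only on $D$, is unambiguous.

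Second, suppose two tuples are related by $\sim_\pi$, i.e.\ $h_j=h'_{\pi(j)}$. Then Proposition~\ref{prop:unique_p} shows that the associated proportion vectors satisfy $\boldsymbol p'\stackrel{d}{=}P_\pi\boldsymbol p$ for the corresponding permutation matrix $P_\pi$.

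It remains to check that $\operatorname{tr}(\Sigma_p)$ is invariant under $P_\pi$. For this, write
\[
\operatorname{tr}(\Sigma_p)=\operatorname{tr}\operatorname{Cov}(\operatorname{clr}(\boldsymbol p)),
\]
which holds because the ilr coordinates are an orthonormal rotation of the clr coordinates: $\psi=V'\operatorname{clr}$ with $V'V=I$ and $VV'$ the centring projection. Permuting components commutes with clr, so $\operatorname{Cov}(\operatorname{clr}(P_\pi\boldsymbol p))=P_\pi\operatorname{Cov}(\operatorname{clr}(\boldsymbol p))P_\pi'$. This matrix has the same trace, so $\tau$ passes to the quotient.

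If some $\boldsymbol p$ puts mass on the boundary or lacks second moments, I would allow $\tau=+\infty$ and use the extended order on $[0,+\infty]$. The axioms still hold there, since $+\infty>+\infty$ is false.

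\textbf{Axioms.} With $\tau$ well defined, each axiom is immediate.
\begin{itemize}
\item Irreflexivity: $\tau(x)>\tau(x)$ never holds.
\item Asymmetry: $\tau(x)>\tau(y)$ and $\tau(y)>\tau(x)$ cannot hold together.
\item Transitivity: $\tau(x)>\tau(y)>\tau(z)$ implies $\tau(x)>\tau(z)$.
\end{itemize}

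It is worth noting that $<_G$ is generally not total on $\mathcal{C}_G$. Distinct classes with equal trace are incomparable, which is exactly why only a partial order is claimed.

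\textbf{Main obstacle.} The one nonroutine point is the permutation-invariance of the ilr covariance trace. The ilr map depends on the chosen contrast basis, and permuting parts does not commute with a fixed ilr basis. I expect the clr rewriting above to dispose of this: the trace is invariant under both the choice of orthonormal basis and the permutation.
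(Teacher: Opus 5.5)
Your argument is correct and is essentially the paper's: both reduce $<_G$ to the strict order on the values $\operatorname{tr}(\Sigma_p)$ and read off irreflexivity, asymmetry and transitivity from the corresponding properties of $>$. The paper checks only these three axioms, so two parts of your proof add rigor it leaves implicit: the check that $\operatorname{tr}(\Sigma_p)$ is well defined on the quotient by $\sim_\pi$ (via the clr rewriting and $\operatorname{tr}(P_\pi C P_\pi')=\operatorname{tr}(C)$), and the extended-value convention for degenerate $\boldsymbol p$.
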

\textbf{Proof of Proposition \ref{prop:strict_order}}: We only have to check all the properties of strict partial orders:
 \begin{enumerate}
    \item Irreflexive: trivial
    \item Asymmetric: $(\{h_j\}_{j=1}^m,\boldsymbol{p})<_{G}(\{\bar h_j\}_{j=1}^{m},\boldsymbol{\bar p})$ implies that $\operatorname{tr}(\Sigma_p)> \operatorname{tr}(\Sigma_{\bar p})$, hence it cannot be true that $(\{h_j\}_{j=1}^m,\boldsymbol{p})>_{G}(\{\bar h_j\}_{j=1}^{m},\boldsymbol{\bar p})$.

    \item Transitive: let $(\{h_j\}_{j=1}^m,\boldsymbol{p})<_{G}(\{\bar h_j\}_{j=1}^{m},\boldsymbol{\bar p})$ and $(\{\bar h_j\}_{j=1}^{m},\boldsymbol{\bar p})<_{G}(\{\hat h_j\}_{j=1}^{ m},\boldsymbol{\hat p})$. Then $\Sigma_p>\Sigma_{\bar p}>\Sigma_{\hat p}$. This implies that $(\{h_j\}_{j=1}^m,\boldsymbol{p})<_{G}(\{\hat h_j\}_{j=1}^{ m},\boldsymbol{\hat p})$.

\end{enumerate}
\qed \\

\begin{proposition}[Existence of a minimal element]
\label{prop:min_exist}
    If $G$ is such that \[\sup_{(\{h_j\}_j,\boldsymbol{p})\in\mathcal{C}_G}(\operatorname{tr}(\Sigma_p))<+\infty\]
    then $(\mathcal{C}_G,<_G)$ has at least one minimal element.
\end{proposition}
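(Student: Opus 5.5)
The plan is to show that the supremum $s^*:=\sup_{\mathcal{C}_G}\operatorname{tr}(\Sigma_p)$, finite by hypothesis, is attained. Any representation attaining it is minimal: if some $y<_G x$, then $\operatorname{tr}(\Sigma_{p_y})>\operatorname{tr}(\Sigma_{p_x})=s^*$, which contradicts the definition of the supremum. Attainment is also necessary. If $s^*$ is not attained, then every element $x$ is strictly below $s^*$, so some $y$ has larger trace and hence $y<_G x$, and no element is minimal. The whole proof therefore reduces to a compactness/closedness argument.

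The first step is to take a maximising sequence $(\{h_j^{(\iota)}\}_j,\boldsymbol p^{(\iota)})\in\mathcal{C}_G$ with $\operatorname{tr}(\Sigma_{p^{(\iota)}})\uparrow s^*$, where $\boldsymbol p^{(\iota)}\sim\phi_G(\{h_j^{(\iota)}\}_j)$. The second step is to extract a convergent subsequence of vertices. I would argue that the vertex tuples can be kept bounded in $B^{2,m}(I)$. Since $\mathrm{supp}(G)\subseteq\mathcal{M}(\{h_j^{(\iota)}\}_j)$ and $\boldsymbol p^{(\iota)}$ has trace bounded below by $\operatorname{tr}(\Sigma_{p^{(1)}})>0$, the hull cannot run off to infinity. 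Otherwise the realisations of $G$, which have fixed dispersion, would occupy a vanishing fraction of the hull, forcing $\operatorname{tr}(\Sigma_{p^{(\iota)}})\to0$. Moreover, all hulls lie in the finite-dimensional affine span of $\mathrm{supp}(G)$, up to components that can be projected away. So the tuples live in a bounded subset of a finite-dimensional space, and Bolzano--Weierstrass gives a subsequence converging to some $\{h_j^*\}_j$.

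The third step is to show that $\{h^*_j\}_j\in\mathcal{H}^m_G$, invoking the closedness argument in the proof of Proposition~\ref{prop:close_H_G^m}. Linear independence of the limit must also be checked. If the limit degenerated, the hull would collapse to lower dimension while still containing $\mathrm{supp}(G)$. I would rule this out either by the $m$-dimensionality of $\mathrm{supp}(G)$, or by noting that collapse would drive some coordinate of $\boldsymbol p$ to the boundary. The fourth step is to pass to the limit in the proportions. By Proposition~\ref{prop:continuous_phi_G}, $\phi_G(\{h_j^{(\iota)}\}_j)\to\phi_G(\{h^*_j\}_j)=:D^*$; equivalently, by closedness of $\operatorname{Im}(\phi_G)$ from Proposition~\ref{prop:closed_phi_G} when $G$ is finite-symmetric, the limit distribution is realised. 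Concretely, $\boldsymbol p^{(\iota)}(\omega)$ is obtained from $G(\omega)$ by the barycentric-coordinate map, which depends continuously on the vertices when they are affinely independent. Hence $\boldsymbol p^{(\iota)}\to\boldsymbol p^*$ pointwise. It then remains to show $\operatorname{tr}(\Sigma_{p^{(\iota)}})\to\operatorname{tr}(\Sigma_{p^*})$, which gives $\operatorname{tr}(\Sigma_{p^*})=s^*$ and $(\{h_j^*\}_j,\boldsymbol p^*)$ is the required minimal element.

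The main obstacle is this last convergence of traces. The covariance is taken in ilr coordinates, and $\psi$ is unbounded near the boundary of $S^m$, so pointwise convergence alone does not give convergence of second moments. I would first try to get uniform integrability of $\|\psi(\boldsymbol p^{(\iota)})\|^2$ from the uniform bound $\operatorname{tr}(\Sigma_{p^{(\iota)}})\le s^*$. Failing that, I would use Fatou's lemma, which gives $\operatorname{tr}(\Sigma_{p^*})\ge\limsup$ along a suitable subsequence, provided the means converge. Combined with $\operatorname{tr}(\Sigma_{p^*})\le s^*$, this forces equality. Fatou suffices for attainment, so that is the route I would pursue first. The boundedness claim in the second step is the other delicate point.
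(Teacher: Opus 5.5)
Your reduction to attainment of $s^*$, and your overall route, are the paper's: a maximising sequence, a limit in $\mathcal H_G^m\times\operatorname{Im}(\phi_G)$, and a limit that attains the supremum. The paper, however, gets the limit from closedness alone and asserts $\operatorname{tr}(\Sigma_p^*)=M$ ``by construction''. So you are right that compactness and continuity of the trace are the real content, but the arguments you give for both fail.

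\textbf{The trace step (decisive gap).} Set $X_\iota=\|\psi(\boldsymbol p^{(\iota)})-\mathbb E\,\psi(\boldsymbol p^{(\iota)})\|^2\ge 0$, and grant that the means converge, so $X_\iota\to X^*$ pointwise. Fatou's lemma then gives $\operatorname{tr}(\Sigma_{p^*})=\mathbb E X^*\le\liminf_\iota\mathbb E X_\iota=s^*$. That is the inequality you already have for free once the limit lies in $\mathcal C_G$. Attainment needs the reverse inequality, i.e.\ upper semicontinuity of the trace. That requires uniform integrability of $(X_\iota)$ or an integrable envelope (Vitali or reverse Fatou). The bound $\mathbb E X_\iota\le s^*$ is only an $L^1$ bound and gives neither. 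This is not a technicality: $\psi$ blows up exactly where the hulls $\mathcal M(\{h_j^{(\iota)}\}_j)$ touch $\mathrm{supp}(G)$, which is where a maximising sequence is pushed. Variance can therefore be lost in the limit unless you control $\log p_j^{(\iota)}$ near those faces uniformly in $\iota$.

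\textbf{The compactness step.} Your heuristic that an escaping hull forces $\operatorname{tr}(\Sigma_{p^{(\iota)}})\to0$ is false for the ilr covariance. Take $m=2$ and $G$ uniform on a segment of a line avoiding $[0]_{B^2}$, identified with $[0,1]$, with hulls $[-L,1]$. Then $\psi(\boldsymbol p)=\tfrac{1}{\sqrt2}\{\log(1-t)-\log(t+L)\}$, whose variance tends to $\tfrac12\operatorname{Var}\log(1-U)=\tfrac12$ as $L\to\infty$. The escaping vertex's proportion goes to $0$, but the log-ratios do not flatten. You must still show that such limiting unbounded configurations cannot carry the supremum. In this one-dimensional example that follows from $\operatorname{Cov}(\log(b-G),\log(G-a))\le0$, but you give no general argument.

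\textbf{Further points in the same step.} Bounded subsets of $B^{2,m}(I)$ are not compact. The reduction to finite dimension is immediate only when the affine span of $\mathrm{supp}(G)$ has dimension $m-1$; otherwise your projection must be shown to preserve both the law of $\boldsymbol p$ and linear independence. Finally, linear independence is an open condition. A limit tuple can be linearly dependent while its hull is perfectly non-degenerate (its affine span passes through $[0]_{B^2}$), so ``collapse'' is not the relevant failure mode. Assuming $\dim\operatorname{span}(\mathrm{supp}\,G)=m$ would exclude this, but that is an extra hypothesis.
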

\textbf{Proof of Proposition \ref{prop:min_exist}}: Let us consider a generic sequence $(\{h_j^{(\iota)}\}_j,\boldsymbol p^{(\iota)})_\iota\subset\mathcal C_G$ such that \[\operatorname*{tr}(\Sigma_p^{(\iota)})\to^{\iota\to \infty} \sup_{(\{h_j\}_j,\boldsymbol{p})\in\mathcal{C}_G}(\operatorname{tr}(\Sigma_p))=:M<+\infty\]
From Proposition \ref{prop:closed_phi_G}, we know that $\operatorname{Im}(\phi_G)\subset \mathfrak{D}(S^m)$ is closed, hence also $\mathcal{H}_G^m\times \operatorname{Im}(\phi_G)$ is closed. Then, $\exists (\{h_j^*\}_j,\boldsymbol p^*)\in\mathcal{H}_G^m\times \operatorname{Im}(\phi_G): (\{h_j^*\}_j,\boldsymbol p^*) =\lim_{\iota\to \infty} (\{h_j^{(\iota)}\}_j,\boldsymbol p^{(\iota)})$ and by construction $\operatorname{tr}(\Sigma^*_p)=M$. Hence, there exist at least a minimal element, that is $(\{h_j^*\}_j,\boldsymbol p^*)\in \mathcal{H}_G^m\times \operatorname{Im}(\phi_G)$. Consequently, there exists at least a minimal element in $\mathcal{C}_G=(\{h_j^*\}_j,\boldsymbol p^*)\in \mathcal{H}_G^m\times \operatorname{Im}(\phi_G)/_\pi$.\\
\qed \\

So far, we proved the existence of a minimal representation of $G$, according to the order $<_G$. To ensure uniqueness of the minimal representation, it is crucial to define and assume a property of $G$, which we call the \emph{Probabilistic Pure Pixel} (3P) assumption. Intuitively, the 3P assumption requires that each vertex can be observed with positive probability, up to arbitrary precision; in other words, no vertex is probabilistically hidden. For a more formal definition see \ref{def:3P}

For readers familiar with Hyperspectral Unmixing, the 3P assumption differs from the classical pure pixel assumption: 3P only requires that the probability of observing each noiseless mixture is theoretically positive, whereas the pure pixel assumption assumes it occurs with probability one.


When 3P assumption holds, we can conclude with Theorem \ref{minimal_theorem}, that states the uniqueness of the desired minimal representation of $G$. 

\textbf{Proof of Theorem \ref{minimal_theorem}}: Let us assume that the minimal element is not unique. Then, we can intersect two convex hulls, obtaining a new set $\Theta$. Let $\{h_1^*,...,h^*_m\}$ be one of these two convex hulls. If $h^*_j\in\Theta$ for every $j$, then for convexity, $\mathcal{M}(h^*_1,...,h^*_m)\subset\Theta$ and this is against the assumption of minimality. Then, at least one of the $m$ vertices does not belong to $\Theta$. Consequently,  since $\Theta$ is the intersection of two closed subsets of $\mathcal{H}$, it is closed itself. Moreover, its complementary $\Theta^c$ is open. Therefore, $\exists\epsilon>0$ such that $\mathbb{P}(G\in D_\epsilon(h_j)\subset \Theta)=0$, then $3P$ does not hold. Due to the generality of the choice of $\{h^*_1, ..., h^*_m\}$, 3P does not hold. This proves that 3P is a sufficient condition for uniqueness.\\
\qed\\

\section{Proofs of Section \ref{pmlu}}
\label{appendix_proofs5}
\textbf{Proof of Equation \ref{optH}}: First, we can rewrite Equation \ref{hatl}, fixing $\Sigma_\epsilon, \boldsymbol \mu_p,\Sigma_p$ as follow:
\begin{equation}
    \begin{split}
       \hat l(\mathbb H)=&-\frac{1}{2}\sum_{b=1}^B\sum_{i=1}^n w_{b,i}\cdot( \boldsymbol f_i-\mathbb H\cdot \boldsymbol p_{b,i})'\Sigma_\epsilon^{-1}( \boldsymbol f_i-\mathbb H\cdot \boldsymbol p_{b,i})+\\
       &- \lambda_{sm}\operatorname{tr}(\mathbb H' \cdot D_2\cdot\mathbb H)+const 
    \end{split}
\end{equation}

where $\boldsymbol p_{b,i}$ is the $b$-th sampled proportion for the $i-th$ unit, $\boldsymbol p_{b,i}$ sampled according to the fixed $\boldsymbol \mu_p$, and $\Sigma_p$. Notice that $\forall b,i: \dim(\boldsymbol p_{b,i})=m\times 1$. Lastly, $D_2=\bigg [\langle\zeta_r'',\zeta_s''\rangle\bigg]_{r,s}$, where $\zeta_r$ is the $r$-th element of the $k$-dimensional basis of $\mathcal{H}$. \\

The gradient of $\hat l$ with respect to the matrix $\mathbb H$ has the same dimensions as $\mathbb H$, being the codomain of $\hat l$ one-dimensional, and it is equal to \[\nabla_{\mathbb H} \hat l=\sum_{b=1}^B\sum_{i=1}^n w_{b,i}\cdot\Sigma_\epsilon^{-1}( \boldsymbol f_i-\mathbb H\cdot \boldsymbol p_{b,i})\boldsymbol p_{b,i}'-2\lambda_{sm}(
D_2\cdot \mathbb H)\]
\[=\Sigma_\epsilon^{-1}\bigg(\sum_{b=1}^B\sum_{i=1}^n w_{b,i}( \boldsymbol f_i-\mathbb H\cdot \boldsymbol p_{b,i})\boldsymbol p_{b,i}'\bigg)-2\lambda_{sm}(D_2 \cdot \mathbb H)\]
By imposing the gradient to be equal to the $k\times m $ dimensional null matrix, we get the following expression:
\[
\Sigma_\epsilon^{-1}\bigg(\sum_{b=1}^B\sum_{i=1}^n w_{b,i}\cdot \boldsymbol f_i\boldsymbol p_{b,i}'\bigg)=\]
\[=\Sigma_\epsilon^{-1}\mathbb H\bigg(\sum_{b=1}^B\sum_{i=1}^n w_{b,i}\cdot(\boldsymbol p_{b,i}\boldsymbol p_{b,i}')\bigg)+2\lambda_{sm}(D_2 \cdot \mathbb H)
\]
That implies
\[
\bigg(\sum_{b=1}^B\sum_{i=1}^n w_{b,i}\cdot \boldsymbol f_i\boldsymbol p_{b,i}'\bigg)=\]
\[=\mathbb H\bigg(\sum_{b=1}^B\sum_{i=1}^n w_{b,i}\cdot(\boldsymbol p_{b,i}\boldsymbol p_{b,i}')\bigg)+(2\lambda_{sm} \cdot \Sigma_\epsilon \cdot D_2) \cdot \mathbb H
\]
Hence, the problem is equivalent to finding $\mathbb H$ such that:
\[ A_l \mathbb H +\mathbb HA_r=C\]
and the solution is:

\[
\operatorname{vec}(\mathbb H^*)=(A_r'\otimes I_{k\times k}+I_{m\times m}\otimes A_l)^{-1}\operatorname{vec}(C)
\]
where $\otimes$ denotes the Kroeneker product, $I_{k\times k}$ and $I_{m\times m}$ are, respectively, the identity matrix of dimension $k\times k$ and $m\times m$, and $\operatorname{vec}(\cdot)$ is the operation that vectorizes the matrices by column.\\
\qed

Notice that if $\lambda_{sm}=0$, the maximum of $\hat{l}$ -- see Equation \ref{hatl} -- with respect to $\mathbb H$ is equal to:

\[
\mathbb H=\bigg(\sum_{b=1}^B\sum_{i=1}^n w_{b,i}\cdot \boldsymbol f_i\boldsymbol p_{b,i}'\bigg)\bigg(\sum_{b=1}^B\sum_{i=1}^n w_{b,i}\cdot\boldsymbol p_{b,i}\boldsymbol p_{b,i}'\bigg)^{-1}=C A_r^{-1}
\]

The matrix $C$ can be interpreted as a weighted mean of the observations $\{f_i\}$, where the weights are given by their posterior proportions. Let us examine the inverse of the second term, namely $A_r^{-1}$.

\begin{itemize}
    \item The diagonal entries of $A_r$ represent average squared posterior proportions associated with each vertex.
    \item The off-diagonal entries, say in position $(j,j')$, quantify the joint contribution of vertices $j$ and $j'$, and can be interpreted as a \textit{probability} of their coexistence within the data. Such coexistence indices tend to increase when many units $i$ exhibit a uniform coexistence in two or more vertices. This is the case when the number of vertices $m$ is overestimated.
\end{itemize}

Consequently, when $A_r^{-1}$ is applied to $C$, it acts as a correction term that mitigates the artificial coexistence between different vertices, thereby improving their identifiability. This mechanism is analogous to the decorrelation of a dataset, which is obtained by applying the inverse of the upper triangular Cholesky factor of the covariance matrix.
\vspace{0.6cm}\\
\noindent
\textbf{Proof of Equation \ref{opt_sigeps}}: 
First, we can rewrite Equation \ref{hatl}, fixing $\mathbb H, \boldsymbol \mu_p,\Sigma_p$ as follow:
\begin{equation}
    \begin{split}
      \hat l(\Sigma_\epsilon)=&-\frac{1}{2}\sum_{b=1}^B\sum_{i=1}^n w_{b,i}\cdot( \boldsymbol f_i-\mathbb H\cdot \boldsymbol p_{b,i})'\Sigma_\epsilon^{-1}( \boldsymbol f_i-\mathbb H\cdot \boldsymbol p_{b,i})+\\
      &-\log(\operatorname{det}(\Sigma_\epsilon))+const  
    \end{split}
\end{equation}

this maximization problem is equivalent to finding the maximum likelihood covariance matrix when the mean is fixed \cite{bishop_book}, whose solution is:

\begin{equation}
\small
    \begin{split}
        &\Sigma_\epsilon^*=\\
        &=\frac{1}{\sum_b\sum_iw_{b,i}}\sum_{b=1}^B\sum_{i=1}^nw_{b,i}((\boldsymbol f_i-\mathbb H \boldsymbol p_{b,i})- \boldsymbol0)((\boldsymbol f_i-\mathbb H \boldsymbol p_{b,i})- \boldsymbol0)'=\\
&=\frac{1}{\sum_b\sum_iw_{b,i}}\sum_{b=1}^B\sum_{i=1}^nw_{b,i}(\boldsymbol f_i-\mathbb H \boldsymbol p_{b,i})(\boldsymbol f_i-\mathbb H \boldsymbol p_{b,i})'
    \end{split}
\end{equation}

\qed
\vspace{0.6cm}\\
\noindent
\textbf{Proof of Equation \ref{opt_mup}}: this result is well known and employed in modern statistics, e.g. in \cite{steyer2023}.
\qed 
\vspace{0.6cm}\\
\noindent
\textbf{Proof of Equation \ref{opt_sigp}}: 
\[
\nabla_{{\Sigma_p}^{-1}}\ \hat l=\]
\[ = \nabla_{{\Sigma_p}^{-1}} \bigg(\sum_b\sum_i w_{b,i} (\psi(p_{b,i})-\boldsymbol\mu_p)'\Sigma_p^{-1}(\psi(p_{b,i})-\boldsymbol\mu_p)+\]
\[-\log(\det({\Sigma_p}))-\lambda_{tr}\operatorname{tr}(\Sigma_p^{-1})\bigg)=
\]
\[
=\nabla_{{\Sigma_p}^{-1}} \bigg(\sum_b\sum_i w_{b,i} (\psi(p_{b,i})-\boldsymbol\mu_p)'\Sigma_p^{-1}(\psi(p_{b,i})-\boldsymbol\mu_p)+\]
\[+(\sum_b\sum_i w_{b,i} )\log(\det({\Sigma_p}^{-1}))-\lambda_{tr}\operatorname{tr}(\Sigma_p^{-1})\bigg)=
\]
\[
= - \sum_b\sum_iw_{b,i}(\boldsymbol \psi(p_{b,i})-\boldsymbol \mu_p)(\boldsymbol \psi(p_{b,i})-\boldsymbol \mu_p)'+\]
\[+(\sum_b\sum_i w_{b,i} )\frac{\det(\Sigma_p^{-1})\Sigma_p}{\det(\Sigma_p^{-1})} - \lambda_{tr}I_{(m-1)\times (m-1)}
\]
Imposing the gradient equal to 0, we proved our statement. \\
\qed\\

\section{Justified choice of $m$ in the simulations}
The scree plots of all simulated datasets (see Figure~\ref{m_check}) clearly display that the explained variance is essentially exhausted beyond this point. Therefore, $m = 3$ would be selected based on standard diagnostic criteria, independently of prior knowledge of the data-generating process. Finally, we compare the performance of the BH unmixing procedure under two scenarios: when the data are generated according to the BH mixture model, and when this assumption is violated, i.e., when the data are in fact linear mixtures.

\begin{figure}[H]
    \centering    \includegraphics[width=1\linewidth]{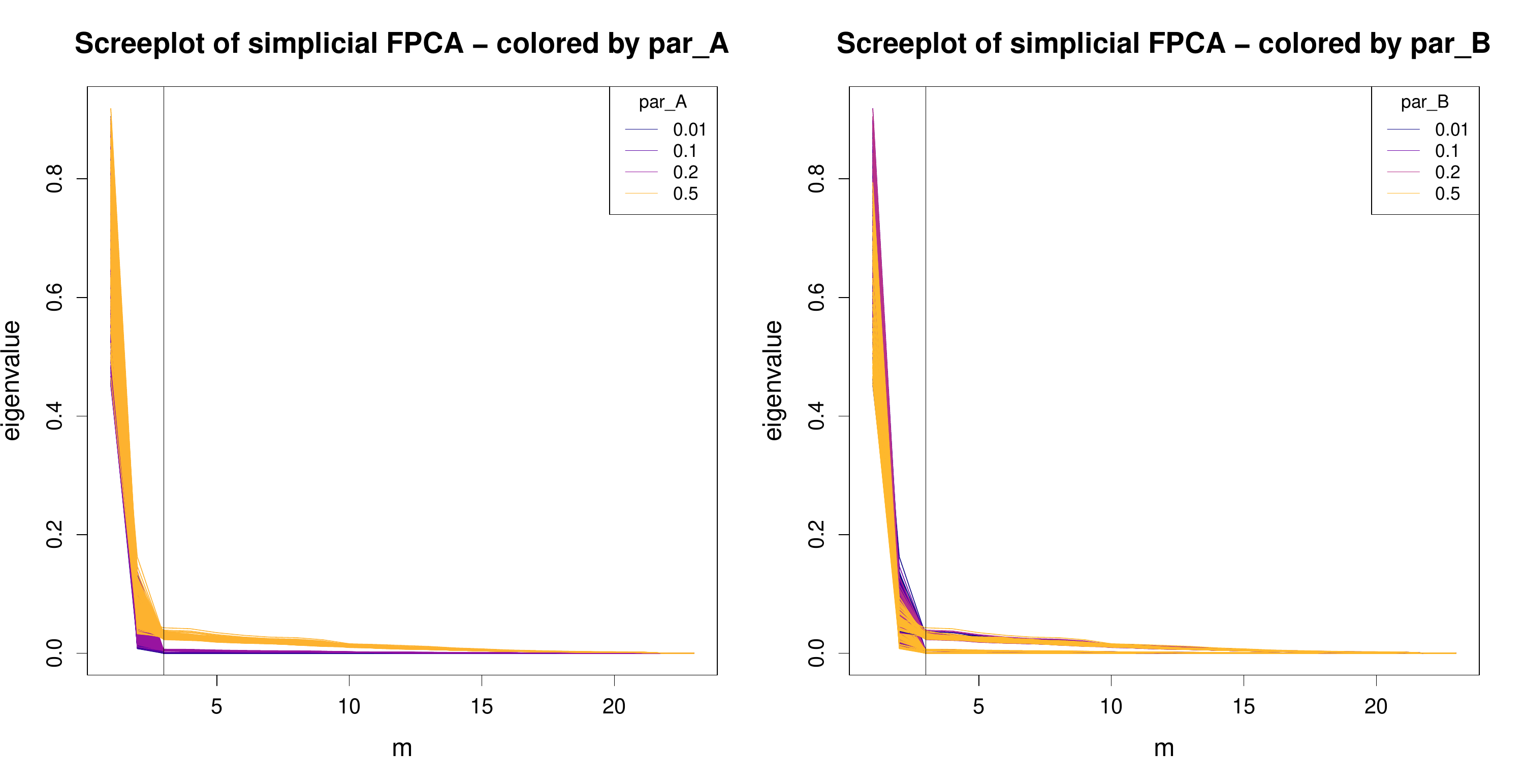}
    \caption{Scree plots of all the simulated datasets -- both Study A and B. As one can notice, the variance exhaust in $m=3$.}
    \label{m_check}
\end{figure}

\section{Visualisation of the parameter in Study B}
\begin{figure}[H]
    \centering
    \includegraphics[width=0.8\linewidth]{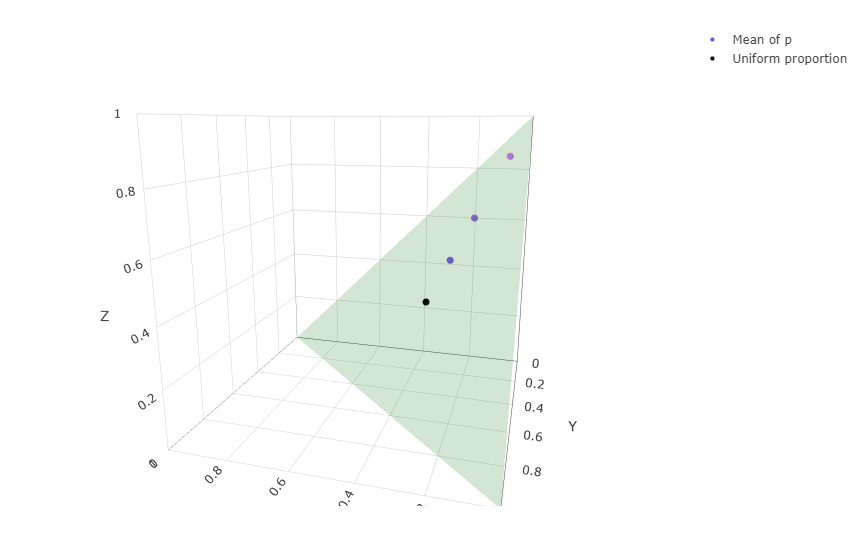}
    \caption{The options of $\mu_p$ in simulation study B, represented in $\mathbb{R}^3$, strictly belonging to $S^3\subset\mathbb{R}^3$. The black point is $\psi^{-1}((0,0))$, the darkest violet is $\psi^{-1}((0,1/2))$ and the lightest is $\psi^{-1}((0,2))$}.
    \label{studyB_mup}
\end{figure}
\section{Effect of the number of vertices on vertex estimation error}
\label{app:n_vertices}
 
In addition to the Study A (noise level, Section~\ref{sec:simA})
and Study B (separation of mean proportions,
Section~\ref{sec:simB}), in this appendix we assess the effect of the
number of vertices $m$ on the estimation accuracy of the BH model. We
compare three configurations: the baseline case $m = 3$ (shared by the
Simulation A and B studies, with $\sigma^2 = 0.01$) and the cases
$m = 5$ and $m = 8$ from Simulation C, keeping all other parameters fixed
($\sigma^2 = 0.01$, $\mu_p = 0$, $n = 500$, $k = 23$, 50
replicates per configuration). For each replicate we compute the vertex
estimation error as a weighted Frobenius norm between the estimated and
true vertices, after optimal realignment via linear assignment
(as in Section~\ref{sec:simA} and ~\ref{sec:simB}). As discussed below, the results reveal a
trade-off between two competing effects of $m$ on estimation accuracy,
rather than a single monotonic relationship.
 
Figure~\ref{fig:vertex_error_boxplot} shows the distribution of the error
for $m = 3, 5, 8$. Two distinct patterns emerge. First, the spread of the
error across replicates decreases monotonically with $m$: the
interquartile range and whisker extent are markedly larger for $m = 3$
than for $m = 5$ and $m = 8$. Second, the median error level does not
follow the same monotonic trend: it is lowest for $m = 5$, higher for
$m = 3$, and higher still for $m = 8$. The relationship between estimation
error and the number of vertices is therefore non-monotonic, reflecting
two competing effects that we discuss below.
 
\begin{figure}[htbp]
  \centering
  \includegraphics[width=0.65\textwidth]{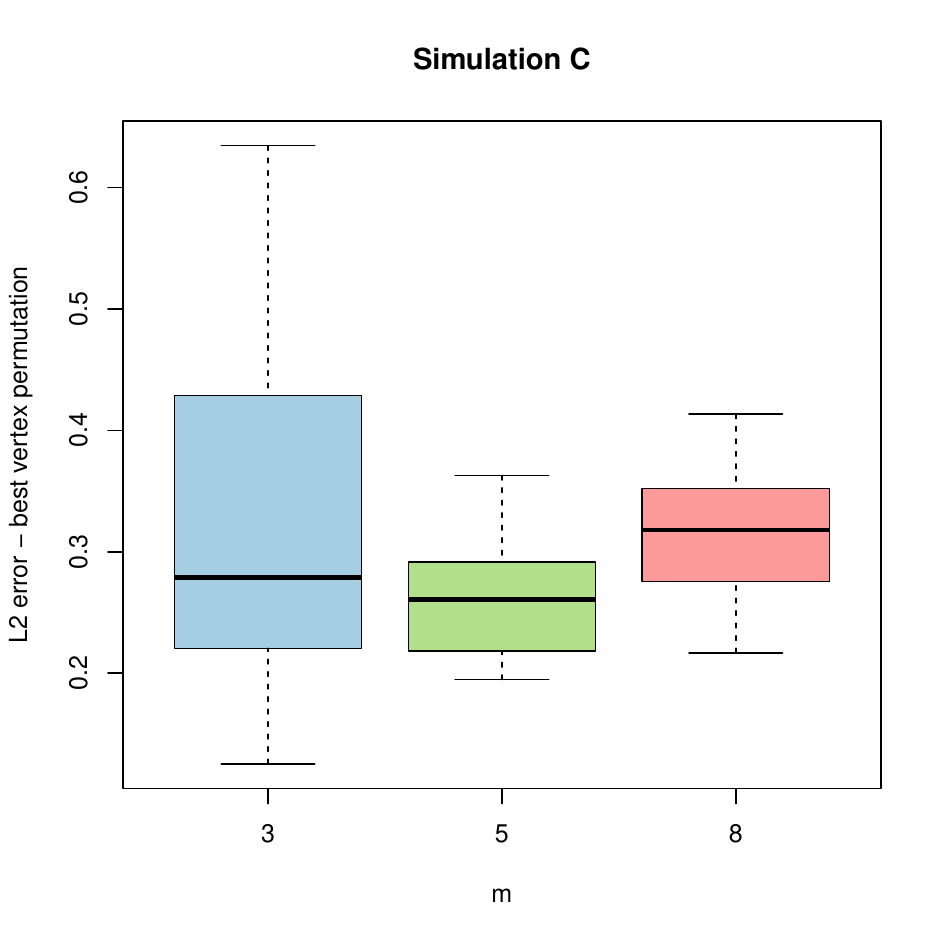}
  \caption{Vertex estimation error (weighted Frobenius norm, outliers not
  shown) for $m = 3, 5, 8$ vertices, BH model, 50 replicates per
  configuration ($\mathrm{sd\_perc} = 0.01$, $\mu_p = 0$, $n = 500$).}
  \label{fig:vertex_error_boxplot}
\end{figure}
 
\paragraph{Effect 1: vertex separation reduces variability} The first
effect is explained by the vertex-generation mechanism in the simulator.
For a given $m$, the $j$-th vertex ($j = 1, \dots, m$) is defined as a
$\mathrm{Beta}(j+1,\, m-j+2)$ density. Figure~\ref{fig:beta_vertices}
shows these densities for $m = 3, 5, 8$. For $m = 3$, the Beta parameters
remain close to the centre ($\mathrm{Beta}(2,4)$, $\mathrm{Beta}(3,3)$,
$\mathrm{Beta}(4,2)$), producing wide, strongly overlapping vertices. As
$m$ increases, the parameters move progressively toward the tails of the
distribution (e.g. $\mathrm{Beta}(2,9)$ and $\mathrm{Beta}(9,2)$ for
$m = 8$), yielding narrower, better-separated vertices. Lower geometric
separation between vertices makes the estimation problem less
identifiable: small perturbations in the simulated data can drive the EM
algorithm toward qualitatively different solutions across replicates. This
effect alone would predict a monotonic decrease in both the level and the
variability of the error as $m$ increases.
 
\begin{figure}[H]
  \centering\includegraphics[width=0.8\textwidth]{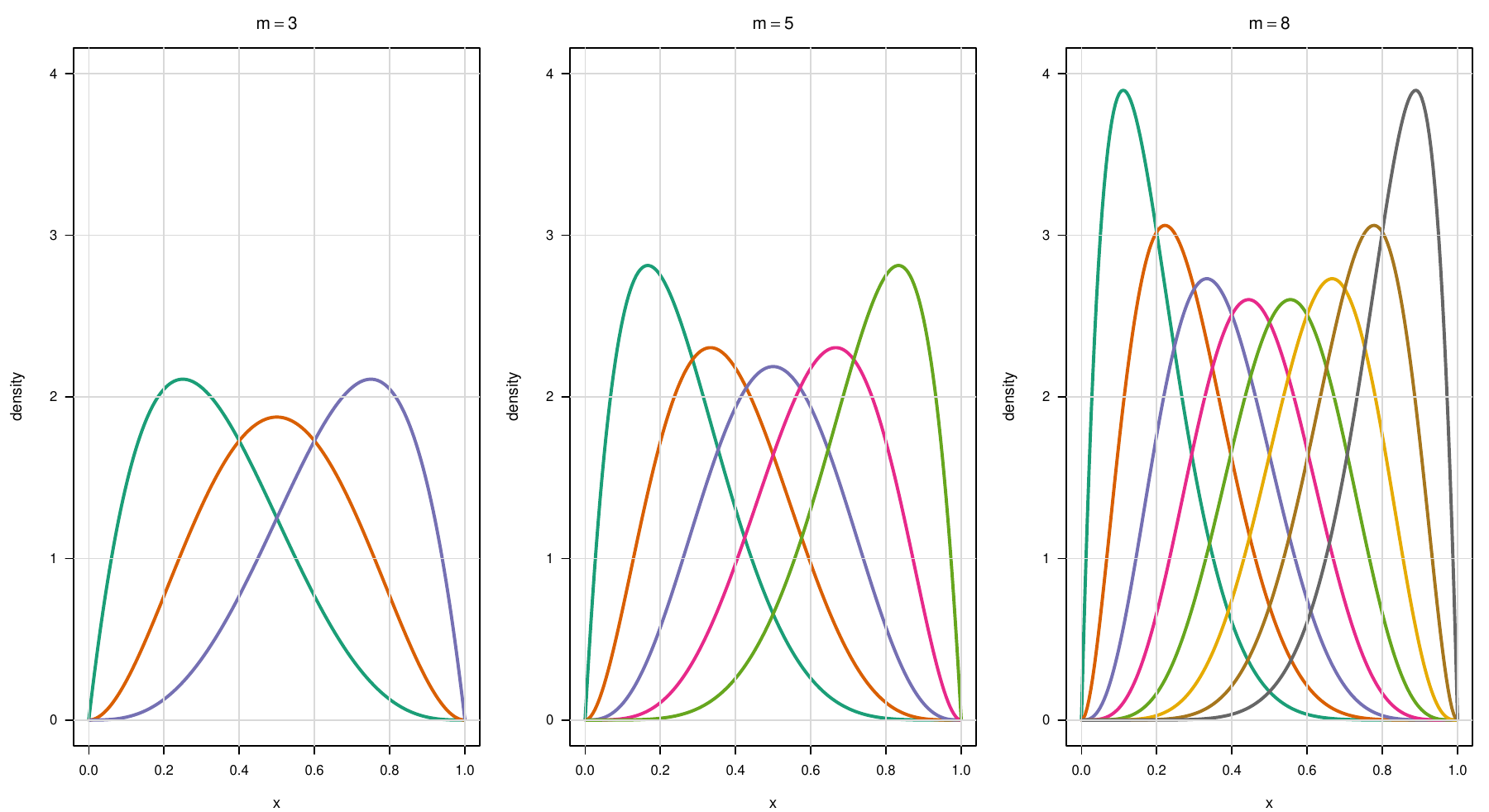}
  \caption{Densities that our vertices approximate, generated by the simulator under the
  $\mathrm{Beta}(j+1,\, m-j+2)$, $j = 1, \dots, m$ parametrisation, for
  $m = 3, 5, 8$. For small $m$ the vertices are wider and overlapping; as
  $m$ increases they become narrower and better separated, reducing the
  ambiguity of the estimation problem and, consequently, the error
  variability shown in Figure~\ref{fig:vertex_error_boxplot}.}
  \label{fig:beta_vertices}
\end{figure}
 
\paragraph{Effect 2: more vertices means more parameters to estimate}
The second effect works in the opposite direction on the median error
level. The vertex matrix $H$ has dimension $k \times m$ with $k = 23$
basis functions, so the number of free parameters to estimate grows
linearly with $m$ (69, 115 and 184 parameters for $m = 3, 5, 8$
respectively), while the sample size is held fixed at $n = 500$ across all
configurations. The ratio of vertex parameters to observations therefore
increases from $0.14$ at $m = 3$ to $0.23$ at $m = 5$ and $0.37$ at
$m = 8$. With a fixed amount of information in the data, estimating more
parameters leaves less information available per parameter, which tends
to inflate the estimation error for each individual vertex. We verified
that this increase is not an artefact of how the error is aggregated
across columns of $H$: computing the error separately for each vertex
(i.e.\ without summing over columns) yields the same pattern, with the
per-vertex error markedly higher at $m = 8$ than at $m = 5$.
 
\paragraph{Net effect} The two mechanisms act on different aspects of the error distribution and, in this experiment, partially offset each other.
Increasing $m$ from 3 to 5 is dominated by the separation effect: vertices become easier to tell apart, and both the level and the variability of the
error decrease. Increasing $m$ further from 5 to 8 is dominated by the parameter-count effect: vertices are even better separated, so the
variability of the error continues to decrease, but the larger number of
parameters relative to the fixed sample size increases the typical
magnitude of the error. The configuration $m = 5$ in this study appears to
sit close to the point at which the two effects balance, combining a
comparatively low error level with low variability across replicates. We
report the results as observed, without correcting for this trade-off, as
it reflects a genuine property of the estimation problem under a fixed
sample size rather than an instability of the proposed method.

\section{Additional figures for the AVIRIS Indian Pines case study}
\label{app:aviris_figures}

This appendix collects auxiliary figures for the AVIRIS Indian Pines case study of Section~\ref{sec:aviris}. Figure~\ref{app:pca_var} reports the diagnostic output of the simplicial FPCA used to select $m=5$, while Figure~\ref{app:vca_casestudy} displays the abundance maps obtained by the VCA competitor, included here for completeness.

\begin{figure}[H]
    \centering
    \begin{minipage}{0.3\textwidth}
        \centering
        \includegraphics[width=\linewidth]{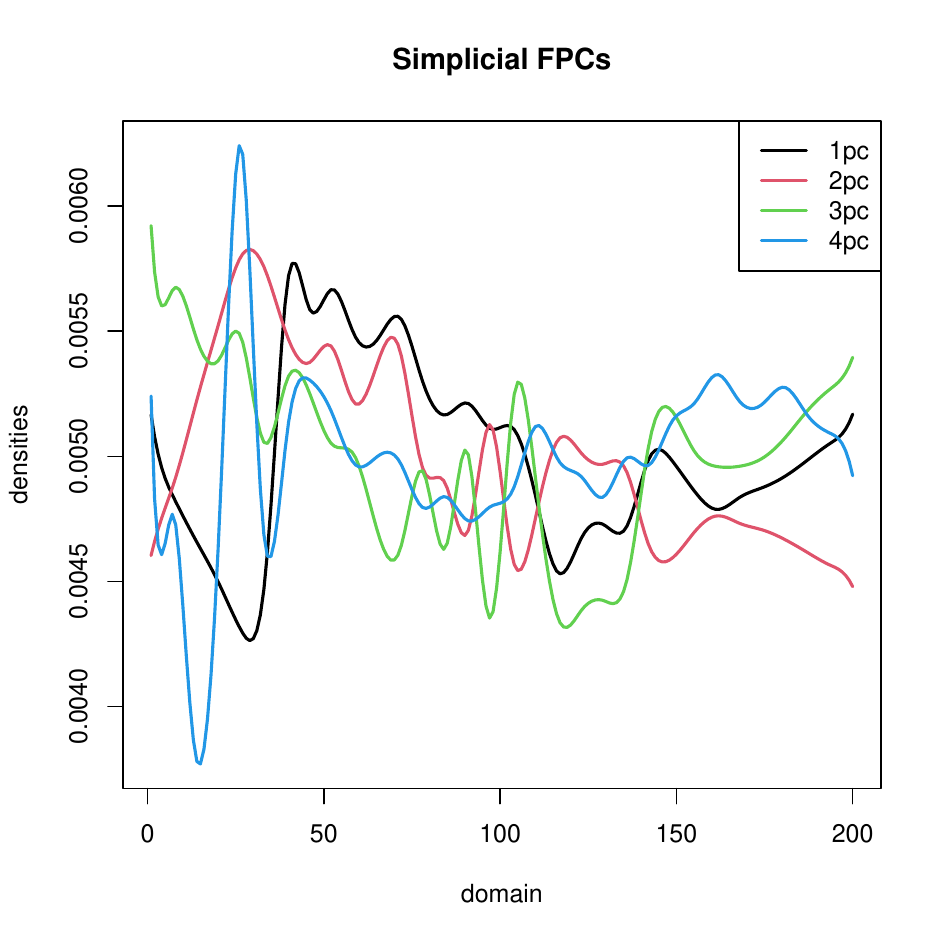}
    \end{minipage}\hfill
    \begin{minipage}{0.3\textwidth}
        \centering
        \includegraphics[width=\linewidth]{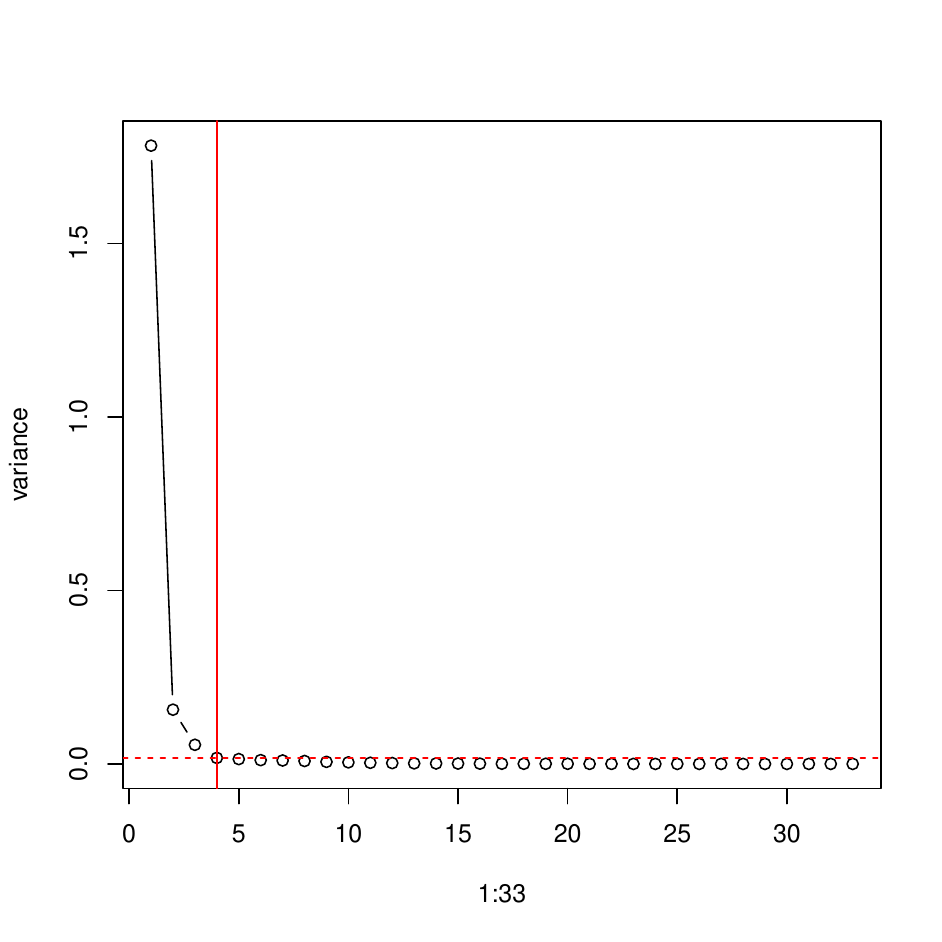}
    \end{minipage}
    \hfill
    \begin{minipage}{0.3\textwidth}
        \centering
        \includegraphics[width=\linewidth]{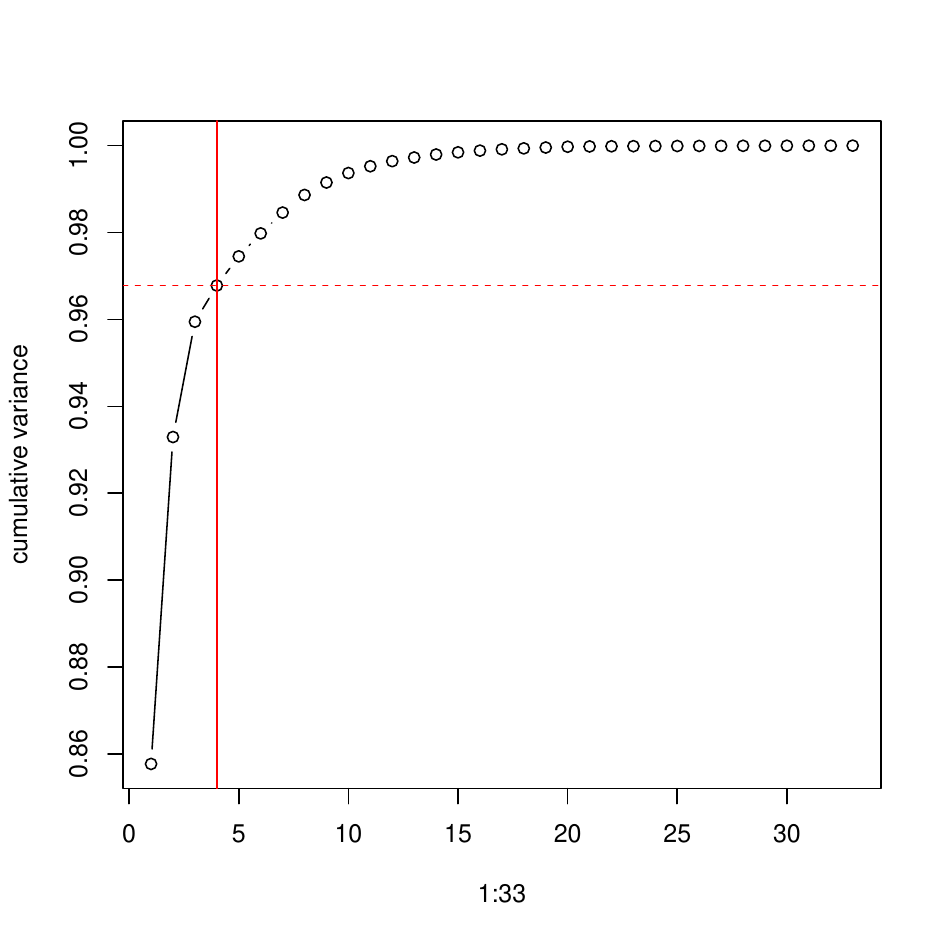}
    \end{minipage}
    \caption{First 4 principal components (left).  Variance explained by the first 4 components (center). Cumulative variance explained by the first 4 components (right).}
    \label{app:pca_var}
\end{figure}

\begin{figure}[H]
    \centering
    \includegraphics[width=0.7\linewidth]{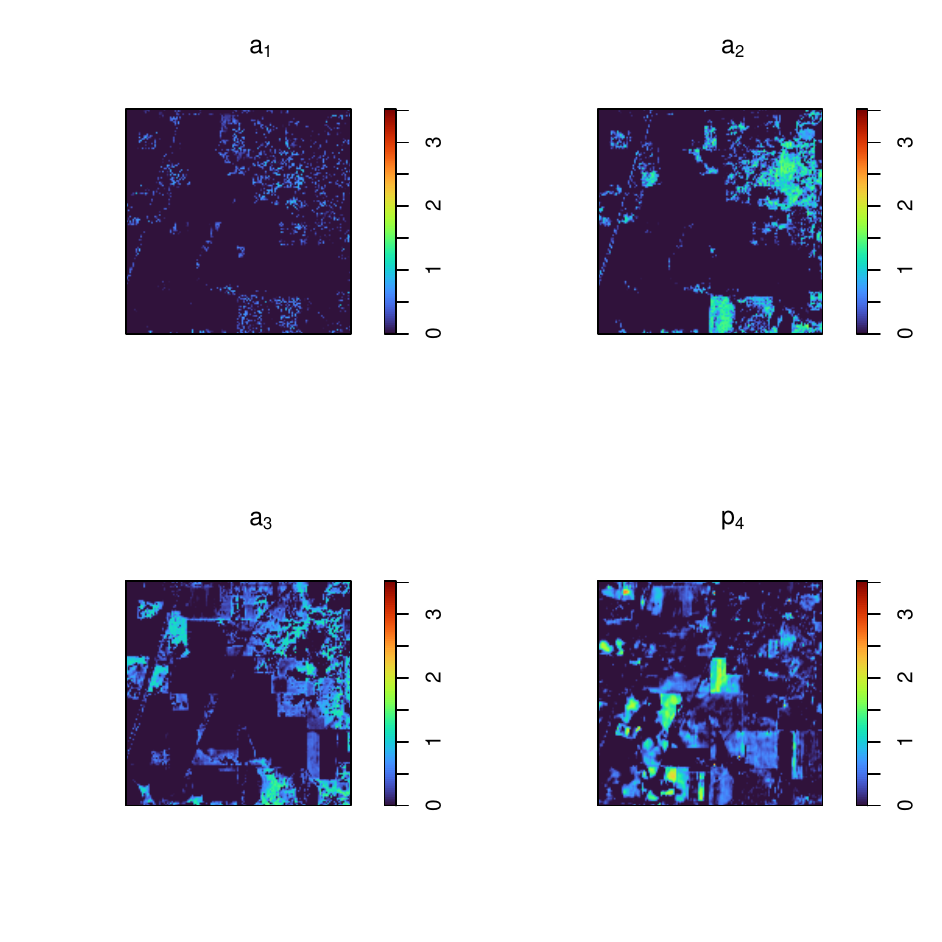}
    \caption{Vertex Component Analysis. In this map, we report the obtained abundances maps, where each abundance $a_j$ is a positive number expressing the linear amplitude of the effect of the endmember $\nu_j$.}
    \label{app:vca_casestudy}
\end{figure}

\begin{figure}[H]
    \centering
    \begin{minipage}{0.48\textwidth}
            \centering
    \includegraphics[width=0.7\linewidth]{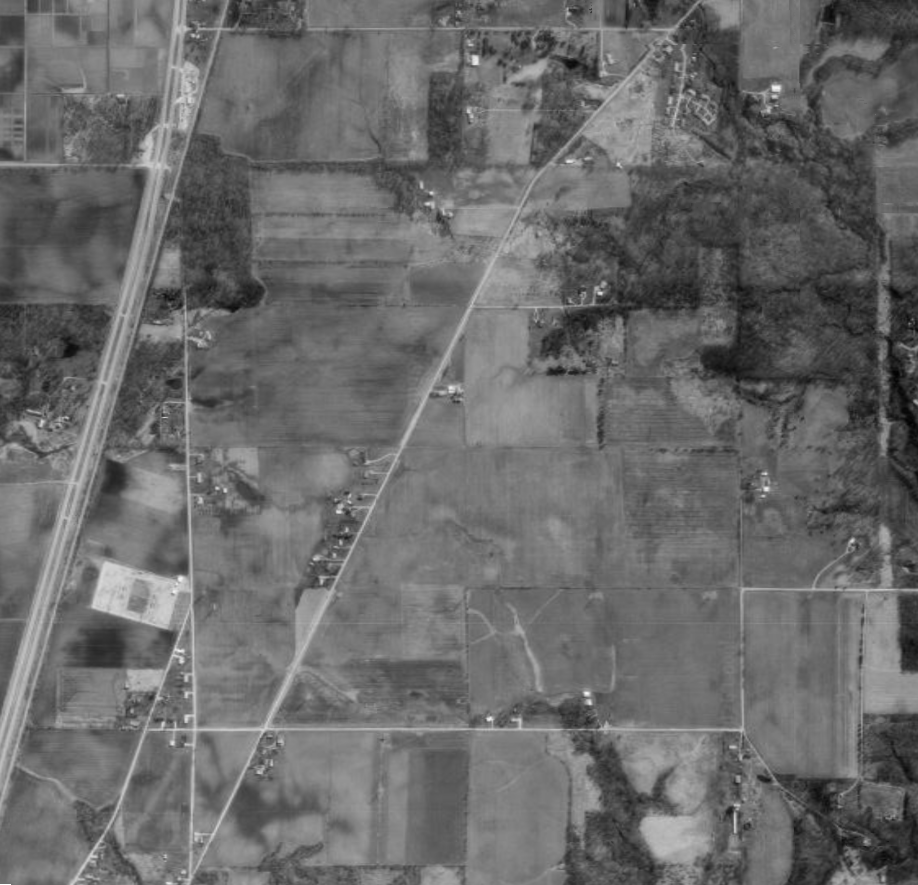}
    \end{minipage}\hfill
    \begin{minipage}{0.48\textwidth}
    \centering
    \includegraphics[width=0.7\linewidth]{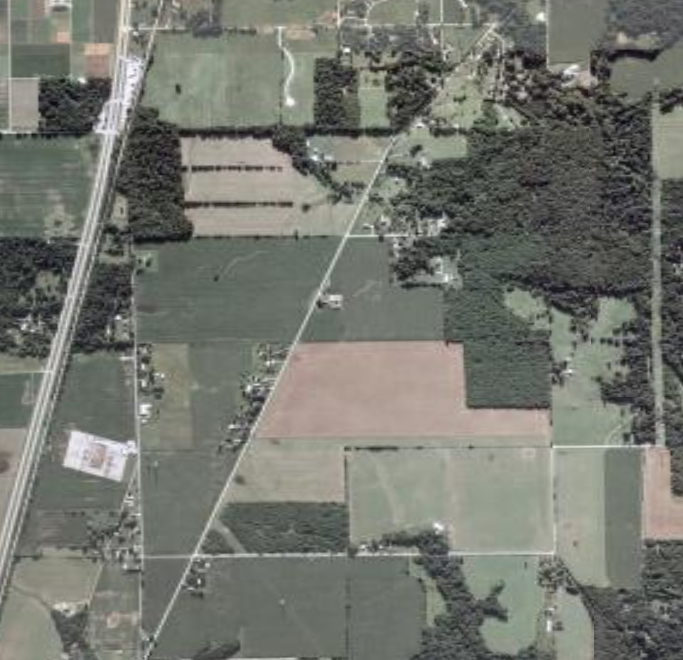}
    \end{minipage}
        \caption{Indian Pines site in 1993 (left), one year after AVIRIS recorded the hyperspectral data. In 1992, we have an almost identical image, with some parts missing. Indian Pines site in 2003 (right) -- the first colored one --, 11 years after AVIRIS recorded the hyperspectral data. This image works as a reference to understand the image in 1993, which is black and white. Source: Google Earth}
        \label{1993_2003}
\end{figure}

\begin{figure}[H]
      \centering    \includegraphics[width=0.6\linewidth]{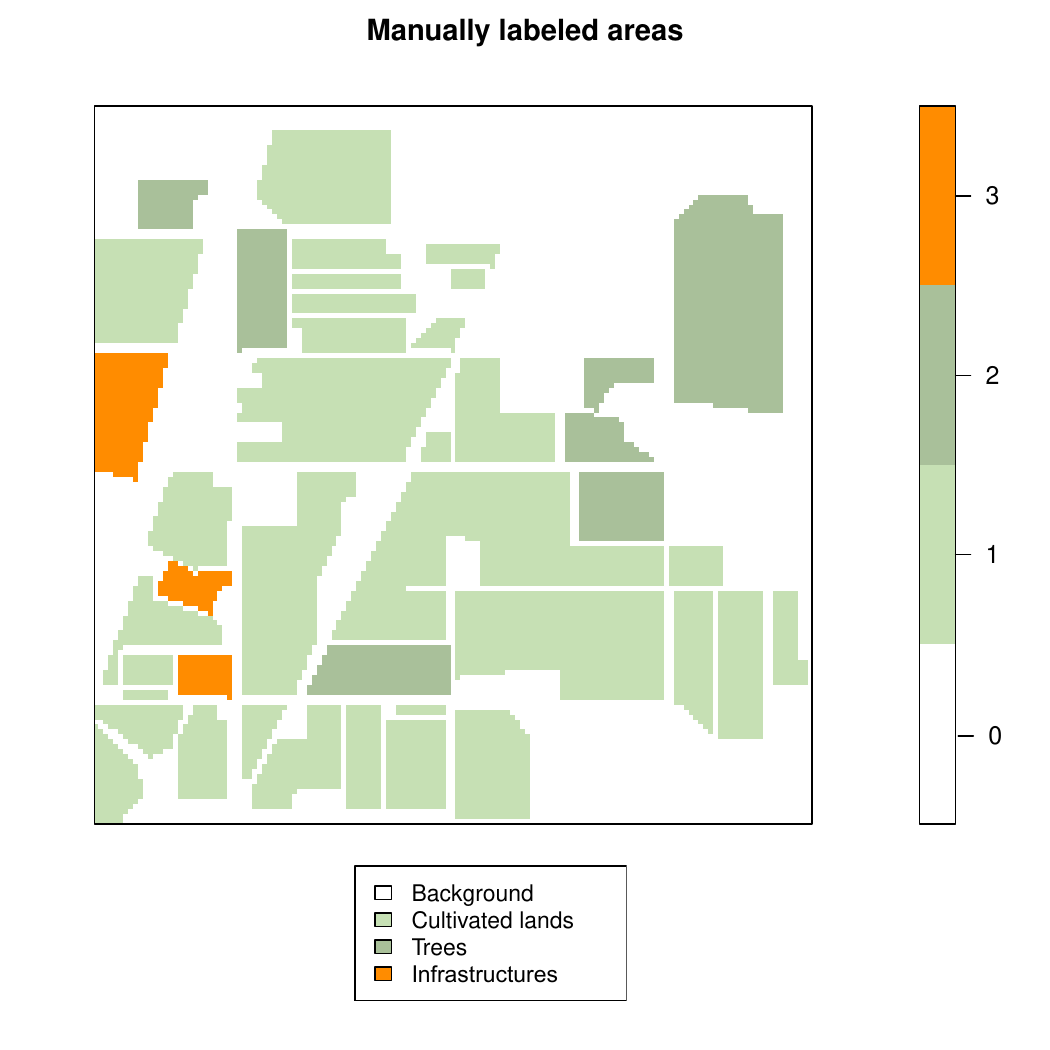}
        \caption{Experts' landcover map with 4 summarising categories. For the detailed subcategories see Figure \ref{manual} in \ref{app:aviris_figures}.}
      \label{manual_summarised}
\end{figure}

\begin{figure}[H]
\small
        \centering
    \begin{minipage}{0.48\textwidth}
      \centering
    \includegraphics[width=1\linewidth]{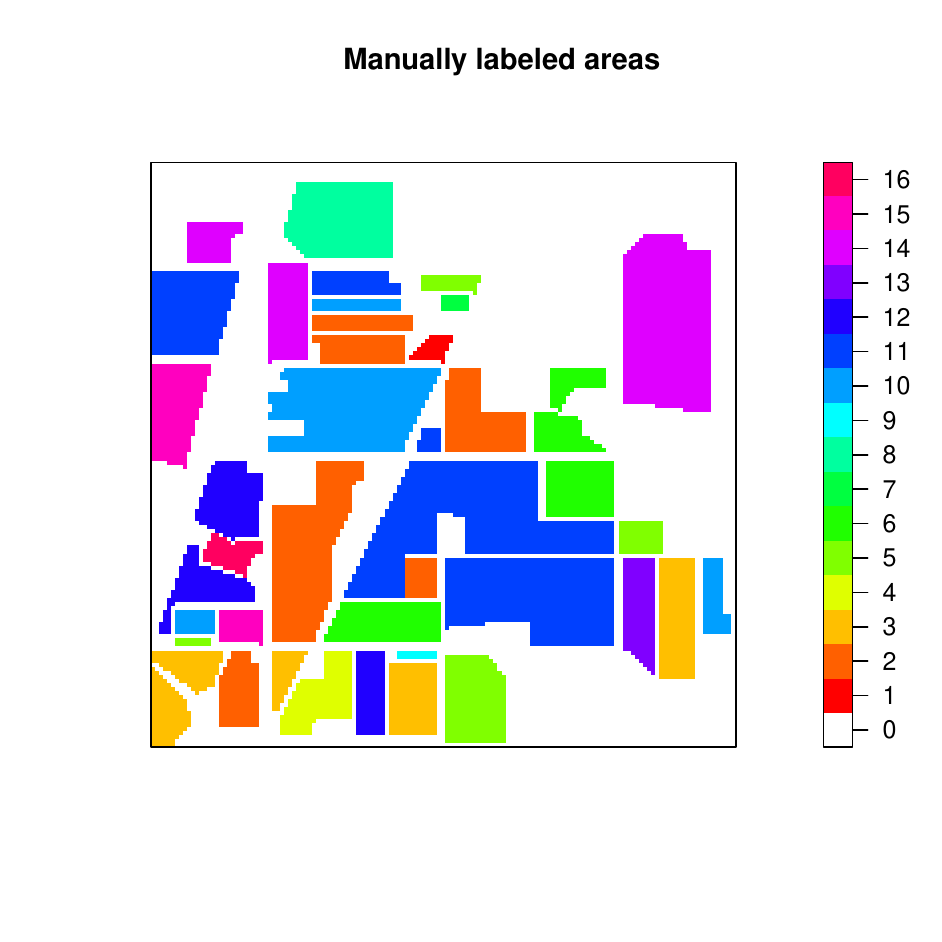}

    \end{minipage}\hfill
\begin{minipage}{0.48\textwidth}
    \centering
    \begin{tabular}{c|c}
        0 & Background \\
        \hline
         1 & Alfalfa \\
        \hline
         2 & Corn-notill \\
        \hline
         3 & Corn-mitill \\
        \hline
         4 & Corn \\
        \hline
         5 & Grass-pasture \\
        \hline
        6 & Grass-trees \\
        \hline
         7 & Grass-pasture-mowed \\
        \hline
         8 & Hay-windrowed \\
        \hline
         9 & Oats \\
        \hline
         10 & Soybean-notill \\
        \hline
         11 & Soybean-mitill \\
        \hline
         12 & Soybean-clean \\
        \hline
         13 & Wheat \\
        \hline
        14 & Woods \\
        \hline
        15 & Buildings-Grass-Trees-Drives \\
        \hline
       16 & Stone-Steel-Towers \\
    \end{tabular}
\end{minipage}
        \caption{Experts' landcover map (left), with legend (right), with the detailed subcategories}
       \label{manual}
\end{figure}
\end{document}